\title{New Partitioning Techniques and Faster Algorithms for Approximate Interval Scheduling} %
\titlerunning{Faster Approximate Interval Scheduling} %
 \author{Spencer Compton}{Stanford University}{comptons@stanford.edu}{}{}
 \author{Slobodan Mitrovi\'c}{UC Davis}{smitrovic@ucdavis.edu}{}{}
 \author{Ronitt Rubinfeld}{MIT, CSAIL}{ronitt@csail.mit.edu}{}{}%
 \authorrunning{S. Compton, S. Mitrovi\'c, R. Rubinfeld} %
\keywords{interval scheduling,
dynamic algorithms,
local computation algorithms} %
\definecolor{darkgreen}{rgb}{0,0.5,0}
\global\mdfdefinestyle{myframe}{leftmargin=.75in,rightmargin=.75in,linecolor=black,linewidth=1.5pt,innertopmargin=10pt,innerbottommargin=10pt} 
\newtheorem{invariant}{Invariant}
\crefname{theorem}{Theorem}{Theorems}
\Crefname{lemma}{Lemma}{Lemmas}
\Crefname{alg}{Algorithm}{Algorithms}
\Crefname{claim}{Claim}{Claims}
\Crefname{observation}{Observation}{Observations}
\Crefname{invariant}{Invariant}{Invariants}
\Crefname{algorithm}{Algorithm}{Algorithms}
\newcommand{\rb}[1]{\left( #1 \right)}
\newcommand{\cJ}{\mathcal{J}}
\newcommand{\cE}{\mathcal{E}}
\newcommand{\cR}{\mathcal{R}}
\newcommand{\eps}{\varepsilon}
\newcommand{\fcon}{f_{contain}}
\newcommand{\fint}{f_{intersect}}
\newcommand{\ProbeBased}{\textsc{Probe-based-Opt}\xspace}
\newcommand{\MaxIS}{\textsc{Maximum-IS}\xspace}
\newcommand{\CUR}{\textsc{CUR}\xspace}
\newcommand{\earliest}{\textsc{earliest}\xspace}
\newcommand{\key}[1]{\textsc{key}(#1)\xspace}
\begin{document}

\maketitle

\begin{abstract}
Interval scheduling is a basic problem in the theory of algorithms and a classical task in combinatorial optimization.
We develop a set of techniques for partitioning and grouping jobs based on their starting and ending times, that enable us to view an instance of interval scheduling on \emph{many} 
jobs as a union of 
multiple interval scheduling instances, each containing only \emph{a few} jobs. Instantiating these techniques in dynamic and local settings of computation leads to several new results.

For $(1+\eps)$-approximation of job scheduling of $n$ jobs on a single machine, we develop a fully dynamic algorithm with $O(\nicefrac{\log{n}}{\eps})$ update and $O(\log{n})$ query worst-case time. Further, we design a local computation algorithm that uses only $O(\nicefrac{\log{N}}{\eps})$ queries when all jobs are length at least $1$ and have starting/ending times within $[0,N]$.
Our techniques are also applicable in a setting where jobs have rewards/weights. For this case we design a fully dynamic \emph{deterministic} algorithm whose worst-case update and query time are $\poly(\log n,\frac{1}{\eps})$. 
Equivalently, this is \emph{the first} algorithm that maintains a $(1+\eps)$-approximation of the maximum independent set of a collection of weighted intervals in $\poly(\log n,\frac{1}{\eps})$ time updates/queries.
This is an exponential improvement in $1/\eps$ over the running time of a randomized algorithm of Henzinger, Neumann, and Wiese ~[SoCG, 2020], while also removing all dependence on the values of the jobs' starting/ending times and rewards, as well as removing the need for any randomness.

We also extend our approaches for interval scheduling on a single machine to examine the setting with $M$ machines.
\end{abstract}

\section{Introduction}

Job scheduling is a fundamental task in optimization, with applications ranging from resource management in computing~\cite{salot2013survey,sharma2010survey} to operating transportation systems~\cite{kolen2007interval}.
Given a collection of \emph{machines} and a set of \emph{jobs} (or tasks) to be processed, the goal of job scheduling is to assign those jobs to the machines while respecting certain constraints.
Constraints set on jobs may significantly vary. In some cases a job has to be scheduled, but the starting time of its processing is not pre-specified. In other scenarios a job can only be scheduled at a given time, but there is a flexibility on whether to process the job or not.
Frequent objectives for this task can include either maximizing the number of scheduled jobs or minimizing needed time to process all the given jobs.

An important variant of job scheduling is the task of \emph{interval scheduling}: here each job has a specified starting time and its length, but a job is not required to be scheduled. Given $M$ machines, the goal is to schedule as many jobs as possible. More generally, each job is also assigned a \emph{reward} or weight, which can be thought of as a payment received for processing the given job. If a job is not processed, the payment is zero, i.e., there is no penalty. We refer to this variant as \emph{weighted interval scheduling}.
This problem in a natural way captures real-life scenarios. For instance, consider an assignment of crew members to flights, where our goal is to assign (the minimum possible) crews to the specified flights. In the context of interval scheduling, flights can be seen as jobs and the crew members as machines~\cite{kolen2007interval,mingozzi1999set}.
Interval scheduling also has applications in geometrical tasks -- it can be see as a task of finding a collection of non-overlapping geometric objects. In this context, its prominent applications are in VLSI design~\cite{hochbaum1985approximation} and map labeling~\cite{agarwal1998label,verweij1999optimisation}.

The aforementioned scenarios are executed in different computational settings. For instance, some use-cases are dynamic in nature, e.g., a flight gets cancelled. Then, in certain cases we have to make online decisions, e.g., a customer must know immediately whether we are able to accept its request or not. While in some applications there might be so many requests that we would like to design extremely fast ways of deciding whether a given request/job can be scheduled or not, e.g., providing an immediate response to a user submitting a job for execution in a cloud.
In this work, our aim is to develop methods for interval scheduling that can be turned into efficient algorithms across many computational settings:
\begin{center}
    \emph{Can we design unified techniques for approximating interval scheduling very fast?}
\end{center}
In this paper we develop fast algorithms for the dynamic and local settings of computation. We also give a randomized black-box approach that reduces the task of interval scheduling on multiple machines to that of interval scheduling on a single machine by paying only $2 - 1/M$ in the approximation factor for unweighted jobs, where $M$ is the number of machines, and $e$ in approximation factor for weighted jobs.
A common theme in our algorithms is partitioning jobs over dimensions (time and machines). It is well studied in the dynamic setting how to partition the time dimension to enable fast updates. It is also studied how to partition over the machines to enable strong approximation ratios for multiple-machine scheduling problems. We design new partitioning methods for the time dimension (starting and ending times of jobs), introduce a partitioning method over machines, and examine the relationship of partitioning over the time dimension and machines simultaneously in order to solve scheduling problems. We hope that, in addition to improving the best-known results, our work provides a new level of simplicity and cohesiveness for this style of approach.

\subsection{Computation Models}
In our work, we focus on the following two models of computation.
\subparagraph{Dynamic setting.}
Our algorithms for the fully dynamic setting design data structures that maintain an approximately optimal solution to an instance of the interval scheduling problem while supporting insertions and deletions of jobs/intervals. The data structures also support queries of the maintained solution's total weight and whether or not a particular interval is used in the maintained solution.

\subparagraph{Local computation algorithms (LCA).}
The LCA model was introduced by Rubinfeld et al.~\cite{rubinfeld2011fast} and Alon et al.~\cite{alon2012space}.
In this setting, for a given job $J$ we would like to output whether $J$ is scheduled or not, but we do not have a direct access to the entire list of input jobs. Rather, the LCA is given  access to an oracle that returns answers to questions
of the  form: ``\emph{What is the input job with the earliest ending time among those jobs that start after time $x$?}''
The goal of the LCA in this setting is to provide (yes/no) answers to user queries that ask 
``Is job $i$ scheduled?" (and, if applicable, ``On which machine?''), in such a manner
that all answers should be consistent with the same valid solution, while using as few oracle-probes as possible.

\subsection{Our Results}

Our first result, given in \cref{section:dynamic-unit}, focuses on designing an efficient dynamic algorithm for unweighted interval scheduling on a single machine. Prior to our work, the state-of-the-art result for this problem was due to \cite{bhore2020dynamic}, who design an algorithm with $O(\nicefrac{\log{n}}{\eps^2})$ update and query time. We provide an improvement in the dependence on $\eps$.
\begin{restatable}[Unweighted dynamic, single machine]{theorem}{theoremunweightedM}
\label{theorem:unweighted-M=1}
Let $\cJ$ be a set of $n$ jobs.
For any $\eps > 0$, there exists a fully dynamic algorithm for $(1+\varepsilon)$-approximate unweighted interval scheduling for $\cJ$ on a single machine performing updates in $O\rb{\frac{\log(n)}{\varepsilon}}$ and queries in $O(\log(n))$ worst-case time. 
\end{restatable}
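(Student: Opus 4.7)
The plan is to maintain a weight-balanced binary search tree $T$ whose leaves are the current jobs of $\cJ$, sorted by their right endpoints. At each internal node $v$ I store an $O(\nicefrac{1}{\eps})$-sized ``summary'' of an EDF-style solution restricted to the job set $J_v$ in $v$'s subtree. The summary consists of (i) the total count of selected jobs, and (ii) an explicit list of $O(\nicefrac{1}{\eps})$ representative jobs sampled from the stored solution along with their start and end times; in particular the list always contains an $O(\nicefrac{1}{\eps})$-length suffix (the latest-ending chosen jobs) and an $O(\nicefrac{1}{\eps})$-length prefix (the earliest-ending chosen jobs).

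Given the summaries at the two children $u,w$ of $v$, the parent summary is produced by a merge that simulates greedy earliest-deadline-first across the boundary: take all of $u$'s solution, locate inside $w$'s stored prefix the first job whose start exceeds the last end of $u$'s solution, and append the remainder of $w$'s solution. Because $u$'s suffix and $w$'s prefix each have length $O(\nicefrac{1}{\eps})$, a linear scan performs the merge in $O(\nicefrac{1}{\eps})$ time; the new summary's representatives are then re-sampled geometrically along the combined solution within the same time budget. Running this at each node on a root-to-leaf insertion/deletion path gives $O(\nicefrac{\log n}{\eps})$ worst-case update time, relying on standard weight-balanced tree guarantees (Nievergelt--Reingold) that yield $O(\log n)$ worst-case structural changes per update. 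Query time is $O(\log n)$: total-size queries read the root summary in $O(1)$, while per-job membership queries descend the tree and re-apply the merge logic locally at each step to decide whether the job survives into the maintained solution.

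The main obstacle will be proving the $(1+\eps)$-approximation guarantee without paying an extra $\log n$ factor. A naive per-level argument loses a $(1+\eps)$ factor per merge and would compound to $(1+\eps)^{\log n}$, forcing $\eps' = \nicefrac{\eps}{\log n}$ per level and blowing the update time up to $O(\nicefrac{\log^2 n}{\eps})$. To avoid this, I would place the stored representatives not uniformly within each subtree but at positions indexed by a global geometric progression on the current solution size, so that the only information lost at a merge lies strictly between the rightmost stored end of $u$ and the first post-boundary representative of $w$. A charging argument should then bound the total loss across all merges additively by $\eps \cdot \mathrm{OPT}(\cJ)$ rather than multiplicatively per level: each ``lost'' job at a merge is charged to a distinct representative at the appropriate geometric scale, and the total representative budget is $O(\nicefrac{1}{\eps})$ per scale with at most $O(\log \mathrm{OPT})$ scales, absorbed into $\eps\cdot\mathrm{OPT}$. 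Establishing this charging rigorously, together with verifying that geometric re-sampling after each merge preserves EDF feasibility of the represented solution, is where I expect most of the work to go.
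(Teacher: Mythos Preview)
Your hierarchical-summary approach is genuinely different from the paper's, and the central difficulty you flag—avoiding a compounding loss over $\Theta(\log n)$ merge levels—is real and is precisely the part you leave open. Moreover, your merge rule already incurs an additive loss that sampling does not address: even when both children store their \emph{exact} EDF solutions, if $w$'s first greedy job $j_1$ has start $\le e$ (the last end coming from $u$) you fall back to the suffix from $j_2$, whereas the true continuation greedy on $w$ from $e$ can still achieve $|G(w)|$ by picking an interval with start in $(e,\mathrm{end}(j_1)]$ that $w$'s own greedy never selected (e.g.\ $w=\{[0,10],[6,11],[11.5,12]\}$ with $e=5$: your suffix is $\{[11.5,12]\}$, the true continuation is $\{[6,11],[11.5,12]\}$). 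This loss is intrinsic to the merge rule and occurs with full information, so the geometric placement of stored representatives does not touch it. Whether the aggregate loss at the root can nevertheless be bounded by $\eps\cdot|OPT|$ via your charging scheme is the whole question, and the proposal only asserts that ``most of the work'' lies there.

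The paper sidesteps all of this with a much simpler, non-hierarchical structure. It keeps a \emph{flat} set of borders on the time axis, maintaining the invariant that the exact greedy solution inside each region has size in $[\nicefrac{1}{\eps},\nicefrac{2}{\eps}]$, and stores the union of those per-region greedy solutions in a BST. The approximation is a one-line charging: each border intersects at most one OPT interval, and each region contributes at least $\nicefrac{1}{\eps}$ intervals to the maintained solution, so $|OPT|\le(1+\eps)|ALG|$. An update recomputes the greedy in the single affected region via $O(\nicefrac{1}{\eps})$ successor queries into a balanced BST of all intervals ($O(\log n)$ each); if the region's solution size leaves the allowed range, the region is split or merged with a neighbor, touching only $O(1)$ regions. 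Membership queries are a single BST lookup into the stored solution. There is no recursion, no per-level merge, and hence no error to compound.
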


\cref{theorem:unweighted-M=1} can be seen as a warm-up for our most challenging and technically involved result, which is an algorithm for the dynamic  \emph{weighted} interval scheduling problem on a single machine. We present our approach in detail in \cref{section:dynamic-weighted}. As a function of $1/\eps$, our result constitutes an exponential improvement compared to the running times obtained in \cite{henzinger2020dynamic}. We also remove all use of randomness, remove all dependence on the job starting/ending times (previous work crucially used assumptions on the coordinates to bound the ratio of jobs' lengths by a parameter $N$), and remove all dependence on the value of the job rewards.
\begin{restatable}[Weighted dynamic, single machine]{theorem}{theoremweighteddynamic}
\label{theorem:weighted-dynamic-M=1}
Let $\cJ$ be a set of $n$ weighted jobs.
For any $\eps > 0$, there exists a fully dynamic algorithm for $(1+\varepsilon)$-approximate weighted interval scheduling for $\cJ$ on a single machine performing updates and queries in worst-case time $T \in \poly(\log n,\frac{1}{\eps})$. The exact complexity of $T$ is given by
\[
O\rb{\frac{\log^{12}(n)}{\eps^{7}} + \frac{\log^{13}(n)}{\eps^{6}}}.
\]
\end{restatable}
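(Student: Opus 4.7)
The plan is to reduce weighted interval scheduling to many nearly-unweighted sub-instances through a combination of weight rounding and a hierarchical partition of the time axis. First, I would bucket jobs into geometric weight classes (e.g., by rounding each weight down to the nearest power of $1+\eps$), which costs a $(1+\eps)$ factor in approximation and produces $O(\log(w)/\eps)$ distinct levels; this is the source of the $\log(w)$ dependence visible in the final bound. Second, I would build a balanced segment tree over $[0,N]$ whose nodes correspond to time windows, with each node storing a succinct summary of a $(1+\eps)$-approximate solution restricted to the jobs contained in its window. This segment-tree skeleton mirrors the one used to prove \cref{theorem:unweighted-M=1}, but the summaries carry much more information to accommodate weights.

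The key step, building on the partitioning ideas behind \cref{theorem:unweighted-M=1} and \cref{theorem:unweighted-M=1-local}, is to argue that inside any such window only $\operatorname{poly}(1/\eps, \log N)$ pieces of information are needed to represent a near-optimal solution. Concretely, if $W^\star$ denotes the window's optimum, then weight levels contributing less than $\eps W^\star / O(\log N)$ can be discarded while only paying $(1+\eps)$ over the $O(\log N)$ merging levels of the tree, and within each retained level one should be able to show that a $(1+\eps)$-approximate solution needs only $\operatorname{poly}(1/\eps)$ ``boundary-adjacent'' jobs to be representable. This would bound the summary size at each segment-tree node to $\operatorname{poly}(1/\eps, \log N)$, and the summary at a node can be recomputed from its two children by a small dynamic program that stitches boundary configurations across the midpoint. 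An insertion or deletion then touches only the $O(\log N)$ ancestors of a single leaf, each requiring $\operatorname{poly}(\log n, \log N, 1/\eps)$ time, which matches the order of magnitude of the stated complexity.

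The main obstacle, and the source of the delicate exponents in the bound, will be controlling how approximation errors propagate as summaries are merged up the tree without letting the summary size blow up. The algorithm of Henzinger et al.~\cite{henzinger2020dynamic} effectively tracks state that scales super-polynomially in $1/\eps$; avoiding this seems to require a careful quantization argument showing that only $\operatorname{poly}(1/\eps)$ distinct right-boundary profiles per weight level suffice, together with a per-level error tolerance of $1 + O(\eps/\log N)$ that aggregates to a global $(1+\eps)$ guarantee. Verifying that this composition mechanism integrates cleanly with the dynamic segment-tree machinery developed for \cref{theorem:unweighted-M=1}, and that the resulting bounds hold in the worst case rather than amortized, is the main technical work; I expect the unusual exponents in the running time (such as $\eps^{-11}$ and $\log^7 N$) to come from the interaction between the $\operatorname{poly}(1/\eps)$-sized boundary profiles, the $O(\log N)$ recursion depth, and the finer sub-block partitioning needed inside each node to make the merge efficient.
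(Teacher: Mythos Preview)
Your high-level architecture --- a hierarchical tree over $[0,N]$, bottom-up merging of $\operatorname{poly}(1/\eps,\log N)$-sized per-node summaries, and a per-level error budget of $1+O(\eps/\log N)$ --- matches the paper's, but the concrete mechanism you sketch for those summaries is both different from and less developed than what the paper actually does, and the gap is exactly where you flag the ``main obstacle.''

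The paper does \emph{not} track ``boundary profiles'' to be stitched across midpoints. Instead, inside each cell $Q$ it maintains the constant-factor approximation scaffold $P(Q)$ of \cite{henzinger2020dynamic} and takes its weighted quantiles to form a grid $Z(Q)$ of only $O(\log^2 N/\eps^4)$ points, chosen so that the optimum inside any single grid slice is tiny. The per-cell state is then just $DP(Q,S)$ for every segment $S$ with endpoints in $Z(Q)\cup Z(Q_L)\cup Z(Q_R)$, and the merge is a cubic DP over these $\operatorname{poly}(1/\eps,\log N)$ endpoints. The heart of the proof is the structural \cref{lemma:sol-structure}: some $(1-O(\eps))\cdot w(OPT)$ solution can be cut at grid points so that every resulting ``sparse'' piece uses at most $1/\eps$ jobs. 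This is established by a delicate charging argument (classifying jobs as heavy/highlighted/useless, expanding sparse regions outward from heavy jobs, and maintaining a separate ``snapping budget'' to align a parent's dense ranges to its children's grids) for which your proposal has no analogue. Your ``only $\operatorname{poly}(1/\eps)$ boundary-adjacent jobs per weight level suffice'' is precisely the hard step, and the paper's route to it looks nothing like quantizing right-boundary profiles.

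A second missing ingredient is how the sparse pieces are solved. You bucket weights into $O(\log(w)/\eps)$ classes up front, but the paper instead uses an \emph{inverted} approximate DP (\cref{lemma:sparse-sol-approx}): the state is a target reward rounded to a power of $(1+\eps/K)$, and the value is the shortest time prefix achieving it. An $O(\log\log w)$ binary search first localizes the relevant weight window to $[\eps\, w_{\max}/K,\, w_{\max}]$, so only $\operatorname{poly}(1/\eps)$ reward states remain. This is why the dominant terms in the final bound carry $\log\log w$ rather than $\log w$; the sole $\log w$ term comes from maintaining $P(Q)$, not from any weight bucketing. Your bucketing approach, taken literally, would inject an extra $\log(w)/\eps$ factor into the main recursion and would not recover the stated bound.
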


\subsubsection{Implications in Other Settings}

\subparagraph{Local Computation Algorithms.}
We show that the ideas we developed to obtain \cref{theorem:unweighted-M=1} can also be efficiently implemented in the local setting, as we explain in detail in \cref{section:local} and prove the following claim. This is the first non-trivial local computation algorithm for the interval scheduling problem. 
\begin{restatable}[Unweighted LCA, single machine]{theorem}{theoremunweightedMlocal}
\label{theorem:unweighted-M=1-local}
Let $\cJ$ be a set of $n$ jobs with length at least $1$ and ending times upper-bounded by $N$.
For any $\eps > 0$, there exists a local computation algorithm for $(1+\varepsilon)$-approximate  unweighted interval scheduling for $\cJ$ on a single machine using $O\rb{\frac{\log{N}}{\varepsilon}}$ probes.
\end{restatable}

\subparagraph{Multiple machines.}
By building on techniques we introduced to prove \cref{theorem:unweighted-M=1,theorem:unweighted-M=1-local}, we show similar results in \cref{section:scheduling} in the case of interval scheduling on multiple machines at the expense of slower updates. To the best of our knowledge, these results initiate a study of dynamic and local interval scheduling in the general setting, i.e., in the setting of maximizing the total reward of jobs scheduled on multiple machines.

\subsection{Related Work}
The closest prior work to ours is that of Henzinger et al.~\cite{henzinger2020dynamic} and of Bhore et al.~\cite{bhore2020dynamic}.  \cite{henzinger2020dynamic} studies $(1+\eps)$-approximate dynamic interval scheduling for one machine in both the weighted and unweighted setting. Unlike our main result in \cref{theorem:weighted-dynamic-M=1}, they assume jobs have rewards within $[1,W]$, assume jobs have length at least 1, and assume all jobs start/end within times $[0,N]$. They obtain randomized algorithms with $O(\exp(1/\eps) \log^2{n} \cdot \log^2{N})$ update time for the unweighted and $O(\exp(1/\eps) \log^2{n} \cdot \log^5{N} \cdot \log{W})$ update time for the weighted case. They cast interval scheduling as the problem of finding a maximum independent set among a set of intervals lying on the $x$-axis. The authors extend this setting to multiple dimensions and design algorithms for approximating maximum independent set among a set of $d$-dimensional hypercubes, achieving a $(1+\eps) 2^d$-approximation in the unweighted and a $(4 + \eps)2^d$-approximation in the weighted regime.

The authors of \cite{bhore2020dynamic} primarily focus on the unweighted case of approximating maximum independent set of a set of cubes. For the $1$-dimensional case, which equals interval scheduling on one machine, they obtain $O(\nicefrac{\log{n}}{\eps^2})$ update time, which is slower by a factor of $1/\eps$ than our approach. They also show that their approach generalizes to the $d$-dimensional case, requiring $\poly \log{n}$ amortized update time and providing $O(4^d)$ approximation.

\cite{gawrychowski2022sublinear} approach the problem of dynamically maintaining an exact solution to interval scheduling on one or multiple machines. They attain a guarantee of $\tilde{O}(n^{1/3})$ update time for unweighted interval scheduling on $M=1$ machine, and $\tilde{O}(n^{1-1/M})$ for $M \ge 2$. Moreover, they show an almost-linear time conditional hardness lower bound for dynamically maintaining an exact solution to the weighted interval scheduling problem on even just $M=1$ machine. This further motivates work such as ours that dynamically maintains approximate solutions for weighted interval scheduling.

\cite{gavruskin2014dynamic} consider dynamic interval scheduling on multiple machines in the setting in which all the jobs must be scheduled. The worst-case update time of their algorithm is $O(\log(n)+d)$, where $d$ refers to the depth of what they call \emph{idle intervals} (depth meaning the maximal number of intervals that contain a common point); they define an idle interval to be the period of time in a schedule between two consecutive jobs in a given machine. The same set of authors, in \cite{gavruskin2015dynamic_monotonic}, study dynamic algorithms for the monotone case as well, in which no interval completely contains another one. For this setup they obtain an algorithm with $O(\log(n))$ update and query time.

In the standard model of computing (i.e. one processor, static), there exists an $O(n+m)$ running time algorithm for (exactly) solving the unweighted interval scheduling problem on a single machine with $n$ jobs and integer coordinates bounded by $m$ \cite{frank1976some}.
An algorithm with running time independent of $m$ is described in \cite{tardos2005algorithm}, where it is shown how to solve this problem on $M$ machines in $O(n \log (n))$ time.
An algorithm is designed in \cite{arkin1987scheduling} for weighted interval scheduling on $M$ machines that runs in $O(n^2 \log(n))$ time.

We refer a reader to \cite{kolen2007interval} and references therein for additional applications of the interval scheduling problem.

\subparagraph{Other related work.}
There has also been a significant interest in job scheduling problems in which our goal is to schedule \emph{all} the given jobs across multiple machines, with the objective to minimize the total scheduling time. Several variants have been studied, including setups which allow preemptions, or setting where jobs have precedence constraints. We refer a reader to \cite{lenstra1978complexity,correa2005single,robert2008non,skutella2005stochastic,buttazzo2012limited,pinedo2012scheduling,levey20191} and references therein for more details on these and additional variants of job scheduling. Beyond dynamic algorithms for approximating maximum independent sets of intervals or hypercubes, \cite{cardinal2021worst} show results for geometric objects such as disks, fat polygons, and higher-dimensional analogs. After we had published a preprint of this work, \cite{cardinal2021worst} proved a result that captures \cref{theorem:unweighted-M=1} with a more general class of fat objects.

\section{Overview of Our Techniques}\label{section:techniques}
Our primary goal is to 
present unified techniques for approximating scheduling problems that can be turned into efficient algorithms for many settings. In this section, we discuss key insights of our techniques.

In the problems our work tackles, partitioning the problem instance into mostly-independent, manageable chunks is crucial. Doing so enables an LCA to determine information about a job of interest without computing an entire schedule, or enables a dynamic data structure to maintain a solution without restarting from scratch.

\subsection{Unweighted Interval Scheduling -- Partitioning Over Time (\cref{section:dynamic-unit})}
\label{sec:overview-unweighted}
For simplicity of presentation, we begin by examining our method for partitioning over time for just the unweighted interval scheduling problem on one machine (i.e., $M=1$). In particular, we first focus on doing so for the dynamic setting. 

Recall that in this setting the primary motivation for partitioning over time, is to divide the problem into independent, manageable chunks that can be utilized by a data structure to quickly modify a solution while processing an update. In our work, we partition the time dimension by maintaining a set of \emph{borders} that divide time into some number of contiguous regions.
By doing so, we divide the problem into many \emph{independent regions}, and we ignore jobs that intersect multiple regions; equivalently, we ignore jobs that contain a border.
Our goal is then to dynamically maintain borders in a way such that we can quickly recompute the optimal solution completely within some region, and that the suboptimality introduced by these borders does not affect our solution much.
In \cref{section:dynamic-unit}, we show that by maintaining borders where the optimal solution inside each region, i.e., a time-range between two borders, is of size $\Theta(\frac{1}{\eps})$, we can maintain a $(1+\eps)$-approximation of an optimal solution as long as we optimally compute the solution within each region.

Here, the underlying intuition is that because each region has a solution of size $\Omega(\frac{1}{\eps})$, we can charge any suboptimality caused by a border against the selected jobs in an adjacent region. Likewise, because each region's solution has size $O(\frac{1}{\eps})$, we are able to recompute the optimal solution within some region quickly using a balanced binary search tree. We dynamically maintain borders satisfying our desired properties by adding a new border when a region becomes too large, or merging with an adjacent region when a region becomes too small. As only $O(1)$ regions will require any modification when processing an update, this method of partitioning time, while simple, enables us to improve the fastest known update/query time to $O(\log(n)/\eps)$.
\footnote{The main advantage of this techniques is that it leads to worst-case $O(\log{(n)} / \eps)$ update time, as opposed to only an amortized one. We point out that it is not difficult to obtain such amortized guarantee in the following way: after each $\eps \cdot OPT$ many updates, recompute the optimum solution from scratch. Given access to the balanced binary tree structure described above, this re-computation can be done in $O(OPT \cdot \log n)$ time.}
In \cref{sec:overview-weighted} we build on these ideas to design an algorithm for the weighted interval scheduling problem.

\subsection{Weighted Interval Scheduling (\cref{section:dynamic-weighted})}
\label{sec:overview-weighted}
In our most technically involved result, we design the first deterministic $(1+\eps)$ approximation algorithm for weighted interval scheduling that runs in $\poly(\log n,\frac{1}{\eps})$ time. In this section we give an outline of our techniques and discuss key insights.
For full details we refer a reader to \cref{section:dynamic-weighted}.

\subsubsection{Job data structure (\cref{sec:hierarchical-decomposition})}
Let $\cE$ be the set of all the endpoints of given jobs, i.e., $\cE$ contains $s_i$ and $f_i$ for each job $[s_i, f_i]$. We build a hierarchical data structure over $\cE$ as follows.
This structure is organized as a binary search tree $T$.
Each node $Q$ of $T$ contains value $\key{Q} \in \cE$, with ``1-1'' mapping between $\cE$ and the nodes of $T$.
Each node $Q$ is \emph{responsible for a time range}. The root of $T$, that we denote by $Q_{root}$, is responsible for the entire time range $(-\infty, \infty)$.
    Each node $Q$ has at most two children, that we denote by $Q_L$ and $Q_R$. If $Q$ is responsible for the time range $[X, Y]$, then $Q_L$ is responsible for $[X, \key{Q}]$, while $Q_R$ is responsible for $[\key{Q}, Y]$.

Jobs are then assigned to nodes, where a job $J$ is assigned to every node $Q$ such that $J$ is contained within the $Q$'s responsible time range.

\begin{figure}[!ht] 
\centerline{\includegraphics[width=0.9\textwidth]{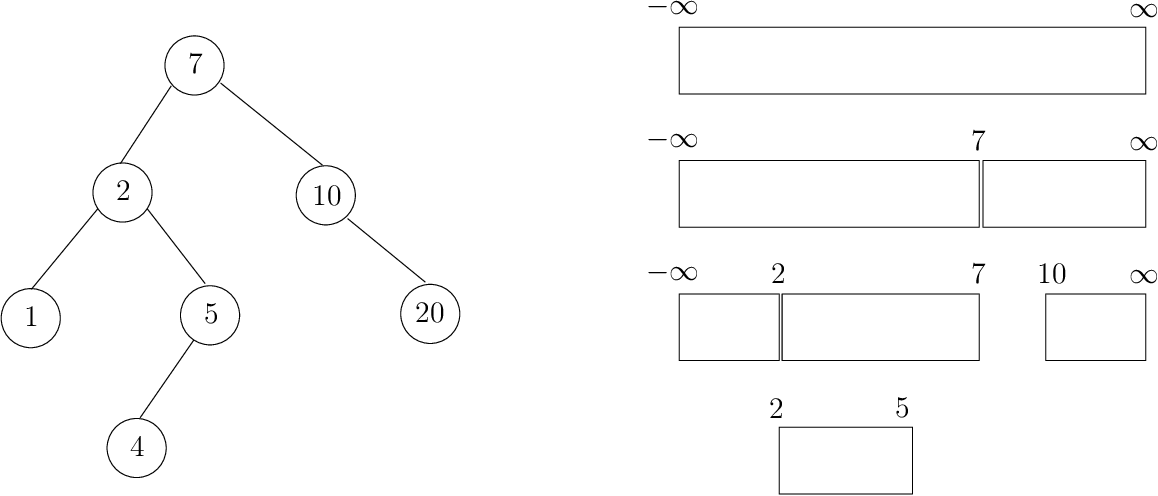}}
\caption{Visual example for hierarchical decomposition. Consider we are given jobs with the following ranges of $(1,5), (2,10), (7, 20), (4,5)$. On the left is $T$, a balanced binary search tree over the set of all $s_i$ and $f_i$. On the right is the hierarchical decomposition that corresponds to $T$.
That is, in each row, the intervals on the right correspond to the $[l_Q, r_Q]$ for the nodes on the left. For instance, in the third row, $(-\infty, 2]$ corresponds to the node $Q$ with $KEY(Q) = 1$.}
\label{fig:hierarchical}
\end{figure}

\subsubsection{Organizing computation (\cref{sec:convenient-structure})}
We now outline how the structure $T$ is used in computation. As a reminder, our main goal is to compute a $(1+\eps)$-approximate weighted interval scheduling. This task is performed by requesting $Q_{root}$ to solve the problem for the range $(-\infty, \infty)$. However, instead of computing the answer for the entire range $(-\infty, \infty)$ directly, $Q_{root}$ \emph{partitions} the range $(-\infty, \infty)$ into:
\begin{itemize}
    \item a number of ranges over which it is relatively easy to compute approximate solutions, such ones are called \emph{sparse}, and
    \item the remaining ranges over which it is relatively hard to compute approximate solutions at the level of $Q_{root}$.
\end{itemize}
These hard-to-approximate ranges are deferred to the children of $Q_{root}$, and are hard to approximate because any near-optimal solution for the range contains many jobs. On the other hand, solutions in sparse ranges are of size $O(1/\eps)$. As we discuss later, approximate optimal solutions within sparse ranges can be computed very efficiently; for details, see the paragraph \emph{Approximate dynamic programming} below.

In general, a child $Q_C$ of $Q_{root}$ might receive \emph{multiple} ranges from $Q_{root}$ for which it is asked to find an approximately optimal solution. $Q_C$ performs computation in the same manner as $Q_{root}$ did -- the cell $Q_C$ partitions each range it receives into ``easy'' and ``hard'' to compute subranges. The first type of subranges is computed by $Q_C$, while the second type if deferred to the children of $Q_C$. Here, ``hard'' ranges are akin to nodes having large solutions in our description of \cref{alg:alg-global-approx} in \cref{sec:time-partitioning-overview}. The same as in \cref{sec:time-partitioning-overview}, these ``hard'' ranges have large weight and allow for drawing a boundary and hence dividing a range into two or more \emph{independent} ranges. We now discuss how the partitioning into ranges is undertaken.

\subsubsection{Auxiliary data structure (\cref{sec:structure-Z(Q)})}
To divide a range into ``easy'' and ``hard'' ranges at the level of a node $Q$, we design an auxiliary data structure, which relates to a rough approximation of the problem. This structure, called $Z(Q)$, maintains a set of points (we call these points \emph{grid endpoints}) that partition $Q$ into \emph{slices of time}. We use slice to refer to a time range between two \emph{consecutive} points of $Z(Q)$. Recall how for unweighted interval scheduling, we maintained a set of borders and ignored a job that crossed any border. In the weighted version, we will instead use $Z(Q)$ as a set of partitions from which we will use \emph{some subset} to divide time. Our method of designing $Z(Q)$ reduces the task of finding a partitioning over time $Z(Q)$ within a cell for the $(1+\eps)$-approximate weighted interval scheduling problem to finding multiple partitionings for the  $(1+\eps)$-approximate unweighted problem.

It is instructive to think of $Z(Q)$ in the following way. First, we view weighted interval scheduling as $O(\log n)$ independent instances of unweighted interval scheduling -- instance $i$ contains the jobs having weights in the interval $(w_{max}(Q)/2^{i+1}, w_{max}(Q)/2^{i}]$. Then, for each unweighted instance we compute borders as described in \cref{sec:overview-unweighted}. $Z(Q)$ constitutes a subset of the union of those borders across all unweighted instances.
We point out that the actual definition of $Z(Q)$ contains some additional points that are needed for technical reasons, but in this section we will adopt this simplified view.
In particular, as we will see, $Z(Q)$ is designed such that the optimal solution within each slice has small total reward compared to the optimal solution over the entirety of $Q$. This enables us to partition the main problem into subproblems such that the suboptimality of discretizing the time towards slices, that we call \emph{snapping}, is negligible.

However, a priori, it is not even clear that such structure $Z(Q)$ exists. So, one of the primary goals in our analysis is to show that there exists a near-optimal solution of a desirable structure that can be captured by $Z(Q)$.
The main challenge here is to detect/localize sparse and dense ranges efficiently and in a way that yields a fast dynamic algorithm.
As an oversimplification, we define a solution as having \emph{nearly-optimal sparse structure} if it can be generated with roughly the following process:

\begin{itemize}
    \item Each cell $Q$ receives a set of disjoint time ranges for which it is supposed to compute an approximately optimal solution using jobs assigned to $Q$ or its descendants. Each received time range must have starting and ending time in $Z(Q)$.
    \item For each time range $\cR$ that $Q$ receives, the algorithm partitions $\cR$ into disjoint time ranges of three types: sparse time ranges, time ranges to be sent to $Q_L$ for processing, and time ranges to be sent to $Q_R$ for processing. In particular, this means that subranges of $\cR$ are deferred to the children of $Q$ for processing.
    \item For every sparse time range, $Q$ computes an optimal solution using at most $\nicefrac{1}{\eps}$ jobs.
    \item The union of the reward/solution of all sparse time ranges on all levels must be a $(1+\eps)$-approximation of the globally optimal solution without any structural requirements.
\end{itemize}

Moreover, and crucial for obtaining small running time per update, we develop a \emph{charging method} that enable us to partition each cell with only $|Z(Q)| = \poly(\nicefrac{1}{\eps},\log(n))$ points and still have the property that it contains a $(1+\eps)$-approximately optimal solution with nearly-optimal sparse structure. Then, we design an approximate dynamic programming approach to efficiently compute near-optimal solutions for sparse ranges. Combined, this enables a very efficient algorithm for weighted interval scheduling. On a high-level, $Z(Q)$ enables us to eventually decompose an entire solution into sparse regions.

\subsubsection{The charging method (\cref{sec:existence-of-nearly-optimal-sparse-structure})} We now outline insights of our charging arguments that enable us to convert an optimal solution $OPT$ into a near-optimal solution $OPT'$ with nearly-optimal sparse structure while relaxing our partitioning to only need $|Z(Q)| = \poly(\nicefrac{1}{\eps},\log(N))$ points. For a visual aid, see \cref{fig:charging}.

\begin{figure}[htbp]
\centerline{\includegraphics[width=0.9\textwidth]{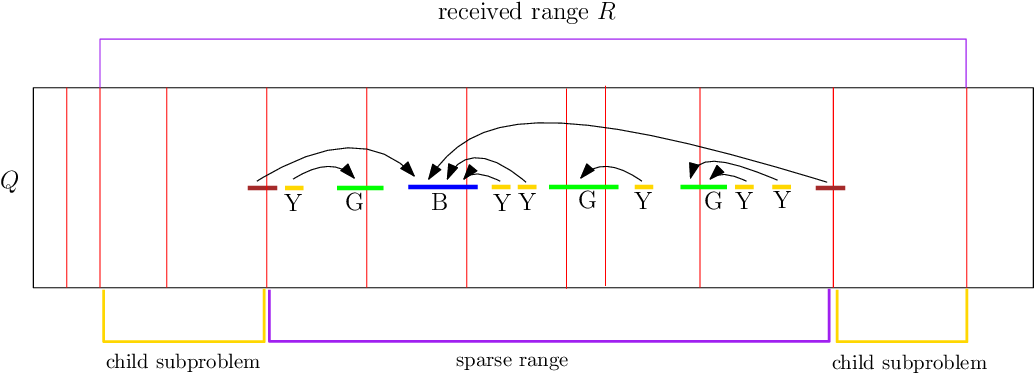}}
\caption{Visual example for charging argument.}
\label{fig:charging}
\end{figure}

As outlined in our overview of the nearly-optimal sparse structure, each cell $Q$ receives a set of disjoint time ranges, with each time range having endpoints in $Z(Q)$, and must split them into three sets: sparse time ranges, time ranges for $Q_L$, and time ranges for $Q_R$.
We will now modify $OPT$ by deleting some jobs. This new solution will be denoted by $OPT'$ and will have the following properties:
\begin{enumerate}[(1)]
    \item $OPT'$ exhibits nearly-optimal sparse structure; and
    \item $OPT'$ is obtained from $OPT$ by deleting jobs of total reward at most $O(\eps \cdot w(OPT))$.
\end{enumerate}
We outline an example of one such time range a cell $Q$ may receive in \cref{fig:charging}, annotated by ``received range $\cR$''. We will color jobs in \cref{fig:charging} to illustrate aspects of our charging argument, but note that jobs do not actually have a color property beyond this illustration.
Since our structure only allows a cell $Q$ to use a job within its corresponding time range, any relatively valuable job that crosses between $Q_L$ and $Q_R$ must be used now by $Q$ putting it in a sparse time range.
One such valuable job in \cref{fig:charging} is in blue marked by ``B''.
To have ``B'' belong to a sparse range, we must divide the time range $\cR$ somewhere, as otherwise our solution in the received range will be dense. If we naively divide $\mathcal{R}$ at the partition of $Z(Q)$ to the left and right of the job ``B'', we might be forced to delete some valuable jobs; such jobs are pictured in green and marked by ``G''.
Instead, we expand the division outwards in a more nuanced manner. Namely, we keep expanding outwards and looking at the job that contains the next partition point (if any). If the job's value exceeds a certain threshold, as those pictured as green and marked by ``G'' in \cref{fig:charging}, we continue expanding.
Otherwise, the job crossing a partition point is below a certain threshold, pictured as brown and not marked in \cref{fig:charging}, and its deletion can be charged against the blue job. We delete such brown jobs and the corresponding partition points, i.e., the vertical red lines crossing those brown jobs, constitute the start and the end of the sparse range.
By the end, we decided the starting and ending time of the sparse range, and what remains inside are blue job(s), green job(s), and yellow job(s) (also marked by ``Y''). Note that yellow jobs must be completely within a partition slice of $Z(Q)$.
Since we define $Z(Q)$ such that the optimal total reward within any grid slice is small, the yellow jobs have relatively small rewards compared to the total reward of green and blue jobs that we know must be large.
Accordingly, we can delete the yellow jobs (to help make this time range's solution sparse) and charge their cost against a nearby green or blue job.
In \cref{fig:charging}, an arrow from one job to another represents a deleted job pointing towards the job who we charge its loss against. Finally, each sparse range contains only green job(s) and blue job(s). If there are more than $\nicefrac{1}{\eps}$ jobs in such a sparse range, we  employ a simple sparsifying step detailed in the full proof. 

It remains to handle the time ranges of the received range that were not put in sparse ranges. These will be time ranges that are sent to $Q_L$ and $Q_R$. In \cref{fig:charging}, these ranges are outlined in yellow and annotated by ``child subproblem''. However, the time ranges do not necessarily align with $Z(Q_L)$ or $Z(Q_R)$ as is required by nearly-optimal sparse structure. We need to adjust these ranges such that they align with $Z(Q_L)$ or $Z(Q_R)$ so we can send the ranges to the children. See \cref{fig:snap} for intuition on why we cannot just immediately ``snap'' these child subproblems to the partition points in $Z(Q_L)$ and $Z(Q_R)$. (We say that a range $\mathcal{R}$ is \emph{snapped} inward (outward) within cell $Q$ if $\mathcal{R}$ is shrunk (extended) on both sides to the closest points in $Z(Q)$. Inward snapping is illustrated in \cref{fig:snap}.) Instead, we employ a similar charging argument to deal with snapping. As an analog to how we expanded outwards from the blue job for defining sparse ranges, we employ a charging argument where we contract inwards from the endpoints of the child subproblem. In summary, these charging arguments enabled us to show a solution of nearly-optimal sparse structure exists even when only partitioning each cell $Q$ with $|Z(Q)| = \poly(\nicefrac{1}{\eps},\log(n))$ points.

\subsubsection{Approximate dynamic programming (\cref{sec:dp-sparse})} Now, we outline our key advance for more efficiently calculating the solution of nearly-optimal sparse structure. This structure allows us to partition time into ranges with sparse solutions. More formally, we are given a time range and we want to approximate an optimal solution within that range that uses at most $\nicefrac{1}{\eps}$ jobs.
We outline an approximate dynamic programming approach that only requires polynomial time dependence on $\nicefrac{1}{\eps}$.

The relatively well-known dynamic programming approach for computing weighted interval scheduling is to maintain a dynamic program where the state is a prefix range of time and the output is the maximum total reward that can be obtained in that prefix range of time. However, for our purposes, there are too many possibilities for prefix ranges of time to consider. Instead, we invert the dynamic programming approach, and have a state referencing some amount of reward, where the dynamic program returns the minimum length prefix range of time in which one can obtain a given reward. Unfortunately, there are also too many possible amounts of rewards. We observe that we do not actually need this exact state, but only an approximation. In particular, we show that one can round this state down to powers of $(1+\eps^2)$ and hence significantly reduce the state-space. In \cref{sec:dp-sparse}, we show how one can use this type of observation to quickly compute approximate dynamic programming for a near-optimal sparse solution inside any time range.

\subsubsection{Comparison with Prior Work} \label{subsection:compare}
The closest to our work is the one of \cite{henzinger2020dynamic}. In terms of improvements, we achieve the following: we remove the dependence on $N$ and $w_{\rm max}$ in the running-time analysis; we obtain a deterministic approach; and, we design an algorithm with $\poly (1/\eps, \log n)$ update/query time, which is exponentially faster in $1/\eps$ compared the prior work.

In this prior work, jobs are assumed to have length at least $1$ and belong in the time-interval $[1,N]$. To remove the dependence on $N$ and such assumptions, we designed a new way of bookkeeping jobs. Instead of using a complete binary tree on $[1, N]$ to organize jobs as done in the prior work, we employ binary balanced search tree on the endpoints of jobs. A complete binary tree on $[1, N]$ is oblivious to the density of jobs. On the other hand, and intuitively, our approach allows for ``instance-based'' bookkeeping: the jobs are in a natural way organized with respect to their density. Resorting to this approach incurs significant technical challenges. Namely, the structure of solution our tree maintains is hierarchically organized. However, each tree update, which requires node-rotations, breaks this structure which requires additional care in efficiently maintaining approximate solution after an update, as well as requiring an entirely different approach for maintaining a partitioning of time $Z(Q)$ within cells. 
Moreover, we show how to further leverage these ideas to obtain a deterministic approach.

In our work, we use borders to define the so-called sparse and dense ranges. This idea is inspired by the work of \cite{henzinger2020dynamic}. We emphasize, though, that one of our main contributions and arguably the most technically involved component is showing how to algorithmically employ those borders in running-time only polynomially dependent on $1/\eps$, while \cite{henzinger2020dynamic} require exponential dependence on $1/\eps$.

Our construction of auxiliary data structure $Z(Q)$ enables us to boost an $O(\log(n))$-approximate solution into a decomposition enabling a $(1+\eps)$-approximate solution is inspired by the approach of \cite{henzinger2020dynamic}. They similarly develop $Z(Q)$ to boost an instead $O(1)$-approximation that fundamentally relies on the bounded coordinate assumptions of jobs being within $[1,N]$ and having length at least 1. Our different approach towards $Z(Q)$ enables simplification of some arguments as well as not relying on randomness, or on length or bounded coordinate assumptions. Further, we note that the dynamic programming approach for sparse regions that we develop is significantly faster than the enumerative approach used in the prior work, that eventually enables us to obtain a $\poly (1/\eps)$ dependence in the running time. The way we combine solutions over sparse regions is similar to the way it is done in the prior work.

\subsection{Localizing the Time-Partitioning Method (\cref{section:local})}
\label{sec:time-partitioning-overview}
We also show that this method of partitioning over time can be used to develop local algorithms for interval scheduling. Here, we desire to answer queries about whether a particular job is in our schedule. We hope to answer each of these queries consistently (i.e., they all agree with some approximately optimal schedule) and in less time than it would take to compute an entire schedule from scratch. Partitioning over time seems helpful for this setting, because this would enable us to focus on just the region of the job being queried. However, our previously mentioned method for maintaining borders does so in a sequential manner that we can no longer afford to do in this model of computation. Instead, we use a hierarchical approach to more easily compute the locations of borders that create regions with solutions not too big or too small.

For simplicity, we again focus on the unweighted setting with only one machine. In the standard greedy algorithm for computing unweighted interval scheduling on one machine, we repeatedly select the job $successor(x)$: ``\emph{What is the interval with the earliest endpoint, of those that start after point $x$?}'' (where $x$ is the endpoint of the previously chosen job). As reading the entire problem instance would take longer than desired, an LCA requires some method of probing for information about the instance. Our LCA utilizes such successor probes to do so.
For further motivation, see \cref{section:local}. We outline a three-step approach towards designing an LCA that utilizes few probes:

\emph{Hierarchizing the greedy (\cref{alg:alg-global-exact}).} Instead of just repeatedly using $successor(x)$ to compute the solution as the standard greedy does, we add hierarchical structure that adds no immediate value but serves as a helpful stepping stone.
Consider a \emph{binary search tree} (BST) like structure, where the root node corresponds to the entire time range $[0, N]$. Each node in the structure has a left-child and a right-child corresponding to the 1st and the 2nd half, respectively, of that node's range. Eventually, leaf nodes have no children and correspond to a time range of length one unit. At a high-level, we add hierarchical structure by considering jobs contained in some node's left-child, then considering jobs that go between the node's left-child and right-child, and then considering jobs contained in the node's right-child. This produces the same result as the standard greedy, but we do so with a hierarchical structure that will be easier to utilize.

\emph{Approximating the hierarchical greedy (\cref{alg:alg-global-approx}).}
Now, we modify the hierarchical greedy so that it is no longer exactly optimal but is instead an approximation. At first this will seem strictly worse, but it will yield an algorithm that is easier to localize. When processing each node, we will first check whether it is the case that both the left-child and the right-child have optimal solutions of size $>\frac{1}{\eps}$. A key observation here is that checking whether a time range has an optimal solution of size $>\frac{1}{\eps}$ can be done by making at most $1+\frac{1}{\eps}$ successor probes (i.e., one does not necessarily need to compute the entire optimal solution to check if it is larger than some relatively small threshold).
If both the left-child and the right-child would have optimal solutions of size $>\frac{1}{\eps}$, then we can afford to draw a border at the midpoint of our current node and solve the left-child and right-child independently. Jobs intersecting a border are \emph{ignored}, and we charge the number of such ignored jobs, i.e., the number of drawn borders, to the size of solution in the corresponding left- and right-child. Ultimately, we show that the addition of these borders makes our algorithm $(1+\eps)$-approximate. Moreover, and importantly, these borders introduce \emph{independence} between children with large solutions.

\emph{Localizing the approximate, hierarchical greedy (\cref{alg:alg-local-approx}).} Finally, we localize the approximate, hierarchical greedy. To do so, we note that when some child of a node has a small optimal solution, then we can get all the information we need from that child in $O(\frac{1}{\eps})$ probes. As such, if a node has a child with a small optimal solution, we can make the required probes from the small child and recurse to the large child. Otherwise, if both children have large solutions, we can draw a border at the midpoint of the current node and only need to recurse down the child which contains the job the LCA is being queried about. 

With these insights, we have used our partitioning method over time for local algorithms to produce an LCA only requiring $O(\frac{\log(N)}{\eps})$ successor probes.

\section{Problem Setup}\label{section:setup}
In the interval scheduling problem, we are given $n$ jobs and $M$ machines. With each job $j$ are associated two numbers $s_j$ and $l_j > 0$, referring to ``start'' and ``length'' respectively, meaning that the job $j$ takes $l_j$ time to be processed and its processing can only start at time $s_j$. While prior work such as \cite{henzinger2020dynamic} used assumptions such as $s_j \ge 0, l_j \ge 1$ and have an upper-bound $N$ on $s_j + l_j$, we utilize such assumptions \emph{only in our LCA results of \cref{section:local}}. In addition, with each job $j$ is associated weight/reward $w_j > 0$, that refers to the reward for processing the job $j$. The task of \emph{interval scheduling} is to schedule jobs across machines while maximizing the total reward and respecting that each of the $M$ machines can process at most one job at any point in time.

\section{Dynamic Unweighted Interval Scheduling on a Single Machine}\label{section:dynamic-unit}

In this section we prove \cref{theorem:unweighted-M=1}.
As a reminder, \cref{theorem:unweighted-M=1} considers  the case of interval scheduling in which $w_j = 1$ for each $j$ and $M = 1$, i.e., the jobs have unit reward and there is only a single machine at our disposal. This case can also be seen as a task of finding a maximum independent set among intervals lying on the $x$-axis. The crux of our approach is in designing an algorithm that maintains the following invariant:

\smallskip
\begin{minipage}{0.95\linewidth}
\begin{mdframed}[hidealllines=true, backgroundcolor=gray!15]
\vspace{-3pt} 
\begin{invariant}\label{invariant:unweighted-MIS}
    The algorithm maintains a set of borders such that an optimal solution schedules between $\nicefrac{1}{\varepsilon}$ and $\nicefrac{2}{\varepsilon}$ intervals within each two consecutive borders.
\end{invariant}
\end{mdframed}
\end{minipage}
\medskip 

We will maintain this invariant unless the optimal solution has fewer than $\nicefrac{1}{\eps}$ intervals, in which case we are able to compute the solution from scratch in negligible time. We aim for our algorithm to maintain \cref{invariant:unweighted-MIS} while keeping track of the optimal solution between each pair of consecutive borders. The high level intuition for this is that if we do not maintain too many borders, then our solution must be very good (our solution decreases by size at most one every time we add a new border). Furthermore, if the optimal solution within borders is small, it is likely easier for us to maintain said solutions. We prove that this invariant enables a high-quality approximation:
\begin{lemma}\label{lemma:invariant-value}
 A solution that maintains an optimal solution within consecutive pairs of a set of borders, where the optimal solution within each pair of consecutive borders contains at least $K$ intervals, maintains a $\frac{K+1}{K}$-approximation.
\end{lemma}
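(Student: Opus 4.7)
The plan is to reason by an averaging/charging argument that compares the maintained solution against an arbitrary optimal schedule $OPT$. Suppose the algorithm maintains $B$ internal borders, which partition the time axis into $B+1$ consecutive regions $R_0, R_1, \ldots, R_B$. For each region $R_i$, let $ALG_i$ be the optimal set of intervals the algorithm stores inside that region (intervals strictly between its two bounding borders), and let $ALG = \bigcup_i ALG_i$. By hypothesis, $|ALG_i| \geq K$ for every $i$, and therefore $|ALG| \geq (B+1)K$.

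Next, I would split the jobs of $OPT$ into two groups: the \emph{inner} jobs $OPT_{in}$ that lie entirely inside some region $R_i$, and the \emph{crossing} jobs $OPT_{cr}$ that contain at least one border. Fix $i$; the set $OPT_{in} \cap R_i$ is a feasible (non-overlapping) collection of intervals inside $R_i$, so by the optimality of $ALG_i$ within $R_i$ we have $|OPT_{in} \cap R_i| \leq |ALG_i|$. Summing over $i$ yields $|OPT_{in}| \leq |ALG|$.

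To bound $|OPT_{cr}|$, I would use the disjointness of $OPT$: any two intervals of $OPT$ are non-overlapping, so distinct crossing intervals must contain distinct borders. Hence assigning each crossing interval to an arbitrary border it covers gives an injection $OPT_{cr} \hookrightarrow \{b_1, \ldots, b_B\}$, which shows $|OPT_{cr}| \leq B$.

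Finally I would combine the two bounds:
\[
|OPT| = |OPT_{in}| + |OPT_{cr}| \leq |ALG| + B \leq |ALG| + \frac{|ALG|}{K} = \frac{K+1}{K}|ALG|,
\]
where the second inequality uses $B \leq (B+1) \leq |ALG|/K$. This gives the claimed $\tfrac{K+1}{K}$-approximation. There is no real technical obstacle here; the only thing to be careful about is the charging in the third step, where one must observe that even if a single $OPT$ interval crosses multiple borders it still only contributes $1$ to $|OPT_{cr}|$, and that the injection into the border set is well-defined because $OPT$ intervals are pairwise disjoint.
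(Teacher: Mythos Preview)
Your proof is correct and follows essentially the same charging argument as the paper: split $OPT$ into intervals contained in a region (dominated region-by-region by the maintained optimal $ALG_i$) and intervals crossing a border (at most one per border by disjointness of $OPT$), then use $|ALG|\ge (B+1)K$ to absorb the $\le B$ crossing intervals. The paper phrases the last step as charging each crossing interval to the $\ge K$ intervals in the adjacent region, but this is the same inequality you wrote out explicitly.
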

\begin{proof}
For our analysis, suppose there are implicit borders at $-\infty$ and $+\infty$ so that all jobs are within the range of borders. Consider an optimal solution $OPT$. We will now design a $K$-approximate optimal solution $OPT'$ as follows: given $OPT$, delete all intervals in $OPT$ that overlap a drawn border.
Fix an interval $J$ appearing in $OPT$ but not in $OPT'$. Assume that $J$ intersects the $i$-th border. Recall that between the $(i-1)$-st and the $i$-th border there are at least $K$ intervals in $OPT'$. Moreover, at most one interval from $OPT$ intersects the $i$-th border. Hence, to show that $OPT'$ is a $\frac{K+1}{K}$-approximation of $OPT$, we can charge the removal of $J$ to the intervals appearing between the $(i-1)$-st and the $i$-th border in $OPT'$.
\end{proof}

Not only does \cref{invariant:unweighted-MIS} enable high-quality solutions, but it also assists us in quickly maintaining such a solution. We can maintain a data structure with $O(\frac{\log(n)}{\varepsilon})$ updates and $O(\log(n))$ queries that moves the borders to maintain the invariant and thus maintains an $(1+\varepsilon)$-approximation as implied by \cref{lemma:invariant-value}.

\theoremunweightedM*

\begin{proof}
Our goal now is to design an algorithm that maintains \cref{invariant:unweighted-MIS}, which by \cref{lemma:invariant-value} and for $K = \nicefrac{1}{\eps}$ will result in a $(1+\varepsilon)$-approximation of \MaxIS.

On a high-level, our algorithm will maintain a set of borders. When compiling a solution of intervals, the algorithm will not use any interval that contains any of the borders, but proceed by computing an optimal solution between each two consecutive borders. The union of those between-border solutions is the final solution.
Moreover, we will maintain the invariant that the optimal solution for every contiguous region is of size within $[\frac{1}{\varepsilon}, \frac{2}{\varepsilon})$.

In the rest, we show how to implement these steps in the claimed running time.

\subparagraph{Maintained data-structures.}

 Our algorithm maintains a balanced binary search tree $T_{\rm{all}}$ of intervals sorted by their starting points. Each node of $T_{\rm{all}}$ will also maintain the end-point of the corresponding interval. It is well-known how to implement a balanced binary search tree with $O(\log n)$ worst-case running time per insertion, deletion and search query. Using such an implementation, the algorithm can in $O(\log n)$ time find the smallest ending-point in a prefix/suffix on the intervals sorted by their starting-points. That is, in $O(\log{n})$ time we can find the interval that ends earliest, among those that start after a certain time.

In addition, the algorithm also maintains a balanced binary search tree $T_{\rm{borders}}$ of the borders currently drawn.

Also, we will maintain one more balanced binary search tree $T_{\rm{sol}}$ that will store the intervals that are in our current solution.

We will use that for any range with optimal solution of size $S$, we can make $O(S)$ queries to these data structures to obtain an optimal solution for the range in $O(S \cdot \log n)$ time.

\subparagraph{Update after an insertion.}
Upon insertion of an interval $J$, we add $J$ to $T_{\rm{all}}$. We make a query to $T_{\rm{borders}}$ to check whether $J$ overlaps a border. If it does, we need to do nothing; in this case, we ignore $J$ even if it belongs to an optimal solution. If it does not, we recompute the optimal solution within the two borders adjacent to $J$. If after recomputing, the new solution between the two borders is too large, i.e, it has at least $\frac{2}{\varepsilon}$ intervals, then draw/add a border between the $\frac{1}{\varepsilon}$-th and the $(1+\frac{1}{\varepsilon})$-th of those intervals.

\subparagraph{Update after a deletion.}
Upon deletion of an interval $J$, we delete $J$ from $T_{\rm{all}}$. If $J$ was not in our solution, we do nothing else. Otherwise, we recompute the optimal solution within the borders adjacent to $J$ and modify $T_{\rm{sol}}$ accordingly. Let those borders be the $i$-th and the $(i+1)$-st. If the new solution between borders $i$ and $i+1$ now has size less than $\nicefrac{1}{\varepsilon}$ (it would be size exactly $\nicefrac{1}{\eps}$), we delete an arbitrary one of the two borders (thus combining this region with an adjacent region). Then, we recompute the optimal solution within the (now larger) region $J$ is in.  If this results in a solution of size at least $\nicefrac{2}{\varepsilon}$, we will need to split the newly created region by adding a border. Before splitting, the solution will have size upper-bounded by one more than the size of the solutions within the two regions before combining them as an interval may have overlapped the now deleted border (one region with size exactly $\frac{1}{\eps}-1$ and the other upper-bounded by $\frac{2}{\eps}-1$). Thus, the solution has size at in range $[\nicefrac{2}{\eps},\frac{3}{\eps})$. We can add a border between interval $\nicefrac{1}{\eps}$ and $\nicefrac{1}{\eps}+1$ of the optimal solution, and will have a region with exactly $\nicefrac{1}{\eps}$ intervals and another with $[\nicefrac{1}{\eps},\nicefrac{2}{\eps})$ intervals, maintaining our invariant.

In all of these, the optimal solution for each region has size $O(\nicefrac{1}{\varepsilon})$, so recomputing takes $O(\nicefrac{\log (n)}{\varepsilon})$ time.

For queries, we will have maintained $T_{\rm{sol}}$ in our updates such that it contains exactly the intervals in our solution. So each query we just need to do a lookup to see if the interval is in $T_{\rm{sol}}$ in $O(\log n)$ time.
\end{proof}

This result improves the best-known time complexities \cite{bhore2020dynamic,henzinger2020dynamic}. Unfortunately, it does not immediately generalize well to the weighted variant. In \cref{section:dynamic-weighted}, we show our more technically-challenging result for the weighted variant.

\section{Dynamic Weighted Interval Scheduling on a Single Machine}\label{section:dynamic-weighted}
This section focuses on a more challenging setting in which jobs have non-uniform weights. Non-uniform weights introduce difficulties for the approach mentioned in \cref{section:dynamic-unit}, as adding a border (which entails ignoring all the jobs that cross that border) may now force us to ignore a very valuable job. Straightforward extensions of this border-based approach require at least a linear dependence on the ratio between job rewards (e.g., if all jobs have rewards within $[1,w]$, then straightforward extensions would require a linear dependence on $w$). This is because an ignored job containing a border can have a reward of $w$ (as opposed to just $1$), requiring $\nicefrac{w}{\eps}$ reward inside the region to charge it against (as opposed to just $\nicefrac{1}{\eps}$). In this work, we show how to perform this task in $O(\textrm{poly}(\log(n),\nicefrac{1}{\varepsilon}))$ time, having no such dependency on the rewards of the jobs or the starting/ending times.
This improves upon the best-known preexisting result of $O(\textrm{poly}(\log(n),\log(N),\log(w)) \cdot \textrm{exp}(\nicefrac{1}{\varepsilon}))$ time accomplished by the decomposition scheme designed in the work of Henzinger et al.~\cite{henzinger2020dynamic}, which we compare with in \cref{subsection:compare}.
Both our algorithm and our analysis introduce new ideas that enable us to design a dynamic algorithm with running time having only polynomial dependence on $\nicefrac{1}{\eps}$ and $\log(n)$, yielding an exponential improvement in terms of $\nicefrac{1}{\eps}$ over \cite{henzinger2020dynamic}, and removing all dependence on $N$ and $w$. Moreover, our algorithm is deterministic (as opposed to randomized and a $(1+\varepsilon)$-approximation in expectation) and requires no assumption on the lengths or coordinate values of the jobs (\cite{henzinger2020dynamic} assumes all jobs are length at least $1$ and all coordinates are within $[0,N]$, where $N$ affects the time complexity).

As the first step we show that there exists a solution $OPT'$, which is a $(1+\eps)$-approximate optimal solution, that has \emph{nearly-optimal sparse structure}, similar to a structure used in \cite{henzinger2020dynamic}. We define properties of this structure in \cref{sec:convenient-structure}, although it is instructive to think of this structure as of a set of non-overlapping time ranges such that:
\begin{enumerate}[(1)]
    \item Within each time range there is an approximately optimal solution which contains a small number of jobs (called \emph{sparse}); 
    \item The union of solutions across all the time ranges is $(1+\eps)$-approximate; and
    \item There is an efficient algorithm to obtain these time ranges.
\end{enumerate}
Effectively, this structure partitions time such that we get an approximately optimal solution by computing sparse solutions within partitioned time ranges and ignoring jobs that are not fully contained within one partitioned time range. To obtain the guarantees of such a set of time ranges that can be obtained efficiently, we utilize a new hierarchical decomposition based on a balanced binary search tree and employ novel charging arguments. This result is described in detail in \cref{sec:convenient-structure}.

Once equipped with this  structural result, we first design a dynamic programming approach to compute an approximately optimal solution within one time range. Let $w_{max}$ denote the maximal reward among all jobs currently in the instance. To obtain an algorithm whose running time is proportional to the number of jobs in the solution for a time range,
as opposed to the length of that range, we ``approximate'' states that our dynamic programming approach maintains, and ultimately obtain the following claim whose proof is deferred to \cref{sec:dp-sparse}.
\begin{restatable}{lemma}{lemmadpsparse}
\label{lemma:sparse-sol-approx}
 Given any contiguous time range $\mathcal{R}$ and an integer $K$, consider an optimal solution $OPT(\mathcal{R}, K)$ in $\mathcal{R}$ containing at most $K$ jobs and ignoring jobs with weight less than $\nicefrac{\eps}{n} \cdot w_{max}$. Then, there is an algorithm that in $\mathcal{R}$ finds a $(1+\eps)$-approximate solution to $OPT(\mathcal{R}, K)$ in  
 $O\rb{\frac{K \log(n) \log^2(K/\eps)}{\eps^2}}$ 
 time and with at most $O\rb{\frac{K \log(K/\varepsilon)}{\varepsilon}}$ jobs.
\end{restatable}
Observe that running time of the algorithm given by \cref{lemma:sparse-sol-approx} has no dependence on the length of $\mathcal{R}$. Also observe that the algorithm possibly selects slightly more than $K$ jobs to obtain a $(1+\eps)$-approximation of the best possible reward one could obtain by using at most $K$ jobs in $\mathcal{R}$ (i.e., $OPT(\mathcal{R}, K)$).

Finally, in \cref{sec:combining-ingredients} we combine all these ingredients and prove the main theorem of this section.
\theoremweighteddynamic*

\subsection{Decomposition Overview}
\label{sec:decomposition-overview}

We utilize a hierarchical decomposition to organize time such that we may efficiently obtain time ranges that satisfy the nearly-optimal sparse structure. This decomposition has two levels of granularity. For the higher-level decomposition, we employ a decomposition similar to that of a balanced binary search tree with $O(\log(n))$ depth. Each cell $Q$ in this balanced binary search tree will correspond to a range of time. Further details on this hierarchical decomposition are described in \cref{sec:hierarchical-decomposition}. 

For the lower-level decomposition, we split each cell $Q$ more finely. Formally, for a set of grid endpoints $Z(Q)$, we define a grid slice as follows.
\begin{definition}[Grid slice]\label{definition:grid-slice}
    Given a set of grid endpoints $Z(Q) = \{r_1, r_2, \ldots, r_{X-1}\}$ with $r_i < r_{i + 1}$, we use \emph{grid slice} to refer to an interval $(r_i, r_{i + 1})$, for any $1 \le i < {X-1}$. Note that a grid slice between $r_i$ and $r_{i + 1}$ does not contain $r_i$ nor $r_{i + 1}$.
\end{definition}

We further discuss $Z(Q)$ in \cref{sec:structure-Z(Q)}. Importantly, $Z(Q)$ is designed such that the optimal solution entirely within any grid slice is upper-bounded to be relatively small compared to the weight of the optimal solution within $Q$, or $w(OPT(Q))$.
This property makes the grid endpoints $Z(Q)$ a helpful tool in partitioning time. At a high level, $Z(Q)$ is used to define a set of segments that motivate dynamic programming states of the form $DP(Q,S)$, where each $S$ corresponds to a segment between two grid endpoints of $Z(Q)$, and $DP(Q,S)$ computes an approximately optimal sparse solution among schedules that can only use jobs contained within the segment of time $S$. The key idea is that this dynamic programming enables the partitioning of time into \emph{dense} and \emph{sparse} ranges. Solutions for sparse ranges are computed immediately, while dense ranges are solved by children with dynamic programming (by further dividing the dense range into more sparse and dense ranges). We recall from \cref{subsection:compare} that \cite{henzinger2020dynamic} were first to design a two-level hierarchical decomposition that computes $DP(Q,S)$ to optimize over dense and sparse ranges. However, we emphasize that our work utilizes entirely new approaches for our high-level hierarchical decomposition into cells $Q$, for our low-level decomposition of each cell into $Z(Q)$, and for our method of computing approximately optimal sparse solutions of $DP(Q,S)$.

\subsection{Solution of Nearly-Optimal Sparse Structure}
\label{sec:convenient-structure}
To remove exponential dependence on $\nicefrac{1}{\eps}$ and all dependence on $N$ and $w$, we introduce a new algorithm for approximating sparse solutions, a new hierarchical decomposition, and novel charging arguments that (among other things) reduce the number of grid endpoints $|Z(Q)|$ required in each cell. With this, we will compute an approximately optimal solution of the following very specific structure.

\begin{definition}[Nearly-optimal sparse structure]
    To have nearly-optimal sparse structure, a solution must be able to be generated with the following specific procedure:

\begin{itemize}
    \item Each cell $Q$ will receive a set of time ranges, denoted as $RANGES(Q)$, with endpoints in $Z(Q)$. To start, $Q_{root}$ will receive one time range containing all of time (i.e., $RANGES(Q_{root}) = \{[-\infty,\infty]\}$)
    \item $RANGES(Q)$ is split into a collection of disjoint time ranges, with each being assigned to one of three sets: $SPARSE(Q)$, $RANGES(Q_L)$, $RANGES(Q_R)$
    \item $SPARSE(Q)$, a set of time ranges, must have endpoints in $Z(Q) \cup Z(Q_L) \cup Z(Q_R)$
    \item For each child $Q_{child}$ (where $child \in \{L,R\}$) of $Q$, $RANGES(Q_{child})$ must have all endpoints in $Z(Q_{child})$
    \item The total weight of sparse solutions (solutions with at most $\nicefrac{1}{\eps}$ jobs) within sparse time ranges must be large (where $SPARSE\_OPT(\mathcal{R})$ denotes an optimal solution having at most $\nicefrac{1}{\eps}$ jobs within range $\mathcal{R}$): \[ \sum_Q \sum_{R \in SPARSE(Q)} w(SPARSE\_OPT(\mathcal{R})) \ge (1-O(\eps))w(OPT)
    \]
\end{itemize}

\end{definition}

Now, we prove our result for a $(1+\varepsilon)$-approximation to dynamic, weighted interval \MaxIS algorithm with only polynomial time dependence on $\nicefrac{1}{\eps}$ and $\log(n)$.
Unlike the decomposition of Henzinger et al., we will not define our decomposition such that each cell $Q$ will split exactly in half to produce both its children $Q_L$ and $Q_R$. Instead, we will divide every cell $Q$ in a manner informed by a balanced binary search tree. Desirably, this will make the depth of our decomposition $O(\log(n))$ instead of $O(\log(N))$, but it will remove the possibility of utilizing the random-offset style of idea to assign jobs to cells where they each job's length is approximately $\varepsilon$ fraction of the cell's length. This necessitates novel charging arguments. We supplement this new hierarchical decomposition with a new alternative for the $Z(Q)$ data structure that enables us to determine important dynamic program subproblems without any dependence on $N$. Additionally, we take a new approach for solving the small sparse subproblems, where we use an approximate dynamic programming idea to remove exponential dependence on $\nicefrac{1}{\eps}$ in the best known running time for these subproblems.
In our novel charging arguments, there is a particular focus on changing where deleted intervals' weights are charged against and introducing a \emph{snapping budget}, which we use to relax the required number of grid endpoints $|Z(Q)|$ to depend only polynomially on $\nicefrac{1}{\eps}$. As a reminder, $Z(Q)$ is a set of grid points within $Q$ such that between any two consecutive points we are guaranteed that the optimal solution has small weight. 
Our final algorithm will consider a number of subproblems for each cell proportional to $|Z(Q)|^2$, so improvements in $|Z(Q)|$ directly lead to improvements in the best-known running time.
Effectively, we make each of our smaller subproblems easier to solve while also reducing the number of subproblems we need to solve.  All improvements are exponential in $\eps$ and remove dependence on $N$ and $w$.

\subsubsection{Hierarchical decomposition}
\label{sec:hierarchical-decomposition}
We now formally describe our hierarchical decomposition of jobs.

\begin{itemize}
    \item Consider the set of all jobs' starting/ending times, i.e., for each job $i$, include $s_i$ and $f_i$. Now, consider a balanced binary search tree $T$ over this set of times. For the sake of this paper, one can assume this is maintained by a red-black tree such that the tree has depth $O(\log(n))$ and $O(\log(n))$ rotations are required per update. We have a cell $Q$ in our hierarchical decomposition corresponding to each node in $T$. Let $KEY(Q)$ be the corresponding key for the node in $T$.
    
    \item Each $Q$ has a left child $Q_L$ or right child $Q_R$ if the corresponding node in $T$ does.
    \item Each cell $Q$ represents a range of time. $Q_{root}$ corresponds to all time, meaning $TIME(Q_{root})=[-\infty,\infty]$. This time range is split for the children of $Q$ by $KEY(Q)$. More formally, given a cell $Q$ where $TIME(Q)=[l_Q,r_Q]$, then (if $Q_L$ exists) $TIME(Q_L)=[l_Q,KEY(Q)]$, and (if $Q_R$ exists) $TIME(Q_R)=[KEY(Q),r_Q]$.
\end{itemize}

This fully describes our hierarchical decomposition of depth $O(\log(n))$. A visual example is provided in \cref{fig:hierarchical}.

\subsubsection{Structure $Z(Q)$}
\label{sec:structure-Z(Q)}
We use the set of grid points $Z(Q)$ to determine segments that will be used as subproblems for dynamic programming and in reference to the nearly-optimal sparse structure. For some specified $X$, our goal is to maintain a $Z(Q)$ such that the optimal solution within every grid slice is at most $O(\nicefrac{w(OPT(Q))}{X})$. The previously-utilized methods for obtaining this require logarithmic dependence on $N$ and $w$. To remove dependence on $w$, we relax our requirements of $Z(Q)$ to ignore all jobs with weight less than $w(OPT(Q)) \cdot \nicefrac{\eps}{n}$; in total, these jobs have negligible reward. To remove dependence on $N$, we consider an alternative approach to computing $Z(Q)$, where we take the union of multiple solutions to $Z(Q)$ for the analogous unweighted interval scheduling problem using ideas similar to those in \cref{section:dynamic-unit}. We design a $Z(Q)$ with the following guarantees, whose proof is deferred to \cref{subsection:maintaining-z}:

\begin{restatable}[Dynamically maintaining $Z(Q)$]{lemma}{lemmazqmaintain}
\label{lemma:zq-maintain}
For any fixed positive integer $X$, it is possible to return a set $Z(Q)$ for any cell $Q$ in the hierarchical decomposition in $O(X \cdot \log^3(n))$ query time. Moreover, the returned $Z(Q)$ will satisfy the following properties:
\begin{itemize}
    \item For every $Q$, the optimal solution within each grid slice of $Z(Q)$ is at most $O(\nicefrac{w(OPT(Q))}{X})$; as a reminder, we ignore jobs with weights less than $w(OPT(Q)) \cdot \nicefrac{\eps}{n}$. 
    \item For every $Q$, $|Z(Q)| = O(X \cdot \log^2(n))$
\end{itemize}
\end{restatable}

\subsubsection{Existence of desired $(1+\eps)$-approximate solution}
\label{sec:existence-of-nearly-optimal-sparse-structure}
We now argue that there exists a $(1+O(\eps))$-approximation with nearly-optimal sparse structure in reference to our new hierarchical decomposition for $Q$ and our $Z(Q)$ when using $X=\frac{\log^2(n)}{\eps^2}$ and thus $|Z(Q)|=O(\frac{\log^4(n)}{\eps^2})$:

\begin{lemma} \label{lemma:sol-structure}
There exists a solution $OPT'$ that has nearly-optimal sparse structure and such that $w(OPT') \ge (1 - O(\varepsilon)) w(OPT)$. Thus, $OPT'$ is a $(1 + O(\varepsilon))$-approximation of $OPT$.
\end{lemma}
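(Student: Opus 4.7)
The plan is to start from an unmodified optimal solution $OPT$ and modify it into $OPT'$ by a carefully designed top-down sweep through the hierarchical decomposition, deleting only a small total weight of jobs at each step so that what remains fits the nearly-optimal sparse structure. At each cell $Q$, we will process the time ranges in $RANGES(Q)$ one at a time, and for each such range $\mathcal{R}$ decide how to split it into $SPARSE(Q)$-ranges (with endpoints in $Z(Q) \cup Z(Q_L) \cup Z(Q_R)$) together with disjoint subranges to be forwarded to $Q_L$ and $Q_R$ (with endpoints in $Z(Q_L)$ and $Z(Q_R)$, respectively). Crucially, once a job in $OPT$ is assigned to cell $Q$, it must be captured inside a sparse range emitted by $Q$ (since descendants cannot use it), while all remaining uncovered portions of $\mathcal{R}$ are snapped to the children's grids before being passed down.

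The bulk of the proof is two charging arguments, which I will set up as follows. First, the \emph{outward expansion} argument: for each ``valuable'' job $B \in OPT \cap C'(Q)$ (analogous to the blue job in Figure~\ref{fig:charging}), walk outward from $B$ through consecutive points of $Z(Q)$; while the job straddling the next point is itself ``valuable'' (above a threshold depending on $\eps$ and $w(P(Q))$), keep going, and as soon as a straddling job is cheap, delete it and stop there, using the stopping points as the sparse range's boundaries. Any jobs in $OPT$ that lie strictly inside a single grid slice of $Z(Q)$ within the resulting sparse range (the yellow jobs) are deleted; since $Z(Q)$ is chosen so that any grid slice has optimum weight $O(\eps^4/\log^2 N) \cdot w(P(Q))$, and there are only $\poly(1/\eps,\log N)$ slices per cell, these deletions can be charged against $B$ and the nearby valuable (green) jobs with a loss of at most $O(\eps/\log N)$ fraction of the surrounding reward. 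Combined with the at most two cheap straddlers deleted at each end, this keeps the loss per sparse range proportional to $\eps$ times its retained reward. A last sparsification step (detailed in a later section) ensures at most $1/\eps$ jobs remain per sparse range while losing only an $O(\eps)$-fraction. Summing over all cells and ranges yields a geometric-style bound of $O(\eps)\cdot w(OPT)$ total loss.

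Second, the \emph{inward snapping} argument handles the pieces of $\mathcal{R}$ not absorbed into sparse ranges, which must be forwarded to $Q_L$ or $Q_R$ with endpoints in $Z(Q_L) \cup Z(Q_R)$. A naive snap to the nearest grid endpoint of the child can truncate arbitrarily valuable jobs (cf.~Figure~\ref{fig:snap}). Instead, I will contract inward symmetrically to the outward expansion: at each endpoint of a child subrange, walk inward through $Z(Q_L)$ (or $Z(Q_R)$), skipping over any straddler whose weight exceeds the threshold and stopping at the first cheap straddler, which is then deleted. The same ``yellow jobs in a child grid slice have small optimal weight'' bound justifies the charging, and one maintains a \emph{snapping budget} per range so that the accumulated inward contractions contribute only $O(\eps) \cdot w(OPT)$ to the total loss across the whole recursion. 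Separately, one must absorb the $O(\eps)$ loss from the random offset (which makes $OPT$ supported on assigned jobs in expectation), and the $O(\eps)$ loss from requiring that each job in $C'(Q)$ used by $OPT$ gets captured by the outward expansion, noting that such a job contains a grid endpoint of $Z(Q)$ by construction of $Z(Q)$.

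The main obstacle is getting the accounting right across \emph{all} levels of the hierarchy simultaneously: a single job of $OPT$ can potentially be ``nearby'' a cheap deleted straddler in at most $O(1)$ cells on its assignment path, and the per-cell losses must each scale as $O(\eps/\log N)$ fraction of local reward so that summing over the $\log N$ depths gives $O(\eps)$ total. The way this works out is that each deletion is charged at a rate proportional to the threshold used to classify ``cheap'' straddlers, which I will set to $\Theta(\eps/\log N)$ times the weight of the charged job; combined with the $O(1)$ straddlers per endpoint per cell and the $\log N$ levels, the global loss is $O(\eps) \cdot w(OPT)$. Together with the $O(\eps)$ random-offset loss, this produces an $OPT'$ of weight $(1 - O(\eps)) w(OPT)$, which by construction exhibits nearly-optimal sparse structure, proving the lemma.
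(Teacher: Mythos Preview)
Your plan is essentially the paper's own proof: the same top-down sweep, the same outward expansion from heavy jobs in $C'(Q)$ to carve out sparse ranges (your ``blue/green/yellow'' picture is exactly the paper's Figure~\ref{fig:charging}), the same inward contraction to snap dense leftovers to $Z(Q_{child})$ (Figure~\ref{fig:snap}), and the same snapping-budget device. So the architecture is right.

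Where your sketch is loose is precisely the accounting, and the paper is quite delicate there. First, you conflate two distinct charging regimes. Straddlers and ``yellow'' jobs are charged \emph{locally} to an adjacent heavy/highlighted job with ratio $O(\eps)$, and each such target is charged $O(1)$ times \emph{in total} (it is heavy only in its own cell), so no $\log N$ summation enters; your ``$\Theta(\eps/\log N)$ times the charged job, times $\log N$ levels'' is the wrong bookkeeping for that part. The $\log N$ summation is used for a different class of deletions: the paper first deletes all \emph{light} jobs in $C'(Q)$ (those of weight $\le \tfrac{\eps^2}{\log N} w(P(Q))$), and separately bounds the snapping cost per level by $O(\tfrac{\eps^2}{\log^2 N}) w(P(Q))$ per range; these are the quantities that get summed over $\log N$ levels against $\sum_Q w(P(Q)) = O(w(OPT))$. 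You omit the light-job step entirely, but without it there is no lower bound on the weight of a heavy job, and your outward-expansion charging has nothing to anchor to. Second, your snapping-budget line ``one maintains a snapping budget per range'' hides the crucial counting step: the budget only closes because at each cell at most $O(1/\eps)$ \emph{new} dense ranges are created (one per heavy job plus one for $Q_{mid}$), so the total budget issued at $Q$ is $O(\tfrac{\eps}{\log N}) w(P(Q))$; you need to say this. Finally, the ``sparsification step detailed in a later section'' is not deferred in the paper---it is done right here by grouping the surviving heavy/highlighted jobs modulo $1/\eps$ and deleting the lightest group, which works only because every such job contains a point of $Z(Q)$ (heavy by construction of $Z(Q)$, highlighted by definition). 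Tighten these four points and your proof will match the paper's.
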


\begin{proof}
We emphasize that the goal of this lemma is not to show how to construct a solution algorithmically, but rather to show that there exists one, that we refer to by $OPT'$, that has a specific structure and whose weight is close to $OPT$.

In this paragraph, we provide a proof overview. At a high-level, we show this claim by starting with $OPT$, and maintaining a solution $OPT'$ that holds our desired structure and only deletes jobs with total weight $O(\varepsilon \cdot w(OPT))$. Our process of converting $OPT$ to $OPT'$ is recursive, as we start at the root and work down. Generally, our preference for any range $\mathcal{R} \in RANGES(Q)$ will be to defer it to a child by passing it on to a $RANGE(Q_{child})$. This preference can often not be immediately satisfied for two reasons: (i) $\mathcal{R}$ may not be completely contained within a $Q_{child}$ (i.e. $\mathcal{R}$ crosses between $Q_L$ and $Q_R$), or (ii) the endpoints of $\mathcal{R}$ do not alight with the corresponding $Z(Q_{child})$. We will modify $OPT$ to accommodate these concerns.
To handle concern (i), we will delete a job in $OPT$ if it crosses between $Q_{L}$ and $Q_{R}$ and has small went (and hence it can be ignored). Otherwise, if such a crossing job has large weight, we will divide $\mathcal{R}$ into three time ranges such that one is contained within $Q_L$, one uses the crossing job, and the last is contained within $Q_R$, using a process detailed in the following proof. For the central third, we will sparsify this range to produce a set $SPARSE(Q)$ of sparse time ranges.
For time ranges completely contained within $Q_L$ and $Q_R$ that are not designated as sparse time ranges, we will essentially consider them dense time ranges, that will be delegated to children cells of $Q$. In order to delegate a time range to a child $Q_{child}$, we require that the delegated time range must have endpoints that align with $Z(Q_{child})$. Accordingly, we perform modifications to ``snap'' the time ranges' endpoints to $Z(Q_{child})$ for the corresponding child $Q_{child}$ of $Q$ and include the ``snapped'' time ranges in $RANGES(Q_{child})$. We show that throughout this process, we do not delete much weight from $OPT$ and obtain an $OPT'$ that has our desired structure. Now, we present the proof in detail:

\paragraph{Deleting light crossing jobs.}
We now describe how to modify $OPT$, obtaining $OPT'$, such that $OPT'$ has our desired structure and $OPT'$ is a $(1+\eps)$-approximation of $OPT$. Note that we will never actually compute $OPT'$. It is only a hypothetical solution that has nice structural properties and that we use to compare our output to.

For a cell $Q$, consider a time range it receives in $RANGES(Q)$. We shall split this time range into sparse time ranges (to be added to $SPARSE(Q)$) and dense time ranges (to be added to $RANGES(Q_L)$ or $RANGES(Q_R)$). There is at most one range $\mathcal{R}_{cross} \in RANGES(Q)$ that crosses between $Q_L$ and $Q_R$, and we call the at most one job crossing between $Q_L$ and $Q_R$ the \emph{crossing job} (if it exists). If the crossing job has weight $\le \frac{\eps}{\log(n)} w(OPT(Q))$, we call it \emph{light}, we delete the light crossing job, and we split $\mathcal{R}_{cross}$ at the dividing point $KEY(Q)$. One of these two resulting ranges can inherit the snapping budget of $\mathcal{R}_{cross}$, while we can allocate the other a snapping budget of weight $O(\frac{\eps}{\log(n)} w(OPT(Q))$. We delete/allocate at most $O(\frac{\eps}{\log(n)} w(OPT(Q)))$ weight at every cell, $O(\frac{\eps}{\log(n)} w(OPT))$ weight at every level, and $O(\eps w(OPT))$ weight in total. Also note how all ranges in $RANGES(Q)$ are now completely contained within either $Q_L$ or $Q_R$. Otherwise, if the crossing job has large weight, we call it $\emph{heavy}$ and must find some way to include it in our solution instead of deleting it.

\paragraph{Utilizing heavy crossing jobs.}
We now focus on showing how to construct our solution using a \emph{heavy} crossing job. Our goal is to split $\mathcal{R}_{cross}$ into three parts: one range completely within $Q_L$, some sparse ranges that will be $SPARSE(Q)$ and include the crossing job among other jobs, and one range completely within $Q_R$. As an overview, we will start by considering the smallest time range that contains the crossing job and spans the grid between two (not necessarily consecutive) endpoints in $Z(Q)$.
This range may contain many jobs in $OPT$, so we perform an additional refinement to divide it up into sparse time ranges. In this refinement, we will split up the time range such that we do not delete too much weight and, moreover, all of the resulting time ranges have at most $\nicefrac{1}{\eps}$ jobs. These time ranges now constitute $SPARSE(Q)$. A detailed description of this process of determining $SPARSE(Q)$ is given in stages from ``utilizing heavy jobs'' to ``sparsifying regions.'' For an example of this process that uses the terminology later described in these stages, see \cref{fig:sparse}. 
Any remaining time ranges not selected at this stage will effectively be dense time ranges, and are delegated into $RANGES(Q_L),RANGES(Q_R)$ (after dealing with their alignment issues). This process of designating time ranges to delegate is detailed in stages from ``creating dense ranges'' to ``resolving leafs.''

As a reminder, we have chosen $Z(Q)$ such that the total weight inside any grid slice (a time range between two consecutive endpoints of $Z(Q)$) of $Q$ is at most $\frac{\varepsilon^2}{\log^2(N)} w(P(Q))$.
Recall that $Z(Q)$ contains grid endpoints. For the heavy crossing job, consider the grid endpoint immediately to its left and to its right. Without loss of generality, consider the right one and call it $r$. How we proceed can be split into two cases:

\begin{enumerate}[(1)]
\item In the first case, $r$ overlaps a job $J$ in $OPT'$ with weight at most $ \frac{\eps}{\log(n)} w(OPT(Q))$. We delete $J$ and draw a boundary at $r$. 
In doing this, we will charge the weight of $J$ against the cell $Q$. There are at most two jobs we charge in this manner for that original heavy interval, one for the grid endpoint to the right and one to the left. Meaning, each cell will be charged in this manner at most twice for a total of $O(\frac{\eps}{\log(n)} w(OPT)$ weight at each level and $O(\eps w(OPT))$ weight overall.

\item In the other case, $r$ overlaps a job $J$ that has weight greater than $\frac{\eps}{\log(n)} w(OPT(Q))$. We call $J$ a \emph{highlighted} job. Our algorithm proceeds by considering the grid endpoint immediately to the right of $J$. We determine what to do with this grid endpoint in a recursive manner. Meaning, we proceed in the same two cases that we did when considering what to do with $r$, and continue this recursive process until we finally draw a boundary.
\end{enumerate}

After this process, we will have drawn a \emph{region} (time range corresponding to where we drew a left and right boundary for) in which $OPT'$ has the one heavy crossing job, a number of highlighted jobs (possibly zero), and potentially some remaining jobs that are neither crossing nor highlighted (we call these \emph{useless}). It is our goal to convert this region into time ranges that we can use as sparse time ranges. 
Our process also guarantees this region has borders with endpoints in $Z(Q)$.
Note that we have created a region within some time range of $RANGES(Q)$, but not every point in the time range is necessarily contained within the region.

\paragraph{Deleting useless jobs.} In the generated region, we define \emph{useless} jobs as all jobs that are neither crossing nor highlighted. Useless jobs are completely contained within grid slices. We want to convert the region into sparse time ranges, but there may be many useless jobs that make the region very dense.
Thus, we will delete all jobs in the region that are useless.
By the process of generating the region, any such job is fully contained within a grid slice for which there is a heavy crossing job or highlighted job partially overlapping the grid slice.
We charge deletion of all useless jobs in a given slice by charging against a highlighted or heavy crossing job that must partially overlap the given slice.
By definition of $Z(Q)$, useless jobs in the slice add up to a total weight of at most $\frac{\varepsilon^2}{\log^2(n)} w(OPT(Q))$. This is because we set $Z(Q)$ with $X = \frac{\log^2 (n)}{\eps^2}$ and thus the optimal solution within any grid slice has total weight at most $\frac{\eps^2}{\log^2 (n)} w(OPT(Q))$. Moreover, $\frac{\varepsilon^2}{\log^2(N)} w(OPT(Q))$ is at least a factor of $\eps$ less than the highlighted or heavy crossing job we are charging against (and there are only two such slices whose useless jobs are charging against any highlighted or heavy jobs).

\paragraph{Sparsifying the region.} Now, the region only contains heavy crossing job or highlighted jobs.
We aim to split the region into ranges for $SPARSE(Q)$ without deleting much weight.
The region may have more than $\frac{1}{\varepsilon}$ jobs (meaning it is not sparse). If this is the case, we desire to split the region into time ranges that each have $\le \frac{1}{\eps}$ jobs and start/end at grid endpoints of $Z(Q)$. To do so, we number the jobs in a region from left to right and consider them in groups based on their index modulo $\frac{1}{\varepsilon}$. Note that a group does not consist of consecutive jobs.
Then, we delete the group with lowest weight. We delete this group because we make the observation that all remaining jobs in the region must contain a grid endpoint within it. This is because heavy crossing jobs must contain a grid endpoint by how we defined $Z(Q)$, and highlighted jobs must contain a grid endpoint by their definition. Thus, we can delete the jobs belonging to the lightest group and split the time range at the grid endpoints contained inside each of the deleted jobs.
In doing so, we lose at most a factor of $\eps$ of the total weight of all the considered jobs.
However, now each resulting time range will have at most $\frac{1}{\varepsilon}$ jobs and thus will be a valid sparse range in $SPARSE(Q)$ (because for any range containing a number of consecutive jobs greater than $\frac{1}{\eps}$, we will have split it). Note that all these sparse ranges have endpoints in $Z(Q)$. With all of its terminology now defined, readers may find the example illustrated in \cref{fig:sparse} helpful for their understanding.

\begin{figure}[htbp]
\centerline{\includegraphics[width=0.9\textwidth]{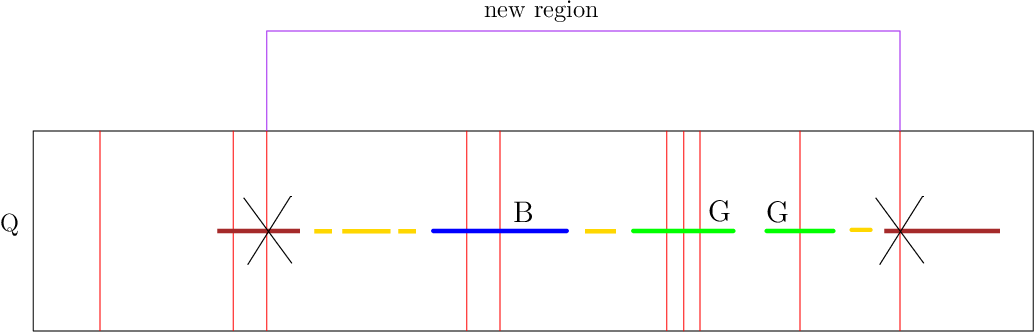}}
\caption{This example illustrates how the sparse regions are created. All vertical segments within $Q$, which are red in the figure, correspond to the points in $Z(Q)$. The cell $Q$ is divided by $Z(Q)$ such that the optimal solution within every grid slice is small. As a reminder, a grid slice is an open time-interval between two consecutive points in $Z(Q)$; see \cref{definition:grid-slice} for a formal definition. We start with the heavy crossing job (the blue horizontal segment marked by ``B''). From this heavy crossing job, we expand the region outwards as necessary. In this example, we expanded to the right, seeing two highlighted jobs (the green horizontal segments marked by ``G'') until we saw a job with low enough weight intersecting a grid endpoint (these job segments are colored in brown and crossed). We delete such brown jobs, and use the grid endpoints they intersected to define the region (outlined in purple and annotated by ``new region''). Useless jobs (pictured in yellow) are then deleted. Later, we sparsify the region.}
\label{fig:sparse}
\end{figure}

\paragraph{Snapping dense ranges.} Recall that not all of the time ranges that we are modifying from $RANGES(Q)$ were part of the region. In particular, there are the time ranges originally in $RANGES(Q)$ other than $\mathcal{R}_{cross}$, as well as the time range in $\mathcal{R}_{cross}$ to the left of the region, and to the right of the region. 
We call these remaining time ranges our \emph{dense ranges} because they may contain many jobs. Note how all dense range are now completely contained within $Q_L$ or $Q_R$.
Ideally, we assign dense ranges to $RANGES(Q_L)$ or $RANGES(Q_R)$. 
However, the remaining dense time ranges have one remaining potential issue, that their endpoints may not align with $Z(Q_{child})$ even though they align with $Z(Q)$. For an example of this issue, see \cref{fig:snap}. The core of this problem is that these dense time ranges correspond to time ranges we would like to delegate to children of $Q$ (i.e., add to $RANGES(Q_L)$ and $RANGES(Q_R)$). However, there is the requirement that time ranges delegated to $RANGES(Q_L)$ and $RANGES(Q_R)$ must have endpoints in $Z(Q_L)$ and $Z(Q_R)$, respectively. Therefore, we have to modify the dense ranges so they align with the grid endpoints of one of $Q$'s children. It is tempting to naively ``snap'' the endpoints of these time ranges inward to the nearest grid endpoints of $Z(Q_{child})$, meaning to slightly contract the endpoints of the time ranges inward so they align with $Z(Q_{child})$. 
Unfortunately, this might result in some jobs being ignored in the process (as illustrated in \cref{fig:snap}); a cell does not consider jobs which are not within a given range. If these ignored jobs have non-negligible total reward, ignoring them can result in a poor solution.
In the stage ``snapping dense ranges'' we detail a more involved contraction-like snapping process that contracts inwards similar to our argument for expanding outwards from heavy crossing jobs when we determined sparse ranges. In our contraction-like snapping process, we convert some of the beginning and end of the dense range into sparse ranges, so we do not need to delete some of the high-reward jobs that we would need to delete with naive snapping. In the stages from ``using essential jobs'' to ``resolving leafs'', we detail how to apply modifications to fulfill the required properties and how to analyze the contraction process with charging arguments.

\begin{figure}[htbp]
\centerline{\includegraphics[width=0.9\textwidth]{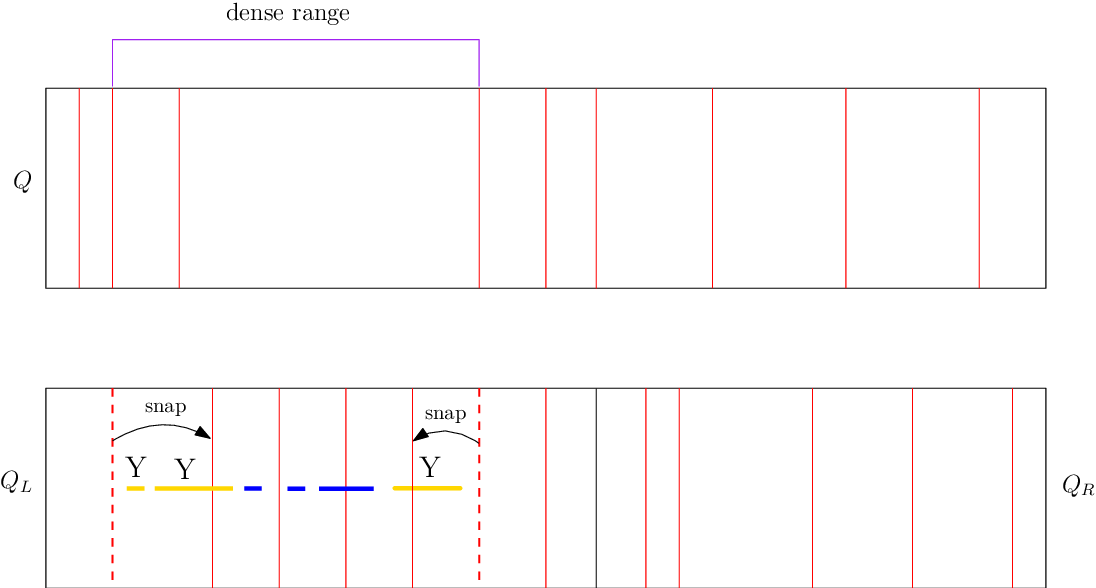}}
\caption{This example illustrates why the snapping we perform has to be done with care. The horizontal segments in this figure represent jobs. We show an initial dense range (outlined in purple) with endpoints in $Z(Q)$. With dashed vertical lines, we show where these endpoints are in $Q_L$. Importantly, they are not aligned with $Z(Q_L)$, i.e., the vertical dashed lines do not belong to $Z(Q_L)$. However, our structure \emph{requires} that dense ranges align with $Z(Q_{child})$, so we must address this. If we were to naively snap the endpoints of the dense range inwards to the endpoints of $Z(Q_L)$, then we would need to delete some jobs (these deleted jobs are colored in yellow and marked by ``Y''), while some other jobs would not be affected (like the remaining jobs in this example, those colored in blue). While this naive snapping may be fine in some cases, it will incur significant loss in cases in which the ``Y'' jobs have large weight. Notice that naively snapping outward to define a new region corresponding to the purple one is not a solution neither, as this could cause the dense time range to overlap with a previously selected sparse time range. Having overlapping ranges can cause us to choose intersecting jobs, and thus an invalid solution. Thus, we detail a more comprehensive manner of dealing with snapping.}
\label{fig:snap}
\end{figure}

Consider an arbitrary unaligned dense time range $U$. Ideally, we would ``snap'' the endpoints of $U$ inward to the nearest grid point of $Z(Q_{child})$ (i.e. move the left endpoint of $U$ to the closest grid point of $Z(Q_{child})$ to its right, and the right endpoint of $U$ to the closest grid endpoint of $Z(Q_{child})$ to its left). However, doing so may force us to delete a job in $OPT'$ that is too valuable (as we would have to delete jobs that overlap the section of $U$ that was snapped inwards). So, we will handle $U$ differently.
Without loss of generality, suppose we want to ``snap'' inward the left endpoint of $U$ to align with $Z(Q_{child})$. Doing so may leave some jobs outside the snapped range. We define the cost of snapping as the total weight of jobs that were previously contained within the range but are no longer completely contained within after snapping.
If immediately snapping inward the left endpoint to the nearest grid point of $Z(Q_{child})$ would cost at most $\frac{2\eps}{\log^2(n)} w(OPT(Q))$, we do that immediately.
Otherwise, this snapping step would cost more than $\frac{2\eps}{\log^2(n)} w(OPT(Q))$, implying that there is a job that overlaps with the grid endpoint of $Z(Q_{child})$ to the right of $U$'s left endpoint (all other jobs we are forced to delete are strictly inside a slice of $Z(Q_{child})$ and thus have total weight $\le \frac{\varepsilon^2}{\log^2(n)} w(OPT(Q_{child})) \le \frac{\varepsilon^2}{\log^2(n)} w(OPT(Q))$) and has weight of at least $\frac{2\eps}{\log^2(n)} w(OPT(Q)) - \frac{\varepsilon^2}{\log^2(n)} w(OPT(Q)) \ge \frac{\eps}{\log^2(n)} w(OPT(Q))$.
We mark that job as ``essential''.
\\
Then, we look to the right of that essential job and examine the job that overlaps the next grid endpoint to the right in $Z(Q_{child})$. If this job has weight at most $\frac{2\varepsilon}{\log^2(n)} w(OPT(Q))$, we delete it and draw a boundary.
Otherwise, we mark it as ``essential'' and continue (following the same process). When we are done, we have a prefix of the dense time range that contains some number of ``essential'' jobs and other jobs, and then a border at a grid endpoint of $Z(Q_{child})$. The final ``snapping'' where we deleted jobs to add the split point had cost $\le \frac{2\varepsilon}{\log^2(n)} w(OPT(Q))$. In essence, these essential jobs are the collection of jobs that were too valuable for us to delete them when we were undergoing the snapping process.

\paragraph{Using essential jobs.}
We will assume this dense time range had a snapping budget and charge the aforementioned final snapping cost to that.
Now, we just need to find a way to use the time range prefix with the essential jobs. We delete all jobs that are not essential in this time range with a similar argument as earlier, that such a job is completely contained in a grid slice with total weight of jobs $\le \frac{\varepsilon^2}{\log^2(N)} w(OPT(Q))$ which is at most a factor of $\varepsilon$ of an essential job partially contained within the slice (and it is partially contained within at most two slices). Then, we convert this time range of essential jobs (with potentially many such essential jobs) into sparse time ranges in the same way as done previously during the ``sparsifying regions'' step. We do so by grouping the jobs according their index modulo $\frac{1}{\varepsilon}$, deleting the group with least total weight, and drawing a border at the grid endpoint of $Z(Q_{child})$ contained within the deleted jobs. Again, by our process we know all such essential jobs must contain a grid endpoint. This creates sparse time ranges with endpoints in $Z(Q)\cup Z(Q_{child})$ and our dense time range has endpoints in $Z(Q_{child})$ so they are both valid.

\paragraph{Financing a snapping budget.} Finally, we need to show that we actually have a sufficient snapping budget. Consider our dense time ranges. We may adjust their endpoints in other scenarios, but we only split dense time ranges into more dense time ranges when they are a crossing range. As only one there is only one crossing range at every cell $Q$, if we give the newly created range a snapping budget of $O(\frac{\varepsilon}{\log(n)} w(OPT(Q)))$, then we do not lose more than $O(\varepsilon w(OPT))$ in total. We showed above that each dense range will use at most $O(\frac{\varepsilon}{\log^2(N)} w(OPT(Q)))$ of its snapping budget at each level, so it will will use $O(\frac{\eps}{\log(n)} w(OPT(Q)))$ in total and stay within its allotted budget of $O(\frac{\eps}{\log(n)} w(OPT(Q)))$ throughout.

\paragraph{Resolving leafs.} Finally, when we have a time range but it cannot be delegated to $Q_{child}$ because $Q_{child}$ does not exist, note there is only possibly room for one job in the range (as by definition of the decomposition of $Q$, no job starts or ends in this range). So we simply consider this range as part of $SPARSE(Q)$.

This now concludes the proof by providing a way to convert $OPT$ to a solution $OPT'$ that obeys our structure and is a $(1+\varepsilon)$-approximation of $OPT$.
\end{proof}

\subsection{Efficiently Approximating Sparse Solutions}
\label{sec:dp-sparse}
Now, we focus on designing an  efficient algorithm for approximating optimal solution in a sparse  time range.

\lemmadpsparse*
\begin{proof}

To prove this claim, we use a dynamic programming approach where our state is the total weight of jobs selected so far. The dynamic programming table $\earliest$ contains for state $X$, $\earliest[X]$, the earliest/leftmost point in time for which the total weight of $X$ is achieved. If we implement this dynamic programming directly, it would require space proportional to the value of solution (which equals the largest possible $X$). Our goal is to avoid this time/space dependence. To that end, we design an \emph{approximate} dynamic program that requires only poly-logarithmic dependence on the value of an optimal solution. We derive the following technical tool to enable this:

\begin{claim}
\label{claim:approximate-sum}
Let $S$ be the set of all powers of $(1+\varepsilon/K)$ not exceeding $W$, i.e., $S = \{(1 + \eps/K)^i \mid 0 \le i \le \lfloor \log_{1 + \eps/K}{W} \rfloor\}$.
Consider an algorithm that supports the addition of any $K$ values (each being at least $1$) where the sum of these $K$ values is guaranteed to not exceed $W$. The values are added one by one. After each addition step, the algorithm maintains a running-total by rounding down the sum of the new value being added and the previous rounded running-total to the nearest value in $S$. Then, the final running-total of the algorithm is a $(1+\varepsilon)$ approximation of the true sum of those $K$ values.
\end{claim}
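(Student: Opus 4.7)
The plan is to prove the claim by induction on the number of insertions, tracking how the rounded running total compares to the true running total. Let $U_i = \sum_{j=1}^{i} v_j$ denote the true partial sum and let $T_i$ denote the algorithm's rounded running total after the $i$-th insertion, with $T_0 = 0$. The key observation is that every quantity the algorithm ever rounds lies in $[1, W]$, so it falls inside the range covered by $S$: the lower bound $T_{i-1} + v_i \ge v_i \ge 1 = (1+\eps/K)^0$ holds because each inserted value is at least $1$, and the upper bound $T_{i-1} + v_i \le W$ holds because rounding down is monotone, so $T_{i-1} \le U_{i-1}$ and hence $T_{i-1} + v_i \le U_i \le W$. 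Thus the rounding operation is well-defined.

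Next, I would establish the per-step loss bound. If $x \in [1, W]$ and we round $x$ down to the nearest element of $S$, we obtain some $(1+\eps/K)^j$ with $(1+\eps/K)^j \le x < (1+\eps/K)^{j+1}$, so the rounded value is at least $x/(1+\eps/K)$. Applying this to $x = T_{i-1} + v_i$ gives the recurrence
\[
T_i \;\ge\; \frac{T_{i-1} + v_i}{1 + \eps/K}.
\]

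Unrolling this recurrence is the main calculation. Multiplying the inequality at step $i$ by $(1+\eps/K)^{i-1}$ and summing telescopically yields
\[
T_K \cdot (1+\eps/K)^K \;\ge\; \sum_{i=1}^{K} v_i \cdot (1+\eps/K)^{i-1} \;\ge\; \sum_{i=1}^{K} v_i \;=\; U_K,
\]
so $T_K \ge U_K / (1+\eps/K)^K$. Using the standard estimate $(1+\eps/K)^K \le e^{\eps}$ and, for $\eps$ sufficiently small, $e^{\eps} \le 1+\eps$ after rescaling $\eps$ by a constant (or, equivalently, by initially choosing the grid base to be $(1+\eps/(2K))$ so that $(1+\eps/(2K))^K \le e^{\eps/2} \le 1+\eps$ for $\eps \le 1$), we conclude that $T_K$ is a $(1+\eps)$-approximation of $U_K$, as desired.

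The proof is largely routine; the only subtle point is the unrolling of the multiplicative rounding error across $K$ steps and the use of the inequality $(1+x/K)^K \le e^x$ to absorb the per-step losses into a single $(1+\eps)$ factor. Care is also needed to verify that all rounded values stay within $S$ so that the rounding operation remains valid throughout the process; this is handled by the monotonicity argument above and by the hypotheses $v_i \ge 1$ and $\sum v_i \le W$.
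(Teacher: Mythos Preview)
Your proof is correct, but it takes a different route from the paper's. You track a \emph{multiplicative} per-step error, obtaining $T_i \ge (T_{i-1}+v_i)/(1+\eps/K)$ and then compounding across $K$ steps to get $T_K \ge U_K/(1+\eps/K)^K$. The paper instead tracks an \emph{additive} per-step loss: writing $\ell_i$ for the amount lost to rounding at step $i$, it observes that $\ell_i/\CUR''_i \le \eps/K$ (where $\CUR''_i$ is the rounded total after step $i$), and since $\CUR''_i \le SOL \le OPT$ this gives $\ell_i \le (\eps/K)\,OPT$. Summing over $K$ steps yields $SOL \ge OPT - \eps\,OPT = (1-\eps)\,OPT$ directly, with no appeal to $(1+\eps/K)^K \le e^\eps$ and no rescaling of $\eps$.

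Your approach is just as valid; the only cost is the slightly awkward endgame. In fact the rescaling you propose is unnecessary: from $T_K \ge U_K/e^\eps$ and the elementary inequality $e^\eps \le 1/(1-\eps)$ for $\eps \ge 0$ (check that $(1-\eps)e^\eps$ has derivative $-\eps e^\eps \le 0$ and value $1$ at $\eps=0$), you already get $T_K \ge (1-\eps)\,U_K$, matching the paper's conclusion exactly. So the two arguments give the same constant; the paper's additive bookkeeping just arrives there one step sooner.
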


\begin{proof}
Consider the sequence of $K$ values and thus $K$ additions. Let $OPT$ denote the exact sum of the $K$ values. Let $SOL$ denote the running-total we achieve at the end of our additions. Finally, let $\CUR_i$ denote the running-total as we do these additions at the beginning of stage $i$, which must be in $S$ at the end of every stage. We prove that $SOL \ge (1-\varepsilon)OPT$ and thus $SOL$ is a $(1+\varepsilon)$ approximation of $OPT$. Initially, $\CUR_0=0$. Each step, we add some value $v_i$ to $\CUR_i$. This new value $\CUR'_i = \CUR_i + v_i$. Then, we round $\CUR'_i$ to the nearest power of $(1+\varepsilon/K)$ and denote this as $\CUR''_i$. We call the amount we lose by rounding down the loss $\ell_i = \CUR'_i - \CUR''_i$. For the next stage, we denote $\CUR_{i+1}=\CUR''_i$. Note that 
\[
    \frac{\ell_i}{OPT} \le \frac{\ell_i}{SOL} \le \frac{\ell_i}{\CUR''_i} = \frac{\CUR'_i - \CUR''_i}{\CUR''_i} \le \frac{\varepsilon}{K}
\]
or, otherwise, we would have rounded to a different power of $(1+\varepsilon/K)$. Thus, $\ell_i \le OPT (\frac{\varepsilon}{K})$. Note that $SOL=\CUR_{K}$ and $\CUR_{K} + \sum_{i}\ell_i = OPT$. As such,
\[
    SOL = OPT - \sum_i \ell_i \ge OPT - K \rb{OPT \rb{\frac{\varepsilon}{K}}} = OPT - \varepsilon \cdot OPT = (1 - \varepsilon) OPT.
\]
\end{proof}

Inspired by \cref{claim:approximate-sum}, we now define a set of states $S$ as follows. Our states will represent powers of $(1+\varepsilon/K)$ from $1$ to $Kw$, and hence
\[
    |S| = O\rb{\frac{\log(Kw)}{\log(1+\varepsilon/K)}} = O\rb{\frac{K\log(Kw)}{\varepsilon}}. 
\]

Using this, we create a set of states $S$ which corresponds to powers of $(1+\varepsilon/K)$ from $1$ to $Kw$ (and $0$). We want to maintain for each of these states, approximately the smallest prefix with at most $K$ jobs where we could get total weight approximately equal to $s \in S$. To do this, we loop over the states in increasing order of value. Suppose the current state corresponds to having approximate weight $s \in S$ and $\earliest[s]$ is the shortest prefix we have that has approximate weight $s$. Then we loop over all rounded weights $v \in \{(1+\varepsilon)^i\}$. There are $O(\nicefrac{\log(w)}{\varepsilon})$ such $v$. For each $v$, set $\mathcal{V}$ to be the value of $s+v$ rounded down to the nearest power of $(1+\varepsilon/K)$. Then, if the earliest ending time of a job with rounded weight $v$ that starts after $\earliest[s]$ is less than $\earliest[\mathcal{V}]$, we update $\earliest[\mathcal{V}]$ to that ending time. We can calculate the earliest ending time of any job, with a particular rounded weight, starting after some specified time in $O(\log(n))$ time by maintaining a balanced binary search tree (as done in \cref{section:dynamic-unit}) for each of the $O(\nicefrac{\log(w)}{\varepsilon})$ rounded weights (to powers of $(1+\eps)$). This negligibly adds $O(\log(n))$ time to each update. In total, this solution runs in $O(\frac{K \log(n) \log(w) \log(Kw)}{\eps^2})$ time.

As we can ignore all jobs with weight less than $\nicefrac{\eps}{n} w_{max}$, then we can only focus on jobs with weights in $[\frac{\eps w_{max}}{n},w_{max}]$ and effectively modify $w$ to be $\nicefrac{n}{\eps}$ by dividing all weights by $\frac{\eps w_{max}}{n}$.
This enables us to use $w=O(\nicefrac{n}{\eps})$ in the above runtime bound. As such, this algorithm runs in $O(\frac{K \log(n) \log(n/\eps) \log(K n /\eps)}{\eps^2})$ time.

To show the algorithm's correctness, observe that since we always round down, we will not overestimate the cost. Moreover, we show with \label{claim:approximate-adding} that any set of $K$ additions will be within a factor of $(1+\varepsilon)$ from its true value.

\end{proof}

\begin{corollary}
For our application, we let $K=\frac{1}{\varepsilon}$. As such, we have a $(1+\varepsilon)$-approximation algorithm of the minimum solution with at most $\frac{1}{\varepsilon}$ jobs that runs in time
\[O\rb{\frac{K \log(n) \log(n/\eps) \log(Kn/\eps)}{\eps^2}}=O\rb{\frac{\log(n) \log^2(n/\eps)}{\eps^3}}=O\rb{\frac{\log^3(n)}{\eps^3}}.
\]
\end{corollary}

\subsection{Dynamically Maintaining $Z(Q)$ -- Proof of \cref{lemma:zq-maintain}} \label{subsection:maintaining-z}
Now, we describe how to maintain $Z(Q)$, to intelligently subdivide the cells with guarantees as restated below:
\lemmazqmaintain*
\begin{proof}
Suppose how all jobs are rounded down to powers of $2$. Note how for a cell $Q$, let $w_{max}(Q)$ correspond to the reward of the job with the largest reward contained completely within $Q$. Clearly, $OPT(Q) \ge w_{max}(Q)$.
Moreover, by discarding all jobs with weight less than $\nicefrac{\eps}{n} \cdot w_{max}(Q)$, we discard jobs with total weight at most $\eps \cdot w_{max}(Q) \le \eps \cdot OPT(Q)$.
Accordingly, we focus just on jobs with weights in range $[\nicefrac{\eps}{n} \cdot w_{max}(Q), w_{max}(Q)]$. As these weights have been rounded to powers of $2$, there are only $\lceil \log(\frac{w_{max}(Q)}{\nicefrac{\eps}{n} w_{max}(Q)}) \rceil = O(\log(n/\eps))$ distinct remaining weights. Moreover, we assume that $\nicefrac{1}{\eps}\le n$, as otherwise we can obtain a better algorithm by simply rerunning the classical static algorithm for each update. Altogether, this implies that it suffices to consider $O(\log(n))$ distinct weights.

In our approach, we consider each distinct weight independently, enabling us to consider a $Z^{i}(Q)$ for only jobs with rounded weight $2^i$. That is, $Z^{i}(Q)$ is computed with respect to a set of jobs all having the same weight, which enables us to treat $Z^{i}(Q)$ computation as if it was performed for the unweighted variant. At the end, we let $Z(Q)$ to be the union over the $O(\log(n))$ different $Z^{i}(Q)$, giving us a $Z(Q)$ with our desired guarantees.
This approach is particularly desirable, as we will show how for a particular fixed weight, i.e., we consider the unweighted variant, we can use ideas very similar to those discussed in \cref{section:dynamic-unit} to obtain the $Z^{i}(Q)$. Expanding our scope, for each rounded weight $2^i$, let us maintain a constant-factor approximation of the unweighted problem using the border-based algorithm of \cref{theorem:unweighted-M=1}. In other words, we run the algorithm of \cref{theorem:unweighted-M=1} with $\eps'=O(1)$ such that it has update time $O(\log(n))$ and maintains an $O(1)$-approximation of the unweighted interval scheduling problem. 

Consider $SOL^{i}$ to be \textbf{the set of points} corresponding to the border-based $O(1)$-approximation when only considering jobs of rounded weight $2^i$. In particular, $SOL^{i}$ contains all start/endpoints of the selected jobs by the approximation, as well as all borders. $SOL^{i}[L,R]$ contains all points in $SOL^{i}$ within $[L,R]$. We define $OPT([L,R],i)$ as the optimal \textbf{number of jobs} one can schedule when only considering jobs with rounded weights $2^i$ when only considering jobs fully contained within $[L,R]$. 

\begin{claim} \label{claim:opt-sol}
For all $i,L,R$, it holds that: $OPT([L,R],i) \le |SOL^{i}[L,R]|$
\end{claim}
\begin{proof}
Recall that the border-based approximation algorithm maintains a set of borders and finds the optimal solution within each border chunk. The optimal solution within is calculated by using the greedy earliest-ending algorithm. In general, consider any job $J$. This job $J$ must contain an endpoint of a job in the approximately chosen solution, or it must contain a border. If this were not the case, there are only two possibilities: (i) $J$ is completely contained within a job chosen by the approximate solution, or (ii) $J$ does not intersect any job chosen by the approximate solution. For case (i): this is impossible as the greedy earliest-ending algorithm would not have chosen the job that completely contains $J$. For case (ii): this is impossible because $J$ could be added to the solution within the corresponding border chunk, and this is impossible because the solution within each border chunk is optimal. As each job $J$ must contain a point of $SOL^{i}[L,R]$, it must hold that $OPT([L,R],i) \le |SOL^{i}[L,R]|$.
\end{proof}

Also note a similar bound in the opposing direction:

\begin{claim} \label{claim:sol-ub}
For all $i,L,R$ it holds that: $|SOL^i[L,R]| \le 3 \cdot OPT([L,R],i) + 3$
\end{claim}
\begin{proof}
From $SOL^i[L,R]$, ignore the at most two points corresponding to endpoints of jobs that are only partially within $[L,R]$, and ignore the first remaining point if it corresponds to a border (for a total of ignoring at most 3 points).
Of the remaining points in $SOL^i[L,R]$, they all correspond to endpoints of jobs fully within $[L,R]$, or are a border following such a job. Note how the number of these jobs with points corresponding to them in $SOL^i[L,R]$ must be at most $OPT([L,R],i)$ by definition. Accordingly, we will charge the two points from each job (and its associated border if there is one) to a different job corresponding from $OPT([L,R],i)$, for a total of at most 3 points of $SOL^i[L,R]$ being charged per job in $OPT([L,R],i)$.
\end{proof}
All $SOL^{i}$ can be maintained with update time $O(\log(n))$ because we only update one unweighted $O(1)$-approximation per job insertion or deletion.
We compute each $Z^i(Q)$ for a cell $Q$ corresponding to time range $[L,R]$, by taking $O(X \log(n))$ quantiles of $SOL^{i}[L,R]$. Each of these $Z^i(Q)$ can be achieved with $O(X \log(n))$ walks down a balanced binary search tree, resulting in $O(X \log^2(n))$ time. We define $Z(Q)$ as the union of the $O(\log(n))$ different $Z^i(Q)$. In total, $Z(Q)$ is obtained in $O(X \log^3(n))$ time and $|Z(Q)| = O(X \log^2(n))$.

Finally, the optimal solution within any grid slice, ignoring jobs with weight less than $\eps \cdot w(OPT(Q))$, is upper-bounded by the union of the independent optimal solutions for each rounded weight. Within each grid slice of any $Z^i(Q)$, the optimal solution of jobs using weight $2^i$ is upper-bounded by $O(\frac{2^i |SOL^{i}[l_Q,r_Q]|}{X \log(n)}) = O(\frac{2^i OPT([l_Q,r_Q],i)}{X \log(n)})=O(\frac{w(OPT(Q))}{X \cdot \log(n)})$ following from \cref{claim:opt-sol}, taking $X \log(n)$ quantiles of $SOL^{i}[l_Q,r_Q]$ to form $Z^i(Q)$, and \cref{claim:sol-ub}.
Accordingly, bounding over the $O(\log(n))$ different $Z^i(Q)$, the optimal solution within each grid slice is at most $O(\nicefrac{w(OPT(Q))}{X})$.

\end{proof}

\subsection{Combining All Ingredients -- Proof of \cref{theorem:weighted-dynamic-M=1}}
\label{sec:combining-ingredients}
Now, we put this all together to get a cohesive solution that efficiently calculates an approximately optimal solution of the desired structure. When we handle an insertion/deletion, we make an update to the corresponding balanced binary search tree $T$. Recall that we use a balanced binary search tree such as a red-black tree so that $T$ has depth $O(\log(n))$ and there are $O(\log(n))$ rotations per update. For the $O(\log(n))$ cells $Q$ corresponding to nodes in $T$ affected by rotations, we will recompute aspects of $Q$ such as $Z(Q)$ and all $DP(Q,S)$.
For each such cell $Q$, we will compute a sparse solution corresponding to each segment formed by considering all pairs of grid endpoints $Z(Q) \cup Z(Q_L) \cup Z(Q_R)$ and a dense solution for each segment $S$ formed by pairs of endpoints $Z(Q)$ denoted as $DP(Q,S)$.

To compute all sparse solutions, we use $O(|Z(Q) \cup Z(Q_L) \cup Z(Q_R)|^2)$ calls to our algorithm from \cref{lemma:sparse-sol-approx} resulting in $O(|Z(Q) \cup Z(Q_L) \cup Z(Q_R)|^2 (\frac{\log^3(n)}{\eps^3}) = O(\frac{\log^8(n)}{\eps^4} (\frac{\log^3(n)}{\eps^3})) = O(\frac{\log^{11}(n)}{\eps^7})$ running time. To obtain this complexity, we use the upper-bound $|Z(Q)| = O(X \cdot \log^2(n))$ from \cref{lemma:zq-maintain} and the fact that we let $X = \tfrac{\log^2 n}{\eps^2}$ in the beginning of \cref{sec:existence-of-nearly-optimal-sparse-structure}.

To compute all $DP(Q,S)$, we build on the proof of \cref{lemma:sol-structure}. Namely, from the proof of \cref{lemma:sol-structure} a $(1+\eps)$-approximate solution is maintained by dividing $S$ into sparse, i.e., $SPARSE(Q)$, and dense segments of $Q_L$ and $Q_R$, i.e., $RANGES(Q_L)$ and $RANGES(Q_R)$.
We update our data structure from bottom to top. Hence, when we update $DP(Q_L)$ and $DP(Q_R)$ it enables us to learn approximate optimal values gained from a set $RANGES(Q_L)$ and $RANGES(Q_R)$. Thus, to calculate $DP(Q,S)$ we consider an interval scheduling instance where jobs start at a grid endpoint of $S$ and end at a grid endpoint of $S$.
In this instance, jobs correspond to all the sparse segments of $Z(Q),Z(Q_L),Z(Q_R)$ and all the dense segments of $Z(Q_L),Z(Q_R)$. 
We compute this dense segment answer for all dense segments $Z(Q)$ in $O(|Z(Q) \cup Z(Q_L) \cup Z(Q_R)|^3)=O\rb{\frac{\log^{12}(n)}{\eps^{6}}}$ time, with a dynamic program where the state is the starting and ending point of a segment and the transition tries all potential grid endpoints to split the range at (or just uses the interval from the start to the end). For each update, we update $O(\log(n))$ cells affected by rotations by recomputing the optimal sparse solutions for segments and the respective $DP(Q,S)$. Finally, at the beginning of each update, we use $O(\log(n))$ calls to our algorithm for computing $Z(Q)$ from \cref{subsection:maintaining-z} with $X = \frac{\log^2(n)}{\eps^2}$ in $O(X \cdot \log^3(n))$ time for $O(\frac{\log^5(n)}{\eps^2})$ time for each cell. As such, our total update time is $O\rb{\log(n) \cdot (\frac{\log^{11}(n)}{\eps^7} + \frac{\log^{12}(n)}{\eps^{6}} + \frac{\log^6(n)}{\eps^2})} = O\rb{\frac{\log^{12}(n)}{\eps^7} + \frac{\log^{13}(n)}{\eps^{6}}}$.

\section{LCA for Interval Scheduling on a Single Machine}\label{section:local}
In this section we design local algorithms for interval scheduling, using techniques developed in \cref{section:dynamic-unit}. 
While our previous algorithm is desirable in that it gives an efficient and simple algorithm to efficiently partition the time dimension and maintain an approximate solution, it requires bookkeeping (our partitioning relies on the past history of requests made before the update). We design local algorithms for interval scheduling that do not require knowledge of such bookkeeping. We need some way to probe information about ``similar'' intervals: as such, we will assume probe-access to an oracle that gives information about other intervals. 
In contrast to the dynamic setting, our oracle has no dependence on $\eps$ and thus can be used for any $\eps$. 
An LCA in this setting will answer queries of the form $LCA(S,\varepsilon)$, where we are given a set of intervals $S$ and approximation parameter $\varepsilon$, and on {\em query interval $I \in S$,  we must answer 
 whether $I$ is in our $(1+\varepsilon)$-approximation} in such a way that is consistent
with all other answers to  queries to the $LCA$ (with the same $\eps$). 
Generally, we develop a partitioning method that does not require much bookkeeping while attempting this, such that our notion of leveraging locality extends beyond any particular computation model.
While achieving this, we assume our LCA is given probe-access to a \emph{successor oracle} that answers what we will call \emph{successor probes} or $successor(x)$: ``\emph{What is the interval with the earliest endpoint, of those that start after point $x$?}'' This is a natural question for obtaining information about local intervals in this setting. 
In particular, given that an interval is in our solution and ends at some point $x$, $successor(x)$ would be the next interval chosen by the classic greedy algorithm (for the unit-weight setting).
Such an oracle could be implemented with $O(\log n)$ time updates and queries (in a manner similar to how $T_{\rm{all}}$ is used in \cref{theorem:unweighted-M=1}). 
Since our LCA outputs different solutions for different choices of $\eps$, there is a strong sense in which an oracle that is independent of $\eps$ (such as the oracle we utilize) is unable to maintain nontrivial bookkeeping (meaning the oracle could not give the LCA nontrivial information about the solution).
We focus on the unit-reward interval \MaxIS problem ($M=1$). Our emphasis in doing so is not the specific problem instance or probe-model (e.g., in \cref{section:scheduling} we modify our probe-model and show an algorithm that works for multiple machines), but instead emphasizing a method of partitioning over time that utilizes locality and limited bookkeeping. %

At a high level, our novel partitioning method can be viewed as a rule-based approach that uses few probes to identify whether any given interval is in our solution. This approach is oblivious to query order. 

To illustrate how to employ successor probes, we will first design a probe-based algorithm, denoted by \ProbeBased. Then, we will describe an exact global algorithm.
We will modify this (exact) global algorithm to an approximate global one by  partitioning time into independent regions that enable a sense of locality. Finally, we will introduce an LCA motivated by the approximate global algorithm.

As mentioned before, suppose we have access to the \emph{successor probe} or $successor(x)$: ``What is the interval with the earliest endpoint, of those that start after point $x$?''
Note that access to such a probe can be provided in $O(\log(n))$ update and probe time. Due to the limited capabilities of probing without bookkeeping, our LCA results will require assuming a bounded coordinate system as was requited for results in the prior work of \cite{henzinger2020dynamic}. In particular, our LCA results assume that all jobs have length at least $1$ and the starting/ending times are withing $[0,N]$.
\begin{lemma}\label{lemma:query-based-opt}
There exists an algorithm (that we call \ProbeBased) that gives an optimal unweighted solution within some range $[L,R]$ with $|OPT|+1$ successor probes.
\end{lemma}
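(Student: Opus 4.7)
The plan is to adapt the classic earliest-endpoint-first greedy algorithm for interval scheduling so that it uses only successor probes, and then to argue correctness and probe count. The \ProbeBased procedure would initialize a pointer $x$ to a value just below $L$ (so that $successor(x)$ returns the interval in $\cJ$ with the earliest endpoint among those starting at or after $L$). We then loop: call $J \gets successor(x)$, and if $J$ exists with endpoint at most $R$, add $J$ to the solution $S$ and update $x$ to the endpoint of $J$; otherwise, terminate and return $S$.

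For correctness, I would appeal to the standard exchange argument for the classic greedy. Conditioned on having committed to the prefix of chosen intervals (ending at the current value of $x$), the remaining task is to pick an optimal set of intervals contained in $[x, R]$. Since the interval $J$ returned by $successor(x)$ has the earliest endpoint among those starting after $x$, a standard swap shows that some optimal completion includes $J$. Iterating this argument shows that the greedy's output has the same size as the optimal solution restricted to intervals contained in $[L, R]$.

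For the probe count, observe that every call to $successor(x)$ either (i) returns an interval contained in $[L,R]$, which we add to $S$, or (ii) returns either no interval at all or an interval whose endpoint exceeds $R$, at which point the algorithm terminates. Case (i) can occur at most $|OPT|$ times, since the algorithm outputs exactly $|OPT|$ intervals by the correctness argument above. Case (ii) occurs exactly once (the terminating probe). Thus the total number of successor probes is at most $|OPT|+1$.

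The main subtlety is the boundary condition at $L$: we need the first probe to capture any interval starting exactly at $L$, which is handled by seeding $x$ just to the left of $L$ (e.g., using $x = L - 1$ if coordinates are integral, or more abstractly by asking the oracle for the earliest-endpoint interval starting at or after $L$). The only other thing to verify is that the terminating condition is correctly detected after a single extra probe, which follows immediately from the definition of $successor$: if its result either does not exist or extends beyond $R$, then no interval starting at or after $x$ fits within $[L,R]$, so we may safely halt. These are routine details rather than genuine obstacles, so no step is expected to be particularly hard.
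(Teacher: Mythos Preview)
Your proposal is correct and follows essentially the same approach as the paper: simulate the classical earliest-endpoint greedy via successor probes, appealing to the standard exchange argument for correctness and observing that each probe either adds an interval to the solution or terminates. Your treatment is in fact a bit more careful than the paper's about the boundary at $L$ and the terminating probe, but the underlying idea is identical.
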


\begin{proof}
We now describe \ProbeBased algorithm.

It is a classic result that an optimal solution for unweighted interval \MaxIS is achieved by greedily choosing the interval with earliest ending point among those that start after the last chosen ending point.
We use such probes to easily simulate a greedy algorithm for the optimal solution within range $[L,R]$. We start by making a successor probe $successor(L)$. If this interval has an ending point at most $R$, we let that interval be the first one in our optimal solution. Otherwise, the optimal solution is of size zero. Now, we calculate the optimal solution within the range $[ending\_point(successor(L)), R]$ in the same way. Thus, we repeatedly make successor probes at the ending point of the last interval we have chosen.
\end{proof}

Moving forward, we prove an LCA in the unit-reward \MaxIS setting:

\theoremunweightedMlocal*
\begin{proof}

\textbf{Hierarchically Simulating Greedy.}
We aim to hierarchically simulate the greedy algorithm so that it will be easier to adapt towards an LCA. To do this, we utilize a binary tree over the integer points in $[0,N]$. For a node $Q$ in this binary tree, its left child is denoted by $Q_L$ and right child denoted by $Q_R$. We use $Q_{mid}$ to denote the midpoint of the interval corresponding to $Q$. The intervals corresponding to $Q_L$ and $Q_R$ are such that they divide $Q$ exactly in half at its midpoint $Q_{mid}$.
We say that an interval $J$ is assigned to the node $Q$ in the binary tree where $J$ starts in the range contained by $Q_L$ and ends in the range contained by $Q_R$. An equivalent characterization is that $J$ is assigned to the \emph{largest node} $Q$, i.e., $Q$ corresponding to the largest interval, where $J$ contains $Q_{mid}$. As all intervals assigned to a node $Q$ share a common point $Q_{mid}$, at most one of them can be in our solution. In our hierarchical simulation, we decide at the node $Q$ which (if any) of the intervals assigned to it will be in our solution. To accomplish this, we define $f(Q,earliest)$ as a function that computes the interval scheduling problem within the range covered by $Q$, assuming we cannot use any interval that starts before the time $earliest$. Our function $f$ will decide which intervals are in our solution, and it will return the end of the last interval chosen in the range covered by $Q$. As such, calling $f(Q_{\rm{root}},0)$ corresponds to calculating the global solution.

\paragraph{Description of \cref{alg:alg-global-exact}.}
We now provide an algorithm for globally computing $f(Q,earliest)$ as \cref{alg:alg-global-exact}. This algorithm simulates the classic greedy approach for calculating the exact unweighted interval \MaxIS. 
Intuitively, this approach proposes a new way of visualizing and computing this greedy process that will be helpful for obtaining fast LCA. 
We simulate the greedy on intervals in $Q_L$ to find the last ending time it will select before $Q_{mid}$, then we determine if the greedy chooses an interval $I_{mid}$ that contains $Q_{mid}$, and finally we simulate the greedy on intervals within $Q_R$.

\begin{algorithm}[h]
	\SetKw{Print}{Print}
    \Input{$Q$ : a tree node, corresponding to a time-range \\
        $earliest$ : earliest starting time for future intervals}
    \Output{Finds/prints a set of non-overlapping intervals such that (1) each interval is contained in $Q$, and (2) no interval starts before $earliest$\\
    Returns ending time of last interval selected so far}
    \BlankLine
    \BlankLine
	
	$after\_left\_earliest \gets f(Q_L, earliest)$ \\
	$I_{mid} \gets$ interval after $after\_left\_earliest$ containing $Q_{mid}$ with earliest end time
	
	\uIf{$I_{mid} \ne \emptyset$ \textbf{and} no interval is contained within $I_{mid}$}{
	    $after\_mid\_earliest \gets end(I_{mid})$ \\
	    \Print $I_{mid}$
	}
	\Else{
	    $after\_mid\_earliest \gets after\_left\_earliest$
	}
	$after\_right\_earliest \gets f(Q_R,after\_mid\_earliest)$ \\
	\Return $after\_right\_earliest$

  \caption{Global, exact algorithm for
$f(Q,earliest)$
    \label{alg:alg-global-exact}}
\end{algorithm}

\begin{lemma}\label{lemma:global_exact}
\cref{alg:alg-global-exact} is a global algorithm for calculating unweighted interval \MaxIS.
\end{lemma}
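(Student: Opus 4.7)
The plan is to prove the claim by structural induction on the tree node $Q$, showing that $f(Q, \mathit{earliest})$ selects exactly the same intervals (and returns the same final endpoint) as the classical earliest-endpoint-first greedy algorithm run on the set of intervals contained in $Q$'s time range, subject to selecting no interval that starts at or before $\mathit{earliest}$. Once this simulation claim is established, optimality follows immediately from the well-known optimality of the earliest-endpoint-first greedy for unweighted interval scheduling on one machine. The base case is a leaf $Q$ corresponding to a single integer point: no interval is fully contained in $Q$ and no straddler of $Q_{mid}$ can be constructed, so the algorithm returns $\mathit{earliest}$ unchanged, matching the greedy.

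For the inductive step, fix $Q$ and assume the simulation claim for $Q_L$ and $Q_R$. The intervals contained in $Q$'s range decompose disjointly into those contained in $Q_L$, those contained in $Q_R$, and those straddling $Q_{mid}$. By the inductive hypothesis, the first line of the algorithm correctly simulates the greedy on intervals contained in $Q_L$ and returns, in $\mathit{after\_left\_earliest}$, the endpoint of the last interval selected there (or $\mathit{earliest}$ if none). A consequence I will use repeatedly is that no interval contained in $Q_L$ starts after $\mathit{after\_left\_earliest}$ and is available to the global greedy, since otherwise the $Q_L$-greedy would itself have picked it.

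The main step is to argue that lines 2--7 correctly produce the greedy's next pick $J^{*}$ (if any) after $\mathit{after\_left\_earliest}$. By the preceding paragraph, $J^{*}$ is either a straddler of $Q_{mid}$ or lies entirely in $Q_R$. The algorithm selects $I_{mid}$, the straddling candidate with the earliest endpoint, precisely when no input interval is strictly contained in $I_{mid}$. The key claim is that under this condition, $I_{mid}$ is the earliest-ending interval overall starting after $\mathit{after\_left\_earliest}$, i.e.\ $I_{mid} = J^{*}$. To see this, take any competing interval $K$ starting after $\mathit{after\_left\_earliest}$: if $K$ straddles $Q_{mid}$ then $\mathrm{end}(K) \ge \mathrm{end}(I_{mid})$ by choice of $I_{mid}$; if $K$ lies in $Q_L$ the hypothesis rules it out; and if $K$ lies in $Q_R$ with $\mathrm{end}(K) < \mathrm{end}(I_{mid})$, then $\mathrm{start}(K) > Q_{mid} \ge \mathrm{start}(I_{mid})$ forces $K$ to be contained in $I_{mid}$, contradicting the selection condition. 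Conversely, when the condition fails and some interval $K$ is contained in $I_{mid}$, a symmetric case split shows $K$ must lie entirely in $Q_R$: it cannot lie in $Q_L$ by the inductive argument above, and it cannot straddle $Q_{mid}$ without contradicting minimality of $I_{mid}$. Hence $J^{*}$ also lies in $Q_R$, and passing $\mathit{after\_mid\_earliest} = \mathit{after\_left\_earliest}$ to the right recursion is correct.

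After this treatment of the middle step, the recursive call $f(Q_R, \mathit{after\_mid\_earliest})$ by the inductive hypothesis simulates the greedy on intervals contained in $Q_R$ starting from $\mathit{after\_mid\_earliest}$. Combined with the observation that every subsequent greedy pick lies in $Q_R$ (any later straddler ends no earlier than $I_{mid}$, and intervals in $Q_L$ are exhausted), this completes the simulation and the induction. The main obstacle, and the only delicate point, is the case analysis built around the condition ``no interval is contained in $I_{mid}$''; in particular, ruling out that a strictly contained interval can itself be a straddler is what lets us defer cleanly to the $Q_R$ recursion and avoids double-counting or missing $J^{*}$.
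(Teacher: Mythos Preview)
Your proof is correct and follows the same approach as the paper: both argue that \cref{alg:alg-global-exact} simulates the classical earliest-endpoint-first greedy, so optimality follows from the greedy's known optimality. The paper's proof is a single sentence asserting this simulation without justification, whereas you actually carry out the structural induction that verifies it, including the delicate case analysis around the ``no interval contained in $I_{mid}$'' condition; your version is thus a strictly more detailed rendering of the same idea.
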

\begin{proof}
As this algorithm simulates the classic greedy approach, its correctness follows immediately.
\end{proof}
\paragraph{An easier to locally simulate, approximate global process}
We now modify \cref{alg:alg-global-exact} to more easily lend itself to local computation, while weakening our claim from an exact solution to $(1+\eps)$ approximation. This modified global process will serve as an approximate solution that is easier for an LCA to simulate. We first describe the main intuition behind our modification, and then provide more details on how to design the algorithm (see \cref{alg:alg-global-approx}).

Consider a node $Q$ (defined as in \cref{alg:alg-global-exact}) and its left and right children $Q_L$ and $Q_R$, respectively. If optimal solutions within $Q_L$ and $Q_R$ are both large, i.e., have size at least $\nicefrac{1}{\eps}$, we can afford to create a boundary at $Q_{mid}$ and not use any interval containing $Q_{mid}$ (in which case we reduce the size of an optimum solution by at most one), ``charge'' the potential interval intersecting this boundary point $Q_{mid}$ to the size of solutions in $Q_L$ and $Q_R$, and handle $Q_L$ and $Q_R$ independently. \cref{lemma:invariant-value} implies that this approach leads to $(1+\eps)$-approximate scheduling. Being able to handle $Q_L$ and $Q_R$ independently is crucial for designing our desired LCA -- it enables us to explore only one of the two nodes to answer whether a given interval $I$ belongs to an approximate solution or not. Notice that if we have not discarded intervals containing $Q_{mid}$ and if $I$ belongs to the range defined by $Q_R$, then we would need to learn an approximate solution of $Q_L$ first before we could decide whether $I$ is an approximate solution of $Q_R$.

Otherwise, at least one of optimum solutions in $Q_L$ and $Q_R$ contains at most $\frac{1}{\varepsilon}$ intervals. For cells that have at most $\frac{1}{\varepsilon}$ intervals we use \ProbeBased to compute their optimum with $O(1/\eps)$ successor probes. On the node (if any) that has solution larger than $1/\eps$ we simply recurse. As we show in \cref{lemma:alg-local-approx}, this recursion is efficient enough even in the context of LCA. We now provide more details on the algorithm itself.

\paragraph{Description of \cref{alg:alg-global-approx}.}
We now define an algorithm (\cref{alg:alg-global-approx}) for globally computing an approximation of $f(Q,earliest)$. As the first step of the algorithm, we invoke \ProbeBased to identify whether or not simulating the greedy within $Q_L$ and $Q_R$ will both have large solutions with at least $\nicefrac{1}{\eps}$ intervals. (Notice that to obtain this information we do not need to compute the entire solution in $Q_L$ or $Q_R$, but only up to $1/\eps$ many intervals.)
If \emph{both} have solutions of size at least $1/\eps$, the algorithms draws a border at $Q_{mid}$ (hence ignoring any interval that intersects $Q_{mid}$) and simulates the approximate greedy on $Q_L$ and $Q_R$ independently by invoking \cref{alg:alg-global-approx} on $Q_L$ and $Q_R$.

Otherwise, at least one of $Q_L$ and $Q_R$ has an optimal solution of size less than $1/\eps$. \cref{alg:alg-global-approx} simulates exact greedy on nodes that have an optimal solution of size at most $1/\eps$, and invokes \cref{alg:alg-global-approx} recursively on the node (if any) that has larger solution. In addition, \cref{alg:alg-global-approx} determines whether the greedy chooses an interval $I_{mid}$ that contains $Q_{mid}$, which is used to determine parameter $earliest$ for the processing of $Q_R$.

\begin{algorithm}[h]
	\SetKw{Print}{Print}

    \Input{$Q$ : cell \\
        $earliest$ : earliest valid starting time for future intervals \\
		$\eps$ : approximation parameter}
    \Output{Returns ending time of last interval selected so far \\
    Prints each interval in the solution exactly once}
    \BlankLine
    \BlankLine

	\If{$OPT(Q_L) > \nicefrac{1}{\eps}$ and $OPT(Q_R) > \nicefrac{1}{\eps}$}{
	    Draw a border at $Q_{mid}$ \\
	    \tcc{In our LCA, we will need to invoke only one of these.}
	    Invoke $f(Q_L,earliest)$ and $f(Q_R,Q_{mid})$ \\
	    \Return $f(Q_R,Q_{mid})$
	}
	
	\uIf{$OPT(Q_L) \le \nicefrac{1}{\eps}$}{
	    \tcc{See \cref{lemma:query-based-opt} to recall \ProbeBased.}
        $after\_left\_earliest \gets \ProbeBased(Q_L,earliest)$ \\
        \Print intervals in $\ProbeBased(Q_L,earliest)$
	}
	\Else{
	    $after\_left\_earliest \gets f(Q_L,earliest)$
	}
	
	$I_{mid} \gets$ interval after $after\_left\_earliest$ containing $Q_{mid}$ with earliest end time
	
	\uIf{$I_{mid} \ne \emptyset$ \textbf{and} no interval is contained within $I_{mid}$}{
	    $after\_mid\_earliest \gets end(I_{mid})$ \\
	    \Print $I_{mid}$
	}
	\Else{
	    $after\_mid\_earliest \gets after\_left\_earliest$
	}
	
	\uIf{$OPT(Q_R) \le \nicefrac{1}{\eps}$}{
        $after\_right\_earliest \gets \ProbeBased(Q_R,after\_mid\_earliest)$ \\
        \Print intervals in $\ProbeBased(Q_R,after\_mid\_earliest)$
	}
	\Else{
	    $after\_right\_earliest \gets f(Q_R,after\_mid\_earliest)$
	}
	\Return $after\_right\_earliest$

  \caption{Global, approximate algorithm for
$f(Q,earliest)$
    \label{alg:alg-global-approx}}
\end{algorithm}

\begin{lemma}\label{lemma:global_approx}
\cref{alg:alg-global-approx} is a global algorithm for calculating a $(1+\varepsilon)$-approximation of unweighted interval \MaxIS.
\end{lemma}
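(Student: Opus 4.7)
The plan is to view the algorithm through the set $B$ of borders it draws at cell midpoints; these partition $[0,N]$ into regions $R_1,\ldots,R_{|B|+1}$. I would then invoke \cref{lemma:invariant-value} with $K = \lceil \nicefrac{1}{\eps} \rceil$, which requires verifying two facts: (i) the algorithm outputs an optimal schedule within each region, and (ii) the optimum within each region contains at least $K$ intervals. The case $|B|=0$ is separate and immediate: the recursion reduces to \cref{alg:alg-global-exact} on $[0,N]$, which is exact by \cref{lemma:global_exact}.

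For (i), inside any region no ``draw border'' branch is executed, so every recursive invocation of $f$ on a cell contained in $R_i$ behaves as \cref{alg:alg-global-exact}, with \ProbeBased used at the leaves (exact by \cref{lemma:query-based-opt}). The critical subtlety is that the $I_{mid}$ guard must not select an interval that crosses some border: if a candidate $I_{mid}$ were to cross a border $b'=Q'_{mid}$ drawn inside $Q_R$, then $I_{mid}$ would strictly contain any interval lying in $Q'_L$, and since the border condition at $b'$ certifies $OPT(Q'_L) > \nicefrac{1}{\eps} \geq 1$, such an interval exists and invalidates the guard ``no interval is contained within $I_{mid}$''. A symmetric argument rules out $I_{mid}$ crossing a border inside $Q_L$, relying on the fact that $after\_left\_earliest$ cannot lie to the left of any border drawn inside $Q_L$'s recursion (every such border resets $earliest$ for the right-side recursion, so the returned endpoint is at least the rightmost internal border). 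Finally, the $earliest$ parameter with which the recursion enters a region is at most the region's left endpoint (because each drawn border $b$ resets $earliest$ to $b$ for the right-side recursion), so it is non-restrictive within the region. Applying \cref{lemma:global_exact} then yields that the intervals output inside $R_i$ constitute an optimal schedule of $R_i$.

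For (ii), I would trace back the border-drawing condition. Every border $b = Q_{mid}$ is drawn only when $OPT(Q_L), OPT(Q_R) > \nicefrac{1}{\eps}$. The region immediately to the right of $b$ either equals $Q_R$ (if no border is drawn inside $Q_R$) or is a left-spine descendant of $Q_R$ ending at the first internal border; a short induction on this descent chain shows that the region coincides with a cell that was itself certified large by the next border-drawing decision. The region immediately to the left of $b$ is handled symmetrically, and the leftmost and rightmost regions border only a single side but still contain the witness child $Q_L$ (or $Q_R$) from their adjacent border's condition. Hence every region has $OPT > \nicefrac{1}{\eps}$, which is at least $K$.

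Combining (i) and (ii) with \cref{lemma:invariant-value} then gives a $\tfrac{K+1}{K} \leq 1 + \eps$ approximation. The main obstacle I anticipate is step (i): a region need not coincide with a single cell of the hierarchy (for instance when a cell declines to split at its midpoint while a descendant does), so ensuring both that the $I_{mid}$ mechanism never smuggles in a cross-border interval and that the propagated $earliest$ never suppresses any interval inside the region are the two points where the argument must be executed carefully.
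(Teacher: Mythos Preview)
Your proposal is correct and follows the same approach as the paper: both invoke \cref{lemma:invariant-value} by arguing that (i) the algorithm simulates the exact greedy within each border-delimited region and (ii) each such region has optimum exceeding $\nicefrac{1}{\eps}$. The paper's own proof is a two-sentence sketch that asserts these points without justification; your treatment of the $I_{mid}$ guard and the propagation of $earliest$ supplies exactly the details the paper elides.
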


\begin{proof}

Note that \cref{alg:alg-global-approx} will compute $f(Q,earliest)$ exactly (by simulating the classic greedy) other than when it draws borders so that it can compute answers for $Q_L$ and $Q_R$ independently. However, we only draw borders when both the region the left and right of the border has a solution with at least $\nicefrac{1}{\varepsilon}$ intervals. As such, we maintain the requirements for \cref{lemma:invariant-value} to hold and can simulate the greedy exactly within borders which immediately shows correctness for a $(1+\varepsilon)$-approximation.
\end{proof}

\paragraph{Designing an LCA}
We design an LCA that simulates the approximate, global process of \cref{alg:alg-global-approx}. Note that \cref{alg:alg-global-approx} never recurses on both $Q_L$ and $Q_R$ unless we drew a border between them, in which case the recursive calls are independent. Since we are now designing an LCA that only determines whether a particular interval is in a solution, we can ignore one of the two independent subproblems. So, we design an LCA that only needs to recurse down one child each time and has desirable runtime. We design an algorithm for a slightly modified function $f(Q,earliest,I)$, where we compute whether $I$ is in our solution.

\paragraph{Description of LCA \cref{alg:alg-local-approx}. }
We now define an algorithm for locally computing an approximation of $f(Q,earliest,I)$ in \cref{alg:alg-local-approx}. This algorithm directly builds on \cref{alg:alg-global-approx}, whose description is provided above. The key difference between \cref{alg:alg-local-approx} and \cref{alg:alg-global-approx} is that when \cref{alg:alg-local-approx} draws a border, the algorithm does not calculate both $f(Q_L, earliest, I)$ and $f(Q_R, Q_{mid}, I)$, as they are independent and it suffices to compute the output of only one of those $Q_L$ and $Q_R$. If $I \in Q_L$, then \cref{alg:alg-local-approx} invokes $f(Q_L, earliest, I)$ as the output is independent of $f(Q_R, Q_{mid}, I)$. Otherwise, if $I \in Q_R$ or $I \notin (Q_L \cup Q_R)$, then the algorithm invokes $f(Q_R, Q_{mid}, I)$ as either $I$ has already been decided on whether it will be in the output or the algorithm only needs the result of $f(Q_R, Q_{mid}, I)$. As we show in the next claim, this suffices to guarantee LCA complexity of $O(\log{n} / \eps)$. The rest of algorithm \cref{alg:alg-local-approx} is the same as \cref{alg:alg-global-approx}.

\begin{algorithm}[h]
	\SetKw{Print}{Print}

    \Input{$Q$ : cell \\
        $earliest$ : earliest valid starting time for future intervals \\
		$\eps$ : approximation parameter\\
		$I$ : interval}
    \Output{Returns ending time of last interval selected so far \\
    Prints ``Yes'' once if $I$ is in the desired solution within $Q$, else prints nothing}
    \BlankLine
    \BlankLine

	\If{$OPT(Q_L) > \nicefrac{1}{\eps}$ and $OPT(Q_R) > \nicefrac{1}{\eps}$}{
	    Draw a border at $Q_{mid}$ \\
	    \lIf{$I \in Q_L$} {
	        \Return $f(Q_L, earliest, I)$
	    }
	    \lElse {
	        \Return $f(Q_R, Q_{mid}, I)$
	    }
	}
	
	\uIf{$OPT(Q_L) \le \nicefrac{1}{\eps}$}{
	    \tcc{See \cref{lemma:query-based-opt} to recall \ProbeBased.}
        $after\_left\_earliest \gets \ProbeBased(Q_L,earliest)$
        
        \lIf{$I \in \ProbeBased(Q_L, earliest) $ solution} {
            \Print ``Yes''
        }
	}
	\Else{
	    $after\_left\_earliest \gets f(Q_L,earliest)$
	}
	
	$I_{mid} \gets$ interval after $after\_left\_earliest$ containing $Q_{mid}$ with earliest end time
	
	\uIf{$I_{mid} \ne \emptyset$ \textbf{and} no interval is contained within $I_{mid}$}{
	    $after\_mid\_earliest \gets end(I_{mid})$
	    
	    \lIf{$I_{mid} = I$} {
	        \Print ``Yes''
	    }
	}
	\Else{
	    $after\_mid\_earliest \gets after\_left\_earliest$
	}
	\uIf{$OPT(Q_R) \le \nicefrac{1}{\eps}$}{
        $after\_right\_earliest \gets \ProbeBased(Q_R,after\_mid\_earliest)$ \\
        \lIf{$I \in \ProbeBased(Q_L, after\_mid\_earliest) $ solution} {
            \Print ``Yes''
        }
	}
	\Else{
	    $after\_right\_earliest \gets f(Q_R,after\_mid\_earliest)$
	}
	\Return $after\_right\_earliest$

\caption{Local, approximate algorithm for
$f(Q,earliest)$
    \label{alg:alg-local-approx}}
\end{algorithm}

\begin{lemma}\label{lemma:alg-local-approx}
\cref{alg:alg-local-approx} is a $(1+\varepsilon)$-approximation LCA for unweighted interval \MaxIS using $O(\frac{\log{N}}{\varepsilon})$ successor probes.
\end{lemma}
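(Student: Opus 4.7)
The plan is to reduce the lemma to two separate claims: (a) for every interval $I$ and every top-level call, Algorithm \ref{alg:alg-local-approx} prints ``Yes'' if and only if Algorithm \ref{alg:alg-global-approx} would print $I$ on the same instance with the same initial arguments; and (b) the total number of successor probes made by Algorithm \ref{alg:alg-local-approx} is $O(\log(N)/\eps)$. Claim (a) combined with \cref{lemma:global_approx} gives the $(1+\eps)$-approximation guarantee. It also guarantees that answers to different queries $I$ are all consistent with one globally fixed schedule, namely the one produced by \cref{alg:alg-global-approx}, which is exactly what is required for a valid LCA.

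For (a), I would induct on the height of $Q$ in the hierarchy. The only branch on which \cref{alg:alg-local-approx} departs from \cref{alg:alg-global-approx} is the ``both large'' case: where the global algorithm evaluates $f(Q_L, earliest)$ and $f(Q_R, Q_{mid})$ in sequence, the local one evaluates only the child that contains $I$ (or $Q_R$ by default). The decisive observation is that once a border is drawn at $Q_{mid}$, the two subproblems become genuinely independent: $f(Q_L, earliest)$ is determined by $earliest$ alone, and $f(Q_R, Q_{mid})$ is determined by $Q_{mid}$ alone, neither using the other's output. Hence skipping one subcall cannot change whether $I$ (which lies in the other side) is printed. In every other branch, at least one child has $OPT \leq 1/\eps$; here the two algorithms execute exactly the same steps --- the small child is resolved by \ProbeBased, $I_{mid}$ is identified, and the large child (if any) is handled by a single recursive call that matches the global algorithm bit-for-bit --- so the induction carries through.

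For (b), observe that every invocation of $f$ performs at most one recursive $f$-call: in the ``both large'' case only one child is visited by construction, and in all other cases the small child is handled by \ProbeBased while at most one large child is recursed into. Since the hierarchy has depth $O(\log N)$, the recursion depth is also $O(\log N)$. Within a single invocation, the probe cost is $O(1/\eps)$: checking whether $OPT(Q_L) > 1/\eps$ and whether $OPT(Q_R) > 1/\eps$ each cost at most $1 + \lceil 1/\eps \rceil$ probes by applying \cref{lemma:query-based-opt} truncated at the threshold; \ProbeBased on any small child costs $O(1/\eps)$ probes; and $I_{mid}$ can be located in $O(1/\eps)$ probes by repeatedly advancing the successor cursor from $after\_left\_earliest$ until an interval either contains $Q_{mid}$ or starts past $Q_{mid}$ (the iteration terminates quickly because each step's interval lies strictly before the previous one was, and in any ``at least one small'' branch the relevant side has $O(1/\eps)$ candidates anyway). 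Summing the per-level cost over $O(\log N)$ levels yields the desired $O(\log(N)/\eps)$ probe bound.

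The main obstacle is pinning down the independence step in (a) cleanly: one must verify that in the ``both large'' branch of \cref{alg:alg-global-approx} the argument $Q_{mid}$ passed to $f(Q_R, Q_{mid})$ is syntactically independent of the return value of $f(Q_L, earliest)$, so that the local algorithm can legitimately omit the $Q_L$ call when $I \in Q_R$, and symmetrically in the other case. Once this independence is made explicit, the rest is routine: the simulation argument in (a) reduces to a straightforward case analysis over the four branch types, and the probe accounting in (b) is a telescoping over the recursion depth.
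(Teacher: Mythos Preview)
Your proposal is correct and follows essentially the same approach as the paper: correctness via simulation of \cref{alg:alg-global-approx} (invoking \cref{lemma:global_approx}), and the probe bound via the observation that each level incurs one recursive call and $O(1/\eps)$ probes. You simply spell out in more detail the independence argument in the ``both large'' branch and the per-level probe accounting that the paper states in two sentences.
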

\begin{proof}

Correctness follows from that our algorithm simulates \cref{alg:alg-global-approx} which is a $(1+\eps)$-approximation by \cref{lemma:global_approx}. To show that our LCA is efficient, we note that at each of the $\log(N)$ levels we only invoke one instance of $f$ for a child. Additionally, we only use $O(\frac{1}{\varepsilon})$ successor probes at each of these levels. We identify when $OPT(Q_L)$ and $OPT(Q_R)$ are greater than $\nicefrac{1}{\eps}$ by using $\nicefrac{1}{\eps}+1$ steps of \ProbeBased. So in total, our LCA only uses $O(\frac{\log(N)}{\varepsilon})$ successor probes.
\end{proof}
Thus, we have our desired LCA.
\end{proof}

Such an approach can use other probe-models that enable us to effectively simulate successor probes. For example, we could consider a probe-model where we want to know all intervals that intersect a certain point. Regardless, our goal is to emphasize this partitioning method that enables more local algorithms due to its lack of bookkeeping. 
\section{Scheduling Algorithms on Multiple Machines with Partitioning}\label{section:scheduling}
In the previous sections we focused on the case of a single machine, i.e., $M = 1$. In this section, we extend our results to the setting where there are multiple machines on which to schedule jobs ($M > 1$).
In particular, we obtain the following results
\begin{restatable}[Unweighted dynamic, multiple machines]{theorem}{theoremunweighteddynamicschedule}
\label{theorem:unweighted-dynamic-scheduling-multiple}
Let $\cJ$ be a set of $n$ jobs. For any $\eps > 0$, there exists a fully dynamic algorithm for $(1 + \eps)$-approximate unweighted interval scheduling for $\cJ$ on $M$ machines performing updates in $O\rb{\frac{M \log(n)}{\eps}}$ and queries in $O(\log(n))$ worst-case time.
\end{restatable}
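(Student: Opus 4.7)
My plan is to extend the single-machine construction of \cref{theorem:unweighted-M=1} to $M$ machines via an iterated \emph{layering} approach. The guiding structural fact is classical: for unweighted interval scheduling, an optimum $M$-machine schedule can be built by greedily selecting a max compatible subset $L_1^*$ of $\cJ$, then a max compatible subset $L_2^*$ of $\cJ \setminus L_1^*$, and so on through $L_M^*$; the union $\bigcup_{k=1}^{M} L_k^*$ attains $\mathrm{OPT}_M(\cJ)$. This reduces the multi-machine dynamic problem to coordinating $M$ carefully synchronized single-machine instances.

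Concretely, I maintain $M$ instances $D_1, \ldots, D_M$ of the single-machine data structure from \cref{theorem:unweighted-M=1}, each instantiated with approximation parameter $\eps$, together with a table mapping each job to the layer it currently occupies. The invariant is that $D_k$ holds a $(1+\eps)$-approximate single-machine schedule on the set of jobs currently present but not selected into $D_1, \ldots, D_{k-1}$'s maintained schedules. An insertion or deletion of $J$ is first applied to $D_1$; we read off the (constant number of) jobs whose membership in $D_1$'s maintained schedule changes as a consequence, then propagate these as opposite-direction updates to $D_2$, and so on down through $D_M$. Each layer handles its change in $O(\log(n)/\eps)$ worst-case time, yielding the claimed $O(M \log(n)/\eps)$ update bound. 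A query ``is job $J$ in the maintained schedule, and on which machine?'' reads the layer pointer and calls the corresponding layer's query, taking $O(\log n)$.

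The main obstacle will be the approximation analysis: each layer is individually only $(1+\eps)$-approximate, and a naive telescoping would inflate the overall ratio by $(1+\eps)^M$. I plan to overcome this by leveraging structural properties of the approximate schedule produced by \cref{theorem:unweighted-M=1}, namely that it is essentially an earliest-deadline-first style solution up to $\eps$-resolution bucketing, which disciplines the residue left to subsequent layers. Concretely, the key lemma I want to establish is that the set of jobs left unassigned by $D_1, \ldots, D_{k-1}$ still admits an $(M-k+1)$-machine schedule of size at least $\mathrm{OPT}_M(\cJ) - \sum_{i<k}|S_i|$, where $S_i$ is the schedule stored by $D_i$. Combined with induction on $M$, this would give $\sum_{k=1}^M |S_k| \ge \mathrm{OPT}_M(\cJ)/(1+\eps)$ without exponential compounding. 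Establishing this residual-optimality lemma is the hardest part, since it must rule out pathological configurations where a small approximation loss at layer $k$ strands an arbitrarily large collection of optimum-layer jobs at layer $k+1$; I would argue it via an exchange argument that rewrites the optimum layered decomposition one layer at a time, swapping each optimum layer $L_k^*$ for $S_k$ while charging the at-most-$\eps$ fraction of jobs lost in the swap to the single-machine approximation guarantee at that layer.
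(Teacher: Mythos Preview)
Your layered approach is genuinely different from the paper's, but it has a concrete gap in the update-time analysis. You assert that after an update to $D_1$ we can ``read off the (constant number of) jobs whose membership in $D_1$'s maintained schedule changes.'' This is false for the data structure of \cref{theorem:unweighted-M=1}: an update there triggers a recomputation of the EDF solution inside an entire region of size $\Theta(1/\eps)$, and a single insertion can rewrite that region's solution almost completely. (For instance, on $[1,3),[2,4),[3.5,5),[4.5,6)$ the EDF solution is $\{[1,3),[3.5,5)\}$; inserting $[0,2)$ changes it to $\{[0,2),[2,4),[4.5,6)\}$.) So layer~$1$ can emit $\Theta(1/\eps)$ insert/delete events to layer~$2$, each of which may in turn rewrite a $\Theta(1/\eps)$-size region of $D_2$, and so on; nothing in your plan controls this cascade. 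Even under the optimistic assumption that each layer absorbs only $O(1/\eps)$ events, the per-layer work is $O(\log(n)/\eps^2)$, already missing the target bound; in the worst case the blow-up is exponential in $M$.

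The approximation argument is also not on solid ground. Your key lemma --- that the residual after removing $S_1,\ldots,S_{k-1}$ still admits an $(M{-}k{+}1)$-machine schedule of size $\ge \mathrm{OPT}_M(\cJ)-\sum_{i<k}|S_i|$ --- fails for arbitrary independent sets $S_i$ (take $M=2$, four jobs $[1,2],[1,2],[3,4],[3,4]$, and $S_1=\{[1,2]\}$). You hope the ``EDF-like'' structure of \cref{theorem:unweighted-M=1} rescues it, but that structure is only EDF \emph{within} borders and ignores border-crossing jobs; the borders of $D_1,\ldots,D_M$ are unrelated to each other, so the exchange argument you sketch has no common scaffold to operate on. At best one seems to get a loss proportional to $M\cdot(\text{number of borders})$ per layer, which forces $\eps'\approx\eps/M$ and again inflates the running time.

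For comparison, the paper avoids both issues by \emph{not} layering: it keeps a single border structure and simply scales the invariant so that each region's optimal $M$-machine solution has size $\Theta(M/\eps)$, then runs the classical $M$-machine EDF greedy inside each region. A border can now kill at most $M$ optimum jobs, which charges cleanly against the $\ge M/\eps$ jobs in the adjacent region, giving $(1+\eps)$ directly; recomputing a region costs $O((M/\eps)\log n)$, and only $O(1)$ regions are touched per update.
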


\begin{restatable}[Unweighted LCA, multiple machines]{theorem}{theoremunweighteddynamicscheduling}
\label{theorem:unweighted-LCA-scheduling}
Let $\cJ$ be a set of $n$ jobs with their ending times upper-bounded by $N$. For any $\eps > 0$, there exists a local computation algorithm for $(1+\eps)$-approximate unweighted interval scheduling for $\cJ$ on $M$ machines using $O\rb{\frac{M\log(N)}{\eps}}$ probes. 
\end{restatable}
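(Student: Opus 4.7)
The plan is to extend \cref{theorem:unweighted-M=1-local} to $M$ machines by running $M$ copies of the single-machine LCA sequentially, where the $i$-th copy operates on the residual instance consisting of the jobs not scheduled by machines $1,\ldots,i-1$. To answer the query ``is job $J$ scheduled on machine $i$?'', I would simulate these $M$ copies in order and report whether the $i$-th copy schedules $J$.

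The key subroutine is a residual-oracle simulator for copy $i$. When copy $i$ asks for ``the earliest-ending job starting after time $x$ among jobs unused by machines $1,\ldots,i-1$,'' I would probe the original oracle to obtain a candidate $J'$, recursively invoke the layer-$k$ simulators for $k<i$ to decide whether $J'$ has already been scheduled, and either advance $x$ past $J'$ and re-probe (if some earlier layer claimed $J'$) or return $J'$ (otherwise). Correctness follows because the skipped candidates are exactly the jobs claimed by earlier layers. For the approximation guarantee, an optimal $M$-machine unweighted schedule decomposes into $M$ compatible single-machine schedules, and replacing exact single-machine greedy with a $(1+\eps)$-approximate single-machine LCA on the residual instance at each layer yields, by a standard peeling argument, a $(1+\eps)$-approximate $M$-machine schedule overall.

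For the probe complexity, each of the $M$ copies issues $O(\log N / \eps)$ probes that are accepted by its residual oracle, matching the bound of \cref{theorem:unweighted-M=1-local}. The additional ``skipped'' probes are charged against the earlier-layer probe that first accepted the skipped job; combined with memoizing each job's layer assignment so that subsequent membership tests across the recursion cost $O(1)$, the total is $O(M \log N / \eps)$ probes.

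The main obstacle I anticipate is controlling the recursion without an exponential blow-up in $M$: naively, each skipped probe at layer $i$ could spawn a nested chain of layer simulations, and those in turn further simulations. Resolving this via the charging-and-memoization scheme above — ensuring each job's assignment is computed at most once, and each skip is paid for by a unique earlier acceptance — is the technical crux.
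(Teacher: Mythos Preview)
Your peeling approach has a genuine gap in the approximation guarantee. Repeatedly extracting a maximum (or $(1+\eps)$-approximate) independent set and moving to the residual instance does \emph{not} yield a $(1+\eps)$-approximate $M$-machine schedule, even when each layer is solved exactly. Consider $M=2$ and the six jobs $A=[0,4]$, $B=[3,7]$, $C=[6,10]$, $D=[9,13]$, $E=[1,2]$, $F=[11,12]$. The optimal $2$-machine schedule uses all six jobs (e.g.\ $\{A,C,F\}$ on one machine and $\{E,B,D\}$ on the other). The single-machine MaxIS has size exactly $3$, and the earliest-ending greedy (which the LCA of \cref{theorem:unweighted-M=1-local} simulates) returns $\{E,B,F\}$; the residual $\{A,C,D\}$ then has MaxIS only $2$. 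Peeling therefore outputs $5$ jobs versus $OPT=6$, a $1.2$-approximation regardless of how small $\eps$ is. The ``standard peeling argument'' you invoke does not exist for this problem: decomposing $OPT$ into $M$ independent sets tells you nothing about which independent set the first layer will remove, and an unlucky choice can strand jobs so that later layers cannot recover them. (This is also why the paper, in the weighted case, only obtains the $\tfrac{M^M}{M^M-(M-1)^M}$ factor from sequential MaxIS extraction, not $1+\eps$.)

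There are secondary problems as well. Your residual-oracle simulation is not sound: when the original successor probe after $x$ returns a job $J'$ already claimed by an earlier layer, the correct next candidate is the second earliest-ending job starting after $x$, which need not start after $J'$; ``advancing $x$ past $J'$'' can skip valid candidates. And even granting a correct simulation, the charging-plus-memoization sketch does not obviously prevent each layer's probe cost from multiplying by the previous layers' costs, which would give a bound polynomial in $\log N/\eps$ with exponent growing in $M$ rather than $O(M\log N/\eps)$.

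The paper takes a different route entirely. It simulates the classical \emph{$M$-machine} greedy (sort by ending time, assign each job to the most-recently-freed available machine), which is exact, inside the hierarchical border framework of \cref{alg:alg-local-approx}. To make the greedy simulable locally it augments the successor oracle so that a probe may exclude a specified finite set of jobs. The only other change is raising the border threshold from $1/\eps$ to $M/\eps$, since up to $M$ jobs of $OPT$ can cross any border; the state carried down the recursion becomes the vector of $M$ machine-free times. Each of the $\log N$ levels then costs $O(M/\eps)$ probes, giving $O(M\log N/\eps)$ total.
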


\begin{restatable}[Weighted dynamic, multiple machines]{theorem}
{theoremweighteddynamicscheduling}
\label{theorem:weighted-dynamic-scheduling}
Let $\cJ$ be a set of $n$ jobs. For any $\eps > 0$, there exists a fully dynamic algorithm for $\rb{\frac{M^M}{M^M-(M-1)^M}(1 + \eps)}$-approximate\footnote{Note that this goes to $\frac{e}{e-1}(1+\eps) \approx 1.58 (1 + \eps)$ from below as $M$ tends to $\infty$.} weighted interval scheduling for $\cJ$ on $M$ machines performing updates in $O\rb{\frac{Mw \log(w) \log(n)}{\eps^3}}$ and queries in $O(\log(n))$ worst-case time.
\end{restatable}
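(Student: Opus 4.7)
The plan is to prove \cref{theorem:weighted-dynamic-scheduling} via a randomized black-box reduction from $M$-machine weighted interval scheduling to $M$ independent single-machine weighted instances. Concretely, I would maintain single-machine weighted dynamic data structures $D_1,\ldots,D_M$: upon insertion of a job $j$, draw $\phi(j)\in[M]$ uniformly at random and insert $j$ into $D_{\phi(j)}$ (storing $\phi(j)$ for the corresponding future deletion); the maintained $M$-machine schedule is the union of the $D_i$'s outputs. Queries for per-job membership dispatch to the relevant $D_i$, and aggregate-weight queries sum the $M$ cached totals, which yields the $O(\log n)$ worst-case query bound.

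For the single-machine weighted subroutine, the plan is a weight-bucketing reduction to the unweighted dynamic data structure of \cref{theorem:unweighted-M=1}: with weights lying in $[1,w]$, rounding down to powers of $(1+\eps')$ with $\eps'=\Theta(\eps)$ gives $O(\log(w)/\eps)$ weight buckets, and a separate unweighted dynamic structure per bucket can be combined greedily from the heaviest bucket to the lightest. A careful cascade analysis (where each insertion or deletion may force $O(w\log(w)/\eps^2)$ bucket-level reshufflings, each costing $O(\log(n)/\eps)$ by \cref{theorem:unweighted-M=1}) gives per-machine update time $O(w\log(w)\log(n)/\eps^3)$, which multiplied by the $M$ independent copies yields the overall update time stated in the theorem.

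The main obstacle is proving $\mathbb{E}[\mathrm{ALG}]\ge \frac{M^M-(M-1)^M}{M^M}\cdot \frac{\mathrm{OPT}}{1+\eps}$. Let $S$ denote OPT's chosen jobs, partitioned as $S=S_1\cup\cdots\cup S_M$ by OPT's own machine assignment (so each $S_m$ is non-overlapping), and let $L_i=\phi^{-1}(i)$ be the jobs landing on our machine $i$. Since each $D_i$ is $(1+\eps)$-approximate for the single-machine problem, it suffices to lower bound $\mathbb{E}\bigl[\sum_i \mathrm{MIS}(L_i\cap S)\bigr]$ by $\bigl(1-(1-1/M)^M\bigr)\mathrm{OPT}$. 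The key step is a per-instant decomposition: at every time $t$, at most $k_t\le M$ OPT jobs $A_t$ are active with total rate $\rho(t)=\sum_{j\in A_t}w_j/\ell_j$, so $\mathrm{OPT}=\int\rho(t)\,dt$, and the probability that a fixed machine $i$ receives at least one OPT job at time $t$ is $1-(1-1/M)^{k_t}\ge 1-(1-1/M)^M$. To convert this presence-probability into a rate-weighted lower bound I would construct an explicit fractional primal solution $x^{(i)}_j$ to the single-machine LP on each machine and appeal to the integrality of the interval-scheduling LP, then invoke linearity of expectation over $i$ and $t$.

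The hardest part will be handling the weighted aspect of this analysis, since, unlike the uniform-weight setting where per-instant success probabilities translate directly into rate gains, with arbitrary rates $w_j/\ell_j$ one must carefully allocate fractional coverage across the $M$ machines to avoid over- or under-counting. I anticipate resolving this either by a direct LP-duality argument that constructs a dual feasible solution matching the primal bound, or by coupling with the classical randomized online $M$-machine scheduler whose competitive ratio is exactly $1-(1-1/M)^M$ and lifting its analysis to our random-assignment setting. Composing the resulting $M^M/\bigl(M^M-(M-1)^M\bigr)$ loss with the $(1+\eps)$ loss of each single-machine subroutine then yields the approximation factor claimed in \cref{theorem:weighted-dynamic-scheduling}.
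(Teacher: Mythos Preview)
Your proposal has two genuine gaps. First, the single-machine weighted subroutine you describe---bucketing weights into powers of $(1+\eps')$ and ``combining greedily from the heaviest bucket to the lightest''---is not a $(1+\eps)$-approximation. Take one interval $[0,k]$ of weight $1+\eps$ together with $k$ pairwise-disjoint unit-weight intervals contained in it: your greedy selects only the long interval (reward $1+\eps$) and then nothing, while the optimum is $k$, so the loss is unbounded; no cascade bookkeeping fixes an approximation guarantee that already fails. Second, you derive the factor $\tfrac{M^M}{M^M-(M-1)^M}$ from random machine assignment, but the paper's own analysis of that reduction (\cref{theorem:weighted-random}) establishes only a factor-$e$ loss in the weighted case, which is strictly weaker. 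Your sketch does not close this gap: knowing that machine $i$ receives \emph{some} OPT job at time $t$ with probability $1-(1-1/M)^{k_t}$ says nothing about the \emph{weight} of that job, and neither the LP-duality construction nor the ``classical randomized online $M$-machine scheduler'' you allude to is actually supplied. Random assignment also gives only an in-expectation guarantee (the theorem is stated deterministically), and since an update touches a single $D_{\phi(j)}$, the extra factor $M$ you multiply in for ``$M$ independent copies'' is unaccounted for.

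The paper's proof is entirely different. The approximation factor comes from the deterministic iterated-MIS result of \cite{bar2001approximating}: running a $(1+\eps)$-approximate single-machine weighted MIS $M$ times, each time on the jobs not yet scheduled, yields a $\tfrac{M^M}{M^M-(M-1)^M}(1+\eps)$-approximate $M$-machine schedule. To make this dynamic, the paper maintains borders as in \cref{section:dynamic-unit}, but with the invariant that each region's solution has total reward $\Theta(Mw/\eps)$, so that the at most $M$ jobs (each of weight $\le w$) crossing a border can be charged to an adjacent region. Inside a region, a single-machine weighted MIS is computed by a DP whose state is the total reward (made integer by rounding weights to powers of $(1+\eps)$ and scaling) and whose value is the shortest time prefix attaining that reward; running this DP $M$ times per region is what produces the $O\bigl(\tfrac{Mw\log(w)\log(n)}{\eps^3}\bigr)$ update time.
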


We provide general reductions that show how to reduce interval scheduling on multiple machines to the same task on a single machine. Our reductions incur only a small constant factor loss in the approximation and are easy to simulate in the dynamic setting.
\begin{restatable}{theorem}{theoremrandomunweighted}
\label{theorem:random-unweighted}
Given an oracle for computing an $\alpha$-approximate unweighted interval scheduling on a single machine, there exists a randomized algorithm for the same task on $M$ machines that yields an $(2-1/M)\alpha$-approximation in expectation.
\end{restatable}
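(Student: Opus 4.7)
The plan is to give a simple randomized black-box reduction that invokes the single-machine oracle once per machine on independent random subsets. Independently and uniformly at random, draw $\mu(j) \in [M]$ for each job $j \in \cJ$, forming the partition $R_m = \mu^{-1}(m)$ for $m \in [M]$. For each machine $m$, run the $\alpha$-approximate oracle on $R_m$ to obtain a schedule $S_m$, and return $(S_1,\ldots,S_M)$. The per-machine independence is what makes the reduction trivial to simulate dynamically: inserting or deleting a job flips a single random coin and triggers an update on exactly one of the $M$ independent subinstances.

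Fixing an optimal $M$-machine schedule $\mathrm{OPT}$ with assignment $f^\star : \mathrm{OPT} \to [M]$ inducing $\mathrm{OPT} = \mathrm{OPT}_1 \sqcup \cdots \sqcup \mathrm{OPT}_M$, and writing $\mathrm{OPT}^{(1)}(R_m)$ for the maximum single-machine interval schedule on the set $R_m$, the $\alpha$-approximation hypothesis gives $|S_m| \geq \mathrm{OPT}^{(1)}(R_m)/\alpha$. The theorem thus reduces to establishing the probabilistic inequality
\[
\mathbb{E}\!\left[\sum_{m=1}^{M} \mathrm{OPT}^{(1)}(R_m)\right] \;\geq\; \frac{M}{2M-1}\,|\mathrm{OPT}|.
\]

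To prove this, I would construct a witness schedule $Y_m \subseteq \mathrm{OPT} \cap R_m$ on each machine by running the classical earliest-end-time greedy restricted to OPT jobs landing on $m$. Each $Y_m$ is single-machine feasible, so $\mathrm{OPT}^{(1)}(R_m) \geq |Y_m|$, and it suffices to lower-bound $\mathbb{E}[\sum_m |Y_m|]$. For each OPT job $j$ with $m^\star = f^\star(j)$, let $q_j = \Pr[j \in Y_{\mu(j)}]$; by linearity of expectation, the goal is to show $q_j \geq M/(2M-1)$ for every $j \in \mathrm{OPT}$. When $\mu(j) = m^\star$ (probability $1/M$), $j$ can only be blocked by cross-class OPT jobs $j' \in \mathrm{OPT}_{m'}$ ($m' \neq m^\star$) that also land on $m^\star$, since same-class OPT jobs are mutually single-machine feasible. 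When $\mu(j) \neq m^\star$, $j$ may additionally be blocked by $\mathrm{OPT}_{\mu(j)}$ jobs that landed on $\mu(j)$. A careful accounting of these cross-collisions using the independence of $\mu$ across jobs should combine the two cases into the bound $q_j \geq M/(2M-1)$.

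The main technical obstacle will be the collision-accounting step: mis-routed OPT jobs can conflict in intricate ways along the time axis, and one must ensure each OPT job survives the greedy procedure on its landing machine with probability at least $M/(2M-1)$. A clean approach is a charging scheme that maps each blocked OPT job to a unique earlier-ending blocker and verifies that, in expectation, at most $(M-1)/(2M-1)$ OPT jobs are charged per OPT job. Granting the inequality, the theorem follows by chaining
\[
\mathbb{E}\bigl[|S_1|+\cdots+|S_M|\bigr] \;\geq\; \mathbb{E}\!\left[\sum_m \mathrm{OPT}^{(1)}(R_m)\right]\!/\alpha \;\geq\; |\mathrm{OPT}| / \bigl((2-1/M)\alpha\bigr).
\]
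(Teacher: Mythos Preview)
Your reduction and overall architecture match the paper exactly: assign each job to a uniformly random machine, run the single-machine oracle on each part, and lower-bound $\mathbb{E}\bigl[\sum_m \mathrm{OPT}^{(1)}(R_m)\bigr]$ by analyzing the earliest-end-time greedy restricted to $\mathrm{OPT}$ jobs. The gap is in the combinatorial core.

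The per-job claim $q_j \ge M/(2M-1)$ that you set as ``the goal'' is false. Take $M=2$, and let $\mathrm{OPT}$ place one long job $j=[0,2K{+}1]$ on machine~$1$ and $K$ pairwise-disjoint short jobs $[2i,2i{+}1]$, $i=0,\dots,K{-}1$, on machine~$2$. On whichever machine $j$ lands, the greedy examines all short jobs first; every short job that lands there is selected (they are mutually compatible) and any one of them blocks $j$. Hence $q_j=(1-1/M)^K=2^{-K}$, which is arbitrarily small. So no ``careful accounting of cross-collisions'' over the two cases $\mu(j)=m^\star$ versus $\mu(j)\neq m^\star$ can yield a uniform per-job lower bound of $M/(2M-1)$; the inequality you need is genuinely an aggregate one.

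The paper obtains the aggregate bound differently. It fixes one machine and runs the greedy over \emph{all} of $\mathrm{OPT}$ in increasing end-time order, revealing each job's random machine lazily: with probability $1-1/M$ the current job is not on this machine and is skipped; with probability $1/M$ it is selected, and then the at most $M-1$ later-ending $\mathrm{OPT}$ jobs that intersect it (here one uses that $\mathrm{OPT}$ is a valid $M$-machine schedule, so at most $M$ jobs of $\mathrm{OPT}$ contain any point) are deleted. This gives the recurrence
\[
f(X)\;\ge\;(1-\tfrac{1}{M})\,f(X-1)+\tfrac{1}{M}\bigl(f(X-M)+1\bigr),
\]
which one verifies by induction satisfies $f(X)\ge X/(2M-1)$; multiplying by $M$ via symmetry across machines finishes. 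Your final ``charging scheme'' sentence is in fact an aggregate statement of the right shape and might be pushed through, but it is not a proof as written, and the per-job route you committed to earlier in the proposal does not lead there.
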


\begin{restatable}{theorem}{theoremweightedrandom}
\label{theorem:weighted-random}
Given an oracle for computing an $\alpha$-approximate weighted interval scheduling on a single machine, there exists a randomized algorithm for the same task on $M$ machines that yields an $e \cdot \alpha$ approximation in expectation.
\end{restatable}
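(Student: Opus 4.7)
The plan is to give a straightforward sequential reduction: initialize a pool of jobs $R \gets \cJ$, and for each machine $t = 1, 2, \ldots, M$ in turn, invoke the given oracle on $R$ to obtain an $\alpha$-approximate single-machine schedule $S_t$, assign it to machine $t$, and update $R \gets R \setminus S_t$. The output assignment $(S_1, \ldots, S_M)$ is automatically feasible because the $S_t$ are pairwise disjoint as sets of jobs (each job is removed after being scheduled) and each $S_t$ is itself a valid single-machine schedule returned by the oracle.

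For the analysis, fix an optimal $M$-machine schedule $(O_1, \ldots, O_M)$ of total weight $W^* = \sum_{i=1}^M w(O_i)$, let $R_{t-1}$ denote the pool at the start of iteration $t$, and set $T_t \eqdef \sum_{s \le t} w(S_s)$. Two observations drive the argument. \emph{Monotone depletion}: iteration $s$ removes only the jobs of $S_s$, of total weight $w(S_s)$, so the OPT jobs still present in $R_{t-1}$ have weight at least $W^* - T_{t-1}$ and, restricted to $R_{t-1}$, still form a valid $M$-machine schedule; hence $\mathrm{OPT}_M(R_{t-1}) \ge W^* - T_{t-1}$. \emph{Averaging}: for any instance, splitting any $M$-machine schedule into its $M$ single-machine components, at least one has weight at least $1/M$ of the total, so $\mathrm{OPT}_1(R_{t-1}) \ge \mathrm{OPT}_M(R_{t-1})/M$. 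Combined with the oracle guarantee $w(S_t) \ge \mathrm{OPT}_1(R_{t-1})/\alpha$, this chain gives
\[
w(S_t) \;\ge\; \frac{W^* - T_{t-1}}{\alpha M}.
\]

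Writing $\delta_t \eqdef W^* - T_t$, we obtain the recurrence $\delta_t \le \delta_{t-1}\rb{1 - \frac{1}{\alpha M}}$, which telescopes to $\delta_M \le W^* \rb{1 - \frac{1}{\alpha M}}^M \le W^* e^{-1/\alpha}$. Therefore $T_M \ge W^*(1 - e^{-1/\alpha})$, and the elementary inequality $1 - e^{-x} \ge x/e$ for $x \in (0,1]$ (applied with $x = 1/\alpha$, valid since $\alpha \ge 1$) yields $T_M \ge W^*/(e\alpha)$, the claimed $e \cdot \alpha$ approximation. The procedure is in fact deterministic; the ``in expectation'' statement is thus immediate, and the phrasing also accommodates the case when the given oracle is itself only an expected-$\alpha$-approximation, in which case the same bound follows by taking expectations throughout each inequality above.

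The main (mild) subtlety is the monotone-depletion step: the argument crucially uses that iteration $s$ removes only the oracle-selected jobs $S_s$, not jobs merely conflicting in time with $S_s$; otherwise the surviving OPT-jobs could fail to form a valid $M$-machine schedule and the cumulative bound $\mathrm{OPT}_M(R_{t-1}) \ge W^* - T_{t-1}$ would not telescope as cleanly. Everything else—the averaging bound, the geometric recurrence, and the one-line calculus estimate—is routine.
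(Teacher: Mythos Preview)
Your argument is correct: the sequential greedy (run the oracle, remove the selected jobs, repeat $M$ times) together with the standard telescoping recurrence $\delta_t \le \delta_{t-1}(1-\tfrac{1}{\alpha M})$ does yield $T_M \ge W^*\bigl(1-(1-\tfrac{1}{\alpha M})^M\bigr) \ge W^*(1-e^{-1/\alpha}) \ge W^*/(e\alpha)$, and your handling of the monotone-depletion step is fine. This is essentially the classical Bar--Noy et al.\ analysis that the paper itself cites when proving \cref{theorem:weighted-dynamic-scheduling}.

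The paper, however, proves the theorem by a genuinely different route. Its algorithm assigns each job to a machine \emph{uniformly at random} up front and then runs the single-machine oracle independently on each machine. The analysis does not telescope over iterations; instead it fixes an optimal $M$-machine schedule, defines an auxiliary random-permutation process for building a feasible schedule under the random assignment, and shows that every job of $OPT$ survives with probability at least $(1-1/M)^{M-1}\ge 1/e$. The crux is a combinatorial comparison between ``containing'' and ``intersecting'' conflicts showing the worst case is when all $M-1$ other machines host a job that completely contains $J$. Your approach is shorter, deterministic, and even gives the slightly sharper factor $\frac{(\alpha M)^M}{(\alpha M)^M-(\alpha M-1)^M}$. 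What the paper's approach buys is algorithmic: because jobs are assigned to machines independently and in advance, an update to a single job affects only one machine's subproblem, so the reduction inherits the single-machine running time with \emph{no} dependence on $M$. Your sequential scheme makes $M$ dependent oracle calls and, in the dynamic setting the paper targets, would reintroduce the factor-$M$ slowdown that \cref{theorem:weighted-random} is specifically designed to avoid.
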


Note that the approximation guarantees we obtain in \cref{theorem:unweighted-dynamic-scheduling-multiple,theorem:weighted-dynamic-scheduling} are stronger than a direct application of \cref{theorem:random-unweighted,theorem:weighted-random} on \cref{theorem:unweighted-M=1,theorem:weighted-dynamic-M=1}. However, our reductions importantly give rise to significantly faster dynamic algorithms for scheduling on multiple machines, having no dependence on $M$. Concretely, \cref{theorem:random-unweighted,theorem:weighted-random} result in algorithms with the same time complexity as \cref{theorem:unweighted-M=1,theorem:weighted-dynamic-M=1} and only an increase in expected approximation factor of $(2-1/M)$ and $e$, respectively. The same running time is obtained because \cref{theorem:random-unweighted,theorem:weighted-random} assign jobs to machines in negligible time, and then each update or query just results in an update or query on the corresponding data structure for one machine.

\subsection{Overview}
Now we detail our techniques for extending interval scheduling methods for one machine, to scheduling for many machines. A key difficulty in extending such methods is that there is an \emph{inherent dependency} in the process of scheduling. Choosing to use (or to not use) a job on one machine directly affects the optimal schedule for the remaining machines. To overcome this, our work examines two approaches for scheduling. With the first approach, we maintain approximation guarantees almost the same to those for a single-machine setting at the expense of an $O(M)$ factor slowdown. With the second approach, we achieve the same time complexity as was achieved for a single machine at the expense of a slight multiplicative decrease in approximation guarantees.

\textbf{Partitioning over machines and time \emph{simultaneously}.} First, we explore partitioning over time and machines \emph{simultaneously}. At a high level, we do so by dynamically maintaining a partition over time and computing an approximately optimal solution for all machines together within each time range. However, as computing a solution for machines together is a process with dependencies, our algorithm incurs at least an $O(M)$ factor slowdown compared to analogous approaches for a single machine. 

\emph{Unweighted jobs.} For scheduling unweighted jobs on multiple machines, there is a well-known centralized greedy approach similar in style to the greedy for scheduling unweighted jobs on one machine. As this greedy is efficient to simulate, we can employ an algorithm and analysis similar to how we dynamically computed unweighted interval scheduling on one machine. The notable difference is that we might need to charge $M$ jobs containing a border against our solutions in adjacent regions (as opposed to just charging $1$). Accordingly, we maintain borders where the optimal solution inside each region is size $\Theta(\frac{M}{\eps})$. 

\emph{Weighted jobs.} Using similar approaches in the setting with weighted jobs faces to challenges we must overcome. First, the well-known approach for computing this problem in the centralized setting uses minimum-cost flow (as opposed to a greedy) which is not clear how to efficiently simulate dynamically. To handle this, within borders we instead compute the weighted maximum independent set $M$ times which will only lose a factor of $\frac{M^M}{M^M - (M-1)^M}$ (upper-bounded by approximately $1.58$) in approximation guarantee. To compute the weighted maximum independent set, we use a dynamic programming approach. Finally, we note that we might need to charge $M$ jobs of reward $w$ containing a border ($Mw$ total reward) against our solutions in adjacent regions. So, we maintain borders where the optimal solution inside each region has total reward $\Theta(\frac{Mw}{\eps})$.

\textbf{Partitioning over machines \emph{then} time.} Our second approach avoids the slowdown of the first, at the expense of a small multiplicative decrease in the approximation guarantee. To do so, we first partition jobs over machines and \emph{then} dynamically partition time to maintain solutions for each machine independently. In both of our results, we partition jobs among machines by assigning each job to a machine uniformly at random. Then, for each machine we simply maintain an approximately optimal schedule among jobs that were randomly assigned to it. This reduction immediately yields algorithms that are asymptotically just as fast as scheduling with only one machine. We now outline our techniques for showing this approach still maintains a strong approximation guarantee:

\emph{Unweighted jobs.} For scheduling on multiple machines, we note a symmetry among machines. If we can calculate the expected optimal solution of jobs assigned to a particular machine, then the expected optimal solution after all job assignments is simply this quantity multiplied by $M$.
To show a lower-bound for the expected optimal solution among jobs assigned to a particular machine, we recall that unweighted interval scheduling on one machine can be solved with a simple greedy method where we consider jobs in increasing order of their ending time and include the job if it does not intersect any previously included jobs. Interestingly, our method simulates this greedy on one machine by considering all jobs in an optimal solution for $M$ machines, where we lazily do not yet realize whether or not each job was assigned to this particular machine.
Then, as we run our greedy, we realize whether or not a job is assigned to this particular machine only when the greedy would choose to include this job. If we realize that this job was not assigned to this machine, then we continue the greedy method accordingly.
Otherwise, we continue the greedy as if we included this job, and we delete the at most $M-1$ jobs with later ending times that intersect this job (we obtain this $M-1$ upper-bound because we know the set of all jobs forms a valid solution on $M$ machines).
Whether or not a particular job is assigned to this particular machine is a Bernoulli random variable with parameter $\frac{1}{M}$, and we thus expect to see $M$ jobs that our greedy would select until we can actually use one on this machine. In total, our expected proportion of used jobs (among those in a particular optimal solution on $M$ machines) for this machine is at least $\frac{1}{M + (M-1)}$, so our global solution only loses a factor of $(2-1/M)$ in expectation by randomly assigning jobs to machines.

\emph{Weighted jobs.} The weighted setting presents unique challenges that the unweighted setting does not. For example, in our greedy-simulation approach for analyzing the reduction in the unweighted setting, one can show how long jobs that contain many other jobs are less likely to be included in the obtained solution for any machine. This is because, when we delete the $M-1$ jobs with later ending times than some particular job we chose to include in our solution, this will often delete the longer job that contains many jobs.
This is problematic in the weighted setting, as the longer job may provide extremely large reward. To handle this, we provide a different analysis where every job in some optimal solution among $M$ machines, has at least constant probability of being in our solution after randomly assigning jobs to machines.
To accomplish this, we introduce the following procedure. First, generate a uniformly random permutation and process all jobs (from the particular optimal solution on $M$ machines) in this order. When we process a job $J$, we include it on its assigned machine's schedule if (i) there are no jobs intersecting $J$ that are currently in $J$'s assigned machine's schedule, or (ii) all jobs intersecting $J$ that are currently in $J$'s assigned machine's schedule are completely contained within $J$. Note that if $J$ is selected because of the latter criteria, we delete all jobs scheduled in its assigned machine that are completely contained within $J$. With a detailed analysis, we show that no matter what the original optimal schedule over $M$ machines is, each job has probability at least $\frac{1}{e}$ of being included in the final schedule by using our procedure.
So, our global solution only loses a factor of $e$ in expectation by randomly assigning jobs to machines.

\subsection{Unweighted Interval Scheduling on Multiple Machines}

An efficient centralized/sequential algorithm to exactly calculate unweighted interval scheduling has structure very similar to the greedy algorithm for unweighted interval \MaxIS. We use that to show that modifications of our results for single-machine setting lead to results in the multiple-machine setup.

\theoremunweighteddynamicschedule*

For the setting of local unweighted interval scheduling, we show the following.
\theoremunweighteddynamicscheduling*
These theorems are proved in \cref{proof:theorem:unweighted-dynamic-scheduling-multiple,proof:theorem:unweighted-LCA-scheduling}.

\subsection{Weighted Interval Scheduling on Multiple Machines}
For the weighted interval scheduling problem, the well-known minimum-cost flow based algorithm requires $O(n^2 \log(n))$ time. It is not clear how to efficiently simulate this approach in the dynamic or local setting. Instead, we consider alternative approaches for partitioning jobs over machines.
When $M=1$ for scheduling, the optimal solution has a structure similar to that of \MaxIS. \cite{bar2001approximating} study a natural greedy approach for $M > 1$ which consists of $M$ times performing the following: in the $i$-th step take the (weighted) \MaxIS of the currently non-scheduled jobs; schedule these jobs on the machine $i$.
(To be precise, we note that \cite{bar2001approximating} study this algorithm in a more general variant of weighted interval scheduling where start/end times are flexible.)) Theorem 3.3 of \cite{bar2001approximating}  implies that using an $\alpha$-approximation for \MaxIS $M$ times, in the way as described above, gives a $\frac{ (\alpha M)^M} { (\alpha M)^M - (\alpha M - 1)^M}$-approximation (and thus a $\frac{\alpha M^M}{M^M - (M-1)^M}$ approximation) for weighted interval scheduling. Hence, a natural question to ask is whether this approximation can be retained even when using approximate algorithms and in settings other than centralized. We answer this question affirmatively by showing the following results, whose proof is deferred to \cref{proof:theorem:weighted-dynamic-scheduling}.

\theoremweighteddynamicscheduling*

This theorem details the result of the ``straightforward extension'' of \cref{theorem:unweighted-M=1} for the weighted case, if we assume bounded ratios between the job rewards. In particular, we assume all jobs have rewards within $[1,w]$. The scheduling algorithm guaranteed by the theorem above is at least a factor of $M$ slower than its \MaxIS counterparts. Moreover, the update time of the same algorithm is $\Omega(w)$, while the update time for dynamic weighted interval \MaxIS (see \cref{theorem:weighted-dynamic-M=1}) has no dependence on $w$. The main reason for such behavior of \MaxIS-like algorithms is that they partition time in such a way that each region contains a sparse subproblem, e.g., containing $O(M / \eps)$ jobs, that is easy to solve. However, such regions must have size $\Omega(w)$ in the weighted interval scheduling variant. To see that, consider a long job of reward $w$, with $w$ small non-intersecting jobs of reward $1$ inside it. The optimal scheduling for $M=2$ machines would include all such jobs. However, any partitioning of time that ensures there are $O(M/\eps)$ jobs within each partitioning (akin to the ideas we developed in earlier sections) would discard the long job (removing half the total reward). Thus, intuitively, any algorithm giving better than $2$ approximation would not be able to partition the time-axis as performed in earlier section, and hence all sparse subproblems would have size $\Omega(w)$.

To alleviate this shortcoming, we employ a new \emph{partitioning scheme over machines} to achieve scheduling algorithms that run in $o(M)$ and $o(w)$ time. Instead of a sequential process, we uniformly randomly assign each job to a machine. Then, a job is only allowed to be scheduled on the machine it was assigned to. With these constraints, the interval scheduling problem is equivalent to the \MaxIS problem for each machine given the intervals assigned to it. On the positive side, this results to a scheduling task that computationally can be solved as efficiently as \MaxIS. However, it is unclear what is the approximation loss of this scheduling scheme. Surprisingly, we show that our scheme incurs only the multiplicative factor of $e$ in the approximation loss.

Before we proceed to analyzing the approximation guarantee of this scheme, as a warm-up, we show that compared to \cref{theorem:unweighted-dynamic-scheduling-multiple,theorem:unweighted-LCA-scheduling} this approach yields an even more efficient method for computing unweighted interval scheduling on multiples machines. This efficiency comes at the expense of slightly worsening the approximation guarantee.

\theoremrandomunweighted*
Our proof of \cref{theorem:random-unweighted} is given in \cref{proof:theorem:random-unweighted}. Our main contribution is a black-box result for \emph{weighted} interval scheduling on \emph{multiple} machines, stated as follows.
\theoremweightedrandom*
\begin{proof}
The algorithm begins by immediately assigning each job to one of the machines uniformly at random. Then, we find an optimal solution on each machine with the jobs that were randomly assigned to it, where this subproblem is the \MaxIS problem. Accordingly, this randomized algorithm achieves the same runtime as the oracle for \MaxIS.

Our hope is to show that the union of the optimal solutions for each machine (once we have randomly assigned the jobs), is a high-quality approximation of the globally optimal solution where jobs are not randomly constrained to particular machines. Such a result follows more simply for the proof of \cref{theorem:random-unweighted} in \cref{proof:theorem:random-unweighted}, yet for the weighted case we use a more interesting approach. Instead of directly arguing about the optimal solutions of each \MaxIS problem, we develop a global strategy that respects the random machine constraints and guarantees that each job in $OPT$ has at least a constant probability of being in the final schedule. 

Fix uniformly at random a permutation of the jobs of $OPT$, and consider the jobs of $OPT$ in this order. When we consider a job, we also reveal the machine it is assigned to by $OPT$. Throughout this process, in parallel, we are building an \emph{alternative schedule} as follows. Suppose we are currently considering job $J$ and suppose it has been assigned to machine $P$ by $OPT$. If all the jobs we have scheduled on $P$ so far are either completely contained by $J$ or do not intersect $J$, then we include $J$ in our schedule (deleting all scheduled jobs in $P$ that are contained in $J$). Otherwise, we do not schedule $J$.

Now, we characterize when $J$ is in our schedule at the end of this process. If all jobs completely containing $J$ are assigned to another machine, and all jobs intersecting $J$ that appear earlier in the permutation are assigned to different machines (or are completely contained in $J$), then $J$ will be in our final schedule. As such, a lower bound for the probability $J$ is in our schedule, is the product of
\begin{enumerate}[(1)]
    \item the probability of all jobs containing $J$ being assigned to different machines, and
    \item the probability of all other jobs that intersect $J$ (ignoring jobs $J$ completely contains) and have earlier permutation indices being assigned to different machines.
\end{enumerate}
Suppose there are $C$ jobs that completely contain $J$. Then, no other jobs on those $C$ machines can intersect $J$ as they form a valid schedule. For the remaining $M-1-C$ machines, at most 2 jobs can intersect $J$ that neither completely contain $J$ nor are completely contained within $J$ (both jobs must contain an endpoint of $J$). Thus, the most pessimistic scenario is that there are $C$ machines in $OPT$ containing a job that completely contains $J$ and $M-1-C$ machines in $OPT$ containing two jobs that partially intersect $J$. The probability all $C$ jobs completely containing $J$ are assigned to different machines is $(1-\frac{1}{M})^C$. For the $2(M-1-C)$ jobs that partially overlap with $J$, we take a probability measure over all random permutations. Note that, as the permutation is chosen uniformly randomly, $J$ is equally likely to be at each position of the permutation considering only $J$ and the $2(M-1-C)$ jobs. Moreover, if $J$ is at position $i$, then the probability $J$ is in the final schedule is $(1-\frac{1}{M})^i$. Thus, the probability, all of the intersecting jobs with $J$ are either assigned to different machines or have later permutation positions is $\frac{\sum_{i=0}^{2(M-1-C)} (1-1/M)^i}{2(M-1-C)+1}$.
This gives us a lower-bound where we pessimistically classify machines in the original solution as \emph{containing} machines that have a job completely containing $J$, and \emph{intersecting} machines that have two jobs partially intersecting $J$. For simplicity, we will denote the lower-bound that $C_1$ containing machines do not violate $J$ as $\fcon(C_1) = (1-\frac{1}{M})^{C_1}$ and the lower-bound that $C_2$ intersecting machines do not violate $J$ as $\fint(C_2) = \frac{\sum_{i=0}^{2 C_2} (1-1/M)^i}{2 C_2 +1}$.

Combined, our lower bound that each job is in our schedule is 
$\fcon(C) \times \fint(M-1-C)$,
where $C$ can take integer values in range $0$ to $M-1$. 

\begin{claim}
The quantity $\fcon(C) \times \fint(M-1-C)$ is minimized when $C=M-1$ (i.e., all other machines have one job completely containing this job).
\end{claim}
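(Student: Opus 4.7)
The plan is to reduce the claim to a clean inequality about averages of powers of $q := 1 - \tfrac{1}{M}$ via a substitution, and then to invoke AM-GM. First I would set $k := M-1-C$, so $k$ ranges over $\{0, 1, \ldots, M-1\}$ as $C$ does, and rewrite
\[
\fcon(C) \cdot \fint(M-1-C) \;=\; q^{M-1-k} \cdot \frac{1}{2k+1}\sum_{i=0}^{2k} q^{i}.
\]
The case $C = M-1$ corresponds to $k=0$, where $\fint(0) = 1$ and so the expression equals $q^{M-1}$. Dividing through by $q^{M-1}$, the claim becomes
\[
\frac{1}{2k+1}\sum_{i=0}^{2k} q^{i} \;\geq\; q^{k} \qquad \text{for every } k \in \{0, 1, \ldots, M-1\}.
\]

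Next, I would prove this inequality in one step by applying the AM-GM inequality to the $2k+1$ positive reals $q^{0}, q^{1}, \ldots, q^{2k}$. Their geometric mean equals
\[
\Bigl(q^{0+1+\cdots+2k}\Bigr)^{1/(2k+1)} \;=\; q^{k(2k+1)/(2k+1)} \;=\; q^{k},
\]
since the exponents form an arithmetic progression symmetric about $k$ and therefore average to $k$. AM-GM then yields that the arithmetic mean $\tfrac{1}{2k+1}\sum_{i=0}^{2k} q^{i}$ is at least $q^{k}$, which is precisely what is needed.

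The main obstacle is not calculation but recognizing the right parametrization: the symmetry of the exponents $0, 1, \ldots, 2k$ about $k$ is exactly what makes the geometric mean evaluate to $q^{k}$ and makes the bound tight at $k = 0$ (equivalently $C = M-1$), as well as in the degenerate limit $M \to \infty$ where $q \to 1$. Once the reduction above is in place, no case analysis on $C$ is needed, and the minimizer of the product is immediately identified as $C = M-1$.
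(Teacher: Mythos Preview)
Your proposal is correct. Both you and the paper reduce the claim to the inequality
\[
\frac{1}{2k+1}\sum_{i=0}^{2k} q^{i} \;\geq\; q^{k}\qquad\text{with }q=1-\tfrac{1}{M},\ k=M-1-C,
\]
which the paper phrases as $\fint(C') \ge \fcon(C')$ after factoring out $\fcon(C)$. The only difference is in how this final inequality is established: the paper pairs the terms $q^{k-i}$ and $q^{k+i}$ symmetrically about $q^{k}$ and rewrites each pair minus $2q^{k}$ as $q^{k-i}(q^{i}-1)^{2}\ge 0$, whereas you invoke AM--GM directly on $q^{0},q^{1},\ldots,q^{2k}$, exploiting that their exponents average to $k$. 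Your route is a bit slicker and avoids the algebraic bookkeeping; the paper's route is more hands-on but makes the nonnegativity witnessed by an explicit sum of squares. Substantively, the two arguments are equivalent (the paper's pairing is exactly the two-term AM--GM applied to each symmetric pair), so there is no meaningful gap between them.
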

\begin{proof}
To show this, we show that $\fcon(C) \times \fint(M-1-C) \ge \fcon(M-1) \times \fint(0)$. By factoring out $\fcon(C)$, this is equivalent to showing $\fint(M-1-C) \ge \fcon(M-1-C)$. For simplicity, we set $C'=M-1-C$ and show $\fint(C') - \fcon(C') \ge 0$ for all integer $C'$ from $0$ to $M-1$. Additionally, we define $x=(1-\frac{1}{M})$. As $M>1$, we note that $x \in [\frac{1}{2},1)$. Accordingly:

\begin{align*}
    & \fint(C') - \fcon(C') \\
    = & \frac{\sum_{i=0}^{2 C'} (1-1/M)^i}{2 C' +1} - (1-1/M)^{C'} \\
    = & \frac{\sum_{i=0}^{2 C'} x^i}{2 C' +1} - x^{C'} \\
    = & \frac{\sum_{i=0}^{C' - 1} (x^{C'-i} + x^{C'+i} - 2 x^{C'})}{2 C' +1} \\
    = & \frac{\sum_{i=0}^{C' - 1} (x^{C' - i} \times (1 + x^{2i} - 2 x^{i}))}{2 C' +1} \\
    = & \frac{\sum_{i=0}^{C' - 1} (x^{C' - i} \times (x^{i} - 1)^2)}{2 C' +1} \\
    \ge & 0
\end{align*}

The last step is obtained because each summand is non-negative. This shows that $\fint(C') \ge \fcon(C')$ for all valid integer $C'$ and thus $\fcon(C) \times \fint(M-1-C)$ is minimized when $C=M-1$.

\end{proof}

Thus our lower bound of a job being in the resulting solution is always at least $\fcon(M-1) = (1-\frac{1}{M})^{M-1} \ge \frac{1}{e}$.

With this, we show that our generative process results in a schedule on average that has weight at least $\frac{|OPT|}{e}$. This implies a $\alpha$-approximate \MaxIS algorithm yields an $e \alpha$-approximation
\end{proof}

As such, we explore the relationship between partitioning over time and machines to solve the interval scheduling problem. To achieve a $(1+\eps)$-approximations for unweighted and $(\frac{e}{e-1}+\eps)$-approximations for weighted scheduling, we simultaneously partition over time and machines at the expense of slower algorithms. However, if we tolerate $(2-1/M+ \eps)$-approximations for unweighted scheduling or $(e + \eps)$-approximations for weighted scheduling,
 we randomly partition over machines \emph{then} time to achieve comparable efficiency to the \MaxIS problem.

\subsection{Proof of \cref{theorem:unweighted-dynamic-scheduling-multiple}}
\label{proof:theorem:unweighted-dynamic-scheduling-multiple}
We maintain a modified version of \cref{invariant:unweighted-MIS}, where the algorithm maintains a set of borders such that an optimal solution for within each two consecutive borders is of size between $\nicefrac{M}{\eps}$ and $\nicefrac{2M}{\eps}+M$ jobs. Direct modification of \cref{lemma:invariant-value} shows that this maintains a $(1+\eps)$-approximation (the size of solutions between consecutive borders is a factor of $M$ larger than in \cref{theorem:unweighted-M=1} because $M$ jobs may intersect any border).

As a starting point, we consider the classic greedy algorithm for unweighted interval scheduling on multiple machines \cite{tardos2005algorithm}, that we recall next:
\begin{itemize}
    \item Among jobs that start after the earliest time any machine is free, find the one with the earliest ending time.
    \item Then, among machines that can take the job, schedule the job to the machine that becomes free at the latest time.
\end{itemize}
This solution can easily be simulated in $O(|OPT|\log(n))$ time by a method similar to \ProbeBased. In our dynamic version, we handle insertions and deletions analogously to as in \cref{theorem:unweighted-M=1}.
More specifically, when a job is added inside a region, we recompute an answer for the region in $O(\frac{M \log(n)}{\eps})$ time. If the solution becomes too large, we add a border after the $\frac{M}{\eps}$-th ending point of a job in the solution (this will invalidate at most $M$ jobs, leaving the left half with a $[\nicefrac{M}{\eps},\nicefrac{M}{\eps}+M]$ size solution and the right half with a $[\nicefrac{M}{\eps}+1, \nicefrac{M}{\eps}+M+1]$ size solution). If deleting a job makes the recomputed solution too small, we combine with an adjacent region (and if the region is now too large, we add a new border to split the region like above).

With essentially the same approach as \cref{theorem:unweighted-M=1}, we obtain $O(\frac{M \log(n)}{\eps})$ updates and $O(\log(n))$ queries in worst-case time.

\subsection{Proof of \cref{theorem:unweighted-LCA-scheduling}}
\label{proof:theorem:unweighted-LCA-scheduling}
First, we modify the \emph{successor oracle} for this result. Consider an instance with two machines and two jobs corresponding to time ranges $[1,4]$ and $[2,3]$. No successor oracle probe will ever return the first job because a successor oracle prove will never return a job completely contained in another job. Thus the original successor oracle is not strong enough to determine any particular constant-factor approximation to scheduling even with infinite probes. To remedy this, we modify the successor oracle such that it ignores a set of jobs given with the probe (it is not a concern that this set will be very large, as any probe-efficient algorithm will not know many jobs to specify for the set) which enables us to simulate a subroutine analogous to \ProbeBased.

With this new successor oracle, our algorithm and analysis is almost identical to \cref{alg:alg-local-approx} proven in \cref{theorem:unweighted-M=1-local}. Our key difference is that we now set the thresholds for drawing borders to when $OPT(Q_L)$ and $OPT(Q_R)$ are larger than $\nicefrac{M}{\eps}$ instead of $\nicefrac{1}{\eps}$. With this, we are maintaining the modified version of \cref{invariant:unweighted-MIS} from \cref{theorem:unweighted-dynamic-scheduling-multiple} that is shown to result in a $(1+\eps)$-approximation. More concretely, to simulate this process we define a function $f(Q,first\_empties)$ analogous to that of $f(Q,first\_empty)$ from \cref{alg:alg-local-approx}. The primary differences are the aforementioned factor of $M$ increase of the threshold for drawing a border, our simulation of \ProbeBased is thus a factor of $M$ slower, we have $M$ possible $I_{mid}$, and we keep track of and return the times for all $M$ machines (hence $first\_empties$ instead of $first\_empty$). 

With essentially the same approach as \cref{theorem:unweighted-M=1-local}, we obtain a local computation algorithm for $(1+\eps)$-approximate unweighted interval scheduling on $M$ machines using $O(\frac{M\log(N)}{\eps})$.

\subsection{Proof of \cref{theorem:weighted-dynamic-scheduling}}
\label{proof:theorem:weighted-dynamic-scheduling}
We outline an alternative approach to a dynamic algorithm for weighted independent set of intervals based on \cref{section:dynamic-unit}. While a stronger result is presented in \cref{section:dynamic-weighted}, that approach does not easily lend itself well to repeatedly calculating \MaxIS. We instead build off the simpler result from \cref{section:dynamic-unit}. 

We maintain a modified version of \cref{invariant:unweighted-MIS}, where the reward of the solution we calculate within consecutive borders is in range $[\nicefrac{Mw}{\eps},\nicefrac{8Mw}{\eps}+2Mw]$. We want to repeatedly calculate a $(1+\eps)$-approximation of \MaxIS within regions and use a similar but different approach to \cref{lemma:sparse-sol-approx}. In contrast to the setting of \cref{lemma:sparse-sol-approx}, our invariant bounds the total weight within consecutive borders as opposed to the number of jobs in the optimal solution within consecutive borders. Consider a dynamic programming problem where our state is the total weight of jobs and the corresponding answer is the shortest prefix that can obtain jobs of this total weight. It is simpler for us if all weights are integers and there are not many weights. We round all weights down to powers of $(1+\eps)$, which will not affect our approximation by more than a factor of $(1+\eps)$. Then, we scale all weights by $\nicefrac{1}{\eps}$. Each job now has weight at least $\nicefrac{1}{\eps}$, so rounding down to the nearest integer is at most $\eps$ fraction of the weight and the remaining optimal solutions is still an $(1+O(\eps))$-approximation. Now, we optimally calculate the \MaxIS within each region given the rounding. Let $D$ be the number of distinct weights. The dynamic programming problem we mentioned can be solved in $O(\log(n) \cdot |OPT| \cdot D)$ time as there are at most $|OPT|$ states we can reach, there are $D$ possible transitions (trying the job with some given weight that starts after the current prefix and ends earliest), and each transition uses a $O(\log(n))$ query to a balanced binary search tree. Due to our invariant and scaling weights, the sum of $|OPT|$ as we calculate \MaxIS $M$ times is at most $O(\frac{Mw}{\eps^2})$. By rounding the weights down to powers of $(1+\eps)$, $D = O(\frac{\log(w)}{\eps})$. Thus, we recompute the answer for a region in $O(\frac{Mw \log(n) \log(w)}{\eps^3})$. 

Now, we handle insertions and deletions similarly to \cref{theorem:unweighted-M=1}. This maintains a $((\frac{M^M}{M^M-(M-1)^M})(1 + \eps))$-approximation, which is also a $4$-approximation.
This means the solution the algorithm generates for any region is at least $\frac{1}{4}$ the optimal solution for that region. When we insert/delete intervals in a region, we recompute the answer for the region. If the total weight of the region becomes too small, we repeatedly combine with adjacent regions until it is not too small. At most four combinations must occur, as then the union of the solutions we had found is at last a factor of $4$ larger than the minimum solution size for a region, so our 4-approximation must find it. If we add a job and the region solution becomes too large, we note that the true solution size is at most $4(\frac{8Mw}{\eps}+2Mw)$. Whenever a region's solution is too large, we split at the smallest prefix that contains intervals of total weight $\frac{4Mw}{\eps}$. The left region will have a solution of size $\ge \frac{4Mw}{\eps}$ and the right region will have a solution of size $\ge (\frac{8Mw}{\eps}+2Mw)-(\frac{4Mw}{\eps}-2Mw) = \frac{4Mw}{\eps}$. Thus, our 4-approximation will find a solution of size at least $\frac{Mw}{\eps}$ for both and we will never classify either as too small. As we separate at least $\frac{4Mw}{\eps}$ of weight with every split, only $O(1)$ splits will occur. With this, we achieve an algorithm with $O(\frac{Mw \log(n) \log(w)}{\eps^3})$ update and $O(\log(n))$ query time worst-case.

\subsection{Proof of \cref{theorem:random-unweighted}}
\label{proof:theorem:random-unweighted}
The algorithm begins by assigning each job to one of the machines uniformly at random. Then, finding an optimal solution on each machine is the \MaxIS problem.
Our proof technique is to simultaneously simulate the classic greedy \MaxIS algorithm and the realization of each job's assignment for a single machine. We show that the expected \MaxIS of jobs assigned to a machine is at least $\frac{|OPT|}{2M-1}$. 

Consider the set of jobs in an optimal solution $OPT$, and ignore all others. In the classic greedy \MaxIS algorithm, we consider jobs in an increasing order of their ending time and use the job if it does not intersect any previously selected jobs. At a high-level, we will simulate this algorithm on a particular machine, realizing whether or not a job was assigned to this machine only as we need to. In particular, assume we have a set of jobs $OPT$ that are a valid scheduling on $M$ machines and all start after the ending points of any jobs we have previously selected. We consider this set in increasing order of ending time. When we consider a job $I$, we realize its assignment. If $I$ is not assigned to the current machine (probability $1-1/M$), we cannot use it. If $I$ is assigned to the current machine (probability $1/M$), we use the job and delete all jobs in $OPT$ that intersect it. Note that all other jobs in $OPT$ have an ending time that is at least the ending time of $I$ (because we have not yet considered them). Thus, to intersect $I$, they must start before $I$ ends. This implies that all the jobs we delete must contain the ending point of $I$. Since $OPT$ is a valid schedule for $M$ machines (and no schedule on $M$ machines can have $>M$ jobs containing a point), we only need to delete at most $M-1$ other jobs. In either situation, the invariant on $OPT$ is maintained afterwards.

Thus, we write the following recurrence $f(X)$ to denote a lower bound on the expected size of the \MaxIS given $|OPT|=X$:
\[
    f(X) \ge (1 - 1/M) f(X-1) + \nicefrac{1}{M} (f(X-M) + 1).
\]
For simplicity of notation, we assume that $f(X) = 0$ for $X \le 0$.

\begin{lemma}
It holds that $f(X) \ge \frac{X}{2M-1}$.
\end{lemma}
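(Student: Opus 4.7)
The plan is to proceed by strong induction on $X$, using the recurrence as the inductive step and the boundary condition $f(Y)=0$ for $Y \le 0$ to seed the induction. The key observation that makes everything go through cleanly is that the recurrence is designed so that substituting the conjectured linear lower bound $\frac{X}{2M-1}$ gives back exactly $\frac{X}{2M-1}$, so there is no slack to lose.

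First I would handle the base case $X=0$ (and more generally $X \le 0$): the claim reduces to $0 \ge \frac{X}{2M-1}$, which holds since $X \le 0$ and $2M-1>0$. For the inductive step, fix $X \ge 1$ and assume $f(Y) \ge \frac{Y}{2M-1}$ for every $Y < X$. When $X - M \le 0$, the recurrence uses $f(X-M)=0$, and since $\frac{X-M}{2M-1} \le 0 \le f(X-M)$, the bound $f(X-M) \ge \frac{X-M}{2M-1}$ holds trivially; so in either case we may substitute $\frac{X-1}{2M-1}$ and $\frac{X-M}{2M-1}$ for $f(X-1)$ and $f(X-M)$, respectively, in the recurrence.

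The core of the argument is the algebraic verification:
\[
f(X) \ge \left(1 - \tfrac{1}{M}\right)\frac{X-1}{2M-1} + \frac{1}{M}\left(\frac{X-M}{2M-1} + 1\right).
\]
Combining the first two terms over the common denominator $M(2M-1)$ gives
\[
\frac{(M-1)(X-1) + (X-M)}{M(2M-1)} + \frac{1}{M} = \frac{MX - 2M + 1}{M(2M-1)} + \frac{2M-1}{M(2M-1)} = \frac{MX}{M(2M-1)} = \frac{X}{2M-1},
\]
which is exactly the bound we want. This completes the induction.

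The main (minor) obstacle is just the bookkeeping at the boundary $X \le M$, where $f(X-M)$ is zero rather than a positive quantity; the inductive hypothesis phrased as $f(Y) \ge \frac{Y}{2M-1}$ is vacuous or trivially satisfied there, so no separate base cases beyond $X \le 0$ are needed. Everything else is the one-line algebraic identity above.
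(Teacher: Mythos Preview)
Your proof is correct and follows essentially the same inductive approach as the paper. The paper splits into the cases $X \le M$ and $X > M$ to handle the boundary where $f(X-M)=0$, whereas you unify these by observing that $f(X-M) \ge \frac{X-M}{2M-1}$ holds trivially when $X-M \le 0$; this is a minor streamlining but not a substantively different argument.
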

\begin{proof}
First, we show the claim when $X \le M$.
We have the following chain of inequalities:
\begin{align*}
    & \rb{1-\frac{1}{M}}f(X-1) + \frac{1}{M} (f(X-M) + 1) \\
    \ge & \rb{1 - \frac{1}{M}} \frac{X-1}{2M-1} + \frac{1}{M} (0 + 1) \\
    = & \frac{X-1}{2M-1} - \frac{X-1}{(2M-1)M} + \frac{1}{M}
    \\
    = & \frac{M(X-1) - (X-1) + (2M-1)}{M (2M-1)} \\
    = & \frac{Mx + M - x}{M (2M-1)} \\
    \ge & \frac{Mx}{M (2M-1)} \\
    = & \frac{X}{2M-1}.
\end{align*}

Next, we show the claim when $X > M$:
\begin{align*}
    & \rb{1-\frac{1}{M}}f(X-1) + \frac{1}{M} (f(X-M) + 1) \\
    \ge & \rb{1 - \frac{1}{M}}\frac{X-1}{2M-1} + \frac{1}{M}\rb{\frac{X-M}{2M-1}+1} \\
    = & \frac{Mx}{M (2M-1)} \\
    = & \frac{X}{2M-1}.
\end{align*}
\end{proof}

Thus, we have that $f(|OPT|) \ge \frac{|OPT|}{2M-1}$. As all machines are identical, the expected value of the schedule is the sum of their expected \MaxIS. Thus, the expected optimal schedule has size $\frac{M |OPT|}{2M-1}$. Using an $\alpha$-approximation for each of these \MaxIS subproblems yields a $(2-1/M)\alpha$-approximation, as advertised.

Note that this bound is tight as $n$ approaches infinity. Consider an instance with $n$ jobs, where job $i$ starts at time $i$ and ends at time $i+M$. If we simulate the classic greedy algorithm for \MaxIS on a machine, it will see $M$ jobs in expectation until it sees one that is assigned to it (expectation of a Bernoulli random variable). To use this interval, the $M-1$ jobs after it cannot be used (they all intersect). Thus, for every job in the solution, in expectation the machine needed to throw away $2M-2$ other jobs and thus as $n$ approaches infinity the expected schedule size approaches $\frac{M}{2M-1}$.

\section*{Acknowledgements}
We thank Benjamin Qi (MIT) for helpful discussions. S.~Compton was supported in part by the National Defense Science \& Engineering Graduate (NDSEG) Fellowship Program. S.~Mitrovi\' c was supported by the Swiss NSF grant No.~P400P2\_191122/1, NSF award CCF-1733808, and FinTech@CSAIL. R.~Rubinfeld was supported by the NSF TRIPODS program (awards CCF-1740751 and
DMS-2022448), NSF award CCF-2006664, and FinTech@CSAIL.
\bibliography{main}

\end{document}